\documentclass[11pt,final]{article}
\usepackage{mathrsfs}
\usepackage{mathtools}%
\mathtoolsset{showonlyrefs}

\usepackage{amsthm}
\usepackage{amsxtra}%
\usepackage{amsfonts}%
\usepackage{amssymb}%
\usepackage[a4paper, margin=2.5cm]{geometry}
\usepackage{orcidlink}

\usepackage{mleftright}
\mleftright

\usepackage{doi}

\providecommand{\etalchar}[1]{$^{#1}$}

\usepackage{color}
\usepackage{graphicx}
\usepackage{subfig}
\usepackage{array}
\usepackage{booktabs}
\usepackage{bbm}
\usepackage{enumitem}
\usepackage{dsfont}
\usepackage{authblk}
\usepackage{todonotes}
\usepackage{svg}
\usepackage[margin=1cm]{caption}

\usepackage{tikz}
\usetikzlibrary {graphs,graphs.standard}

\usepackage{zref-clever}
\zcsetup{nameinlink, cap, noabbrev}

\usepackage{hyperref}
\hypersetup{colorlinks,breaklinks,
	linkcolor=blue,urlcolor=blue,
	anchorcolor=blue,citecolor=blue}

\newtheorem{theorem}{Theorem}
\newtheorem{lemma}[theorem]{Lemma}
\AddToHook{env/lemma/begin}{%
	\zcsetup{countertype={theorem=lemma}}}
\zcRefTypeSetup{lemma}{Name-sg=Lemma}

\newtheorem{corollary}[theorem]{Corollary}

\newtheorem{proposition}[theorem]{Proposition}
\AddToHook{env/proposition/begin}{%
	\zcsetup{countertype={theorem=proposition}}}
\zcRefTypeSetup{proposition}{Name-sg=Proposition}

\zcRefTypeSetup{item}{Name-sg=Item,Name-pl=Items}

\newtheorem{fact}[theorem]{Fact}
\AddToHook{env/fact/begin}{%
	\zcsetup{countertype={theorem=fact}}}
\zcRefTypeSetup{fact}{Name-sg=Fact}

\theoremstyle{definition}
\newtheorem{remark}[theorem]{Remark}
\AddToHook{env/remark/begin}{%
	\zcsetup{countertype={theorem=remark}}}
\zcRefTypeSetup{remark}{Name-sg=Remark}
\newtheorem{definition}[theorem]{Definition}
\AddToHook{env/definition/begin}{%
	\zcsetup{countertype={theorem=definition}}}
\zcRefTypeSetup{definition}{Name-sg=Definition}

\newtheorem{assumption}[theorem]{Assumption}
\AddToHook{env/assumption/begin}{%
	\zcsetup{countertype={theorem=assumption}}}
\zcRefTypeSetup{assumption}{Name-sg=Assumption,Name-pl=Assumptions}

\numberwithin{equation}{section}
\numberwithin{theorem}{section}

\DeclareMathOperator{\supp}{supp}
\let\deg\relax
\DeclareMathOperator{\deg}{deg}

\DeclareMathOperator{\diam}{diam}
\DeclareMathOperator{\dist}{dist}

\DeclareMathOperator{\per}{per}
\DeclareMathOperator{\Id}{Id}

\DeclareMathOperator{\irr}{irr}

\DeclareMathOperator{\Bin}{Bin}

\newcommand{\sprod}[2]{\langle #1\rangle_{#2}}

\newcommand{\set}[1]{\left\{#1\right\}}
\newcommand{\norm}[1]{\left\|#1\right\|}
\newcommand{\abs}[1]{\left\lvert #1 \right\rvert}
\newcommand{\expa}[1]{\exp\left( #1 \right)}

\newcommand\restr[2]{{% we make the whole thing an ordinary symbol
		\left.\kern-\nulldelimiterspace % automatically resize the bar with \right
		#1 % the function
		\vphantom{\big|} % pretend it's a little taller at normal size
		\right|_{#2} % this is the delimiter
}}

\renewcommand{\bar}[1]{{\overline{#1}}}
\newcommand{\eps}{\varepsilon}
\renewcommand{\d}{\,\mathrm{d}}
\newcommand{\dx}{\d x}

\newcommand{\grad}{\nabla}
\newcommand{\gradper}{\grad^{\per}}

\newcommand{\G}{\mathcal{G}}
\newcommand{\J}{\mathcal{J}}
\renewcommand{\P}{{\mathbbm P}}
\newcommand{\E}{{\mathbbm E}}
\newcommand{\V}{{\mathcal V}}

\newcommand{\R}{\mathbbm{R}}
\newcommand{\N}{\mathbbm{N}}
\newcommand{\Q}{\mathbbm{Q}}
\newcommand{\Z}{\mathbbm{Z}}
\newcommand{\I}{\mathbbm{1}}
\renewcommand{\S}{\mathbbm{S}}

\renewcommand{\tilde}{\widetilde}
\newcommand{\phntm}{\phantom{{}={}}}

\let\emptyset\varnothing
\title{Asymptotic long-range order for the $XY$-model on random geometric  graphs}

\newcommand*\samethanks[1][\value{footnote}]{\footnotemark[#1]}
\author{
	Margherita Disertori\,\orcidlink{0009-0006-9854-3127}\thanks{Hausdorff Center for Mathematics and Institute for Applied Mathematics, University of Bonn.
	\\\mbox{}\hspace{0.59cm}\texttt{mdiserto@uni-bonn.de}, \texttt{mihailescu@iam.uni-bonn.de}} 
	\ and 
	Max Mihailescu\,\orcidlink{0009-0002-7382-2390}\samethanks[1]}

\renewcommand{\V}{{\mathcal{V}}}
\renewcommand{\E}{{\mathcal{E}}}

\begin{document}
	\maketitle
	
\begin{abstract}
	We study the classical $XY$-model on \emph{random geometric graphs} $\G_{n, \eps}$, which are obtained by sampling $n \in \N$ independent points in a finite domain $\Omega \subset \R^d$, $d \geq 2$, and connecting two points by and edge if their distance is of order $\eps > 0$. We refer to $\G_{n, \eps}$ as the \emph{random environment}. Letting $\eps \to 0$ as $n \to \infty$ at a sufficiently slow rate, these graphs capture the geometry of $\Omega$. 
	
	Denoting the inverse temperature  by $\beta$, we show that in the limit $\beta \to \infty$ at a rate depending on $n$ and $\eps$, the $XY$-model on $\G_{n, \eps}$ exhibits long range order in the sense that we prove a lower bound away from zero on the two-point function. Our result is \emph{quenched} in the random environment: long-range order holds with large probability, converging to one as $n \to \infty$. To prove the statement, we show that with high probability the environment is sufficiently regular to apply a convexity argument and the Brascamp--Lieb inequality. 
	
\end{abstract}

%%%%%%%%%%%%%%%%%%%%%%%%%%%%%%%%%%%%%%%%%%%%%%%%%%%%%%
\section{Introduction}
%%%%%%%%%%%%%%%%%%%%%%%%%%%%%%%%%%%%%%%%%%%%%%%%%%%%%

The $O(N)$-models are a family of classical spin model for ferromagnetic systems with \emph{continuous rotational symmetry}, where the spins take values in the $N{-}1$-dimensional unit sphere. We focus on the case $N=2$, also known as the \emph{$XY$}- or \emph{plane rotator model}.
For a configuration $S \colon \Z^d \to \S^{1}$, its formal Hamiltonian is given by
\begin{equation}
	\mathscr H(S) 
	:= - \beta \sum_{\set{i,j} \in \E} S_i \cdot S_j
	= - \beta \sum_{\set{i,j} \in \E} \cos(\theta_i - \theta_j),
\end{equation}
where we parametrize $S_i = (\cos\theta_i, \sin \theta_i)$ by its angle $\theta_i \in [-\pi, \pi)$, and where $\E$ are the edges of the underlying graph.
In their seminal work, Fröhlich--Simon--Spencer \cite{Frohlich:InfraredBoundsPhase1976} proved that on the $d$-dimensional lattice these models exhibit \emph{long-range order} at sufficiently low temperatures in three and more dimension, whereas the Mermin--Wagner theorem \cite{Mermin:AbsenceFerromagnetismAntiferromagnetism1966} (cf. also \cite{Pfister:SymmetryGibbsStates1981}, \cite[Chapter 9]{Friedli:StatisticalMechanicsLattice2017}) precludes continuous symmetry breaking in two-dimensional lattices. However, in two dimensions the $XY$-model undergoes the BKT phase transition, which was proved in the celebrated work of Fröhlich--Spencer \cite{Frohlich:KosterlitzThoulessTransitionTwodimensional1981}.
For a review on the $O(N)$-models we refer the reader to notes by Peled--Spinka \cite{Peled:LecturesSpinLoop2019}. 

In this article we consider the $XY$-model on a certain family of random graphs, so called \emph{random geometric graphs}, which are formally introduced in \zcref{sec:main_results}. Heuristically, one randomly samples $n$ independent points on a bounded domain $\Omega \subset \R^d$. These random points form the vertices of the graph, while the edges consist of all pairs of points which are at distance at most $\eps > 0$. As $n \to \infty$, one lets $\eps \to 0$ (at some minimal rate to ensure that the graph is connected), in order to better resolve the geometry of $\Omega$. We show that, annealed in the environment (the random graph), the $XY$-model on these graphs exhibits long-range order in all dimensions $d \geq 2$, provided one considers the zero temperature limit $\beta \to \infty$ at a sufficiently fast rate.

These random geometric graphs arise for example in semi-supervised learning approaches to machine learning, for a review see Calder--Drenska \cite{Calder:ConsistencySemisupervisedLearning2024}.
Many results on the limit $n \to \infty$ have been obtained in the literature. For example, Hein--Audibert--von Luxburg \cite{JMLR:v8:hein07a} obtained pointwise limits of the graph Laplacian to the weighted Laplace--Beltrami operator of $\Omega$, while Calder--García Trillos \cite{Calder:ImprovedSpectralConvergence2022} proved convergence of eigenvalues and eigenvectors.
Further, Bungert et al. \cite{Bungert:ConvergenceRatesPoisson2024} show the convergence of solutions to the graph Poisson equation to solutions to the (weighted) Poisson equation in $\Omega$ with Neumann boundary conditions.

In the recent years there has been increasing interest in the $XY$-model on random graphs and we give a short overview here. Dario--Garban \cite{Dario:PhaseTransitionsXY2025} consider the $XY$-model on the infinite cluster of Bernoulli percolation for $p>p_c$ and on Poisson--Voronoi graphs. They show that the disorder does not affect the phase transitions too much, in the sense that the BKT transition persist in two dimensions, whereas continuous symmetry breaking still occurs in three and more dimensions. Dario--van Engelenburg--Garban \cite{Dario:ImpactDisorderNonconvex2026} extend these results also to the closely related Villain model. 
A further example stems from the classical Heisenberg model (another name for the $O(3)$-model), which, conditioned on its third spin coordinate, yields an $XY$-model in random conductances (i.e. on a random graph). Patrascioiu--Seiler \cite{Patrascioiu:PercolationExistenceSoft2002} argue for the existence of a massless phase in the two-dimensional $O(3)$-model by examining the percolation properties of the resulting random $XY$-system, thereby arguing against a widely believed conjecture, which states that the tow-dimensional $O(3)$-model should be massive at all temperatures.
Aru--Garban--Sepúlveda \cite{Aru:Percolation2DClassical2025} make part of their argument rigorous, but also provide a counterexample to a different part of their argument, thereby shedding new light on the conjecture.

Our results concern graphs with a spacing $\eps \to 0$, so let us briefly remark on some results on scaling limits for the $XY$-model. On the lattice $\eps\Z^2$, Alicandro--Cicalese \cite{Alicandro:VariationalAnalysisAsymptotics2009a} show with variational methods how the appearance of vortex-like singularities can be described by a proper scaling of the energy when $\eps \to 0$. Effectively, this is a zero temperature description of configurations which have an energy cost comparable to the ground state. Moreover, Newman--Wu \cite{Newman:GaussianFluctuationsClassical2018} consider a zero temperature limit $\beta \to \infty$ as $\eps \to 0$ at a sufficiently fast rate, and prove that the fluctuation field of the lattice $XY$-model converges to standard Gaussian white noise. Our results are in a similar spirit, as we show that if $\beta \to \infty$ at a sufficiently fast rate, the $XY$-model on random geometric graphs is ordered as the system size $n \to \infty$. It is an open question weather the random graph $XY$-measures converge to some limiting measure on $\Omega$.

An additional motivation of our work is to further the understanding of the transition from disorder to long-range order. The original proof of of symmetry breaking by Fröhlich--Simon--Spencer is based on reflection positivity and the infrared bound, which are very powerful tools but have the shortcoming of requiring a high degree of symmetry of the underlying graph. 
Subsequently, there has been considerable effort in providing alternatives.
Fröhlich--Spencer \cite{Frohlich:MasslessPhasesSymmetry1982a} provide a proof of symmetry breaking of abelian spin systems in three and more dimensional lattices, which includes the $XY$-model. More recently Garban--Spencer \cite{Garban:ContinuousSymmetryBreaking2022a} gave yet a different proof based on a comparison with a certain $XY$-model with (Nishimori) disorder. Further, Balaban \cite{Balaban:LowTemperatureExpansion1995,Balaban:LowTemperatureExpansion1996} developed an approach to the $O(N)$-models based on the renormalization group. 
The tools we develop in the present article are part of these efforts to provide alternatives to reflection positivity. 

\subsection{Setting and main results}
\label{sec:main_results}

Consider a triple $G = (V, E, J)$, where $V$ is a finite set, $E \subset \set{\set{i,j} \subset V}$ and $J \colon V \times V \to [0, \infty)$ is a symmetric function. We will always assume that
\begin{equation}\label{eq:assumption_edges_weights}
		J(i,j) > 0 \text{ if and only if } \set{i,j} \in E.
\end{equation}
	We call the triple $G =(V, E, J)$ a \emph{(weighted) graph}. If a statement does not depend on the edge weights $J$ we will simply write $G=(V,E)$. We assume throughout this paper that all graphs are finite and simple, i.e., no vertex is connected to itself and there is at most a single edge between vertices.
	
	We define the \emph{$XY$-model} on $G$ by the following probability measure:
	\begin{equation}\label{eq:def_$XY$_on_G}
		\d \mu_{G;\beta}(\theta) := Z_{G;\beta}^{-1} \expa{\beta \sum_{\set{i,j} \in E} J(i,j) \cos(\theta_i - \theta_j)} \d\theta_{V},
	\end{equation}
	where $\d \theta_V$ is the Lebesgue measure on $[-\pi, \pi)^{V}$ and $Z_{G;\beta}$ is a normalization constant called the \emph{partition function}. In this context, $J(i,j)$ are the \emph{coupling constants}.
	We write the expectation with respect to $\mu_{G;\beta}$ as $\sprod{\cdot}{G; \beta}$.
	
	We introduce a second source of randomness which we refer to as the \emph{random environment}. Let $\Omega \subset \R^d$, $d\geq 2$, open and bounded and $(\Xi, \mathcal F, \P)$ a probability space. Let $X_1, \ldots, X_n \colon \Xi \to \Omega$ be independent random variables with probability density $\rho$ on $\Omega$. We set $\V_n := \set{X_1, \ldots, X_n}$. For a function $\eta \colon [0, \infty) \to \R$ define $\eta_\eps := \eps^{-d}\eta\left(\frac{\cdot}{\eps}\right)$. 
	
	In this work we focus on random coupling constants $\J_{n,\eps} \colon \V_n \times \V_n \to [0, \infty)$ given by
	\begin{equation}\label{eq:random_coupling_constants_normalized}
		\J_{n,\eps}(x, y) := 
			\frac{\eta_\eps(\abs{x - y})}{\sqrt{\deg_{n,\eps}(x) \deg_{n,\eps}(y)}}, \quad x,y \in \V_n,
	\end{equation}
	where we define the \emph{degree} $\deg_{n,\eps} \colon \Omega \to [0, \infty)$ by
	\begin{equation}\label{eq:def_degree}
		\deg_{n,\eps}(x) := \sum_{y \in \V_n} \eta_\eps(\abs{x - y}), \quad x \in \Omega.
	\end{equation}
	Note that in this definition we do not require $x \in \V_n$ and for any $x \in \V_n$ we have $\deg_{n,\eps}(x) \geq \eta(0) > 0$. Moreover, $\J_{n,\eps}$ is invariant under transformation of $\eta \mapsto c \eta$.

	Further, we define
	\begin{equation}
		\E_{n,\eps} := \set{\set{x,y} \subset \V_n \mid \J_{n,\eps}(x,y) > 0}.
	\end{equation}
	Then $\G_{n,\eps} := (\V_n, \E_{n, \eps}, \J_{n,\eps})$ is a random graph, called \emph{random geometric graph}. Since $\V_n$ is a random object, the weights $\mathcal J_{n,\eps}$ depend on the realization of the $X_i$. When we want to make clear that we consider a fixed realization $\omega \in \Xi$ of the random environment, we will write $\G_{n,\eps}(\omega) \equiv (\V_n(\omega), \E_{n,\eps}(\omega), \J_{n,\eps}^\omega)$. Similarly, we will sometimes write $\deg_{n,\eps}^\omega$ to make clear that the degree function is considered for a fixed $\omega \in \Xi$.
	
	\begin{remark}
		Let us briefly motivate our choice of coupling constants \eqref{eq:random_coupling_constants_normalized}. If one chooses
		\begin{equation}
			\tilde\J_{n,\eps}(x, y) := 
			\frac{\eta_\eps(\abs{x - y})}{\sigma_\eta n \varepsilon^2}, \quad x,y \in \V_n,
		\end{equation}
		then the corresponding graph Laplacian $-\Delta_{\tilde\G_{n, \eps}}$ (see \eqref{eq:def_graph_laplacian} for a precise definition) is known to converge to (a weighted version of) the Laplace--Beltrami operator on $\Omega$ (cf. \cite{Calder:ImprovedSpectralConvergence2022}), so that it is a natural choice for the coupling constants. In this article we replace the normalization $\frac1n$ by $\frac1{\sqrt{\deg_{n,\eps}(x)}\sqrt{\deg_{n,\eps}(y)}}$ to emphasize the local structure of the graph. However, since asymptotically $\deg_{n,\eps} \sim \frac1n$ (cf. \zcref{thm:degrees_bounded}), we are considering the same scaling regime.
	\end{remark}
	
	Throughout the paper we assume $d\geq 2$ and that $\Omega\subset \R^{d}$ is open and bounded. We will need in addition the following assumptions on $\eta$, $\Omega$ and $\rho$.
	
	\begin{assumption}\label{ass:eta}
		The kernel function $\eta \colon [0,\infty) \to [0, \infty)$ is continuous at the origin, non-increasing and satisfies $\eta(0) > 0$. Moreover, we assume that
		\begin{equation}
			\label{eq:def_sigma_eta}
			\sigma_{\eta}:= \int_{\R^d} \abs{z \cdot e_1}^2 \eta(\abs{z}) \d z=
			{w_d}\int_{0}^\infty t^{d+1} \eta(t) \d t< \infty,
		\end{equation}
		where $w_d$ is the volume of the $d$-dimensional unit ball.
	\end{assumption}
	
	\begin{assumption}\label{ass:Omega}
		 $\Omega$  has Lipschitz boundary.
	\end{assumption}
	
	\begin{assumption}\label{ass:rho}
		The density $\rho \in L^\infty(\Omega)$ is Lipschitz continuous and there exist constants such that $0 < \rho_{\min} \leq \rho \leq \rho_{\max}$.
	\end{assumption}
	
	Our main result is a quenched lower bound on the two-point function of the $XY$-model on $\G_{n, \eps}$. With high probability in the random environment, tending to $1$ as $n\eps^d \to \infty$, we obtain that the model on these graphs exhibits long range order, if $\beta \to \infty$ sufficiently fast. 
	
	\begin{theorem}\label{thm:main}
		Assume that $\Omega$, $\eta$ and $\rho$ satisfy \zcref{ass:Omega,ass:eta,ass:rho}.
		There exist positive constants $c_1$, $c_2$, $c_3$, $c_4$ and $c_5$ depending only on $\Omega$, $\rho$ and $\eta$
		such that the event
		\begin{equation}\label{eq:assertion_main_thm}
			\text{for all } x,y \in \V_n  \colon 
			\sprod{\cos{\theta_{x} - \theta_{y}}}{\G_{n,\eps};\beta} 
			\geq 1 - \frac{c_3}{\beta \eps^2} - c_4 n^2 e^{-\frac{c_5\beta}{n\eps^d}}
		\end{equation}
		has probability at least $1 - c_1 n \expa{-c_2 n \eps^d}$.
	\end{theorem}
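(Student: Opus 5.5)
The proof splits into a deterministic and a probabilistic part. In the deterministic part we fix a weighted graph $G=(V,E,J)$ and bound the two-point function in terms of two graph quantities. The first is the spectral gap $\lambda_2(G)$ of the graph Laplacian $-\Delta_G$. The second is a lower bound $J_{\min}$ on the nonzero couplings on the short edges. In the probabilistic part we show that, with probability at least $1-c_1 n e^{-c_2 n\eps^d}$, $\G_{n,\eps}$ satisfies three regularity conditions:
\begin{itemize}
\item the degrees satisfy $c\, n\eps^d\eps^{-d}\le \deg_{n,\eps}(x)\le C\, n\eps^d \eps^{-d}$ (this is the degree bound cited in the remark);
\item every pair of points at distance $\le \eps\delta$ is joined, with $J\ge c/(n\eps^d)$;
\item $\lambda_2(\G_{n,\eps})\ge c\,\eps^2$.
\end{itemize}
Plugging these into the deterministic bound should give exactly the two error terms $c_3/(\beta\eps^2)$ and $c_4 n^2 e^{-c_5\beta/(n\eps^d)}$.

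For the deterministic part, the plan is a convexity/Brascamp--Lieb argument. Since $\cos$ is not concave globally, we first restrict to the event $A=\{|\theta_i-\theta_j|\le \pi/2 \text{ (mod }2\pi)\text{ for all }\{i,j\}\in E\}$. On a connected graph, $A$ lets us lift the angles to real values $\phi$. The energy restricted there is $\beta\sum J(1-\cos(\phi_i-\phi_j))$, and after a suitable modification (e.g.\ replacing $1-\cos t$ on $|t|\le\pi/2$ by a convex function with second derivative at least some $\kappa>0$, or using $1-\cos t\ge (2/\pi^2) t^2$ together with a log-concave comparison) it is uniformly convex with Hessian $\ge \kappa\beta(-\Delta_G)$ on the subspace orthogonal to constants. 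Brascamp--Lieb then gives, for the conditional measure on $A$,
\[
\mathrm{Var}(\phi_x-\phi_y)\le \frac{1}{\kappa\beta}\langle \delta_x-\delta_y,(-\Delta_G)^{-1}(\delta_x-\delta_y)\rangle\le \frac{2}{\kappa\beta\lambda_2}.
\]
By symmetry $\phi\mapsto-\phi$, the mean of $\phi_x-\phi_y$ vanishes. Combined with $\cos t\ge 1-t^2/2$, this gives a conditional bound $\ge 1-C/(\beta\lambda_2)$.

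It remains to show $\mu(A^c)$ is small. A union bound over edges and a Peierls-type estimate should give $\mu(|\theta_i-\theta_j|>\pi/2)\le C\, e^{-c\beta J(i,j)}\cdot(\text{entropy factor})$. Summing over at most $n^2$ edges then yields $n^2e^{-c\beta J_{\min}}$. One route is to compare partition functions: shift the phase difference across that edge and use the other short paths. I expect this estimate to be the main obstacle, since the edge weights vary and the relevant entropy factor must be at most polynomial in $n$. The first thing I would try is to use only edges between points at distance $\le \delta\eps$, where $J\gtrsim 1/(n\eps^d)$, and define $A$ through those edges. Bounded degree ratio then controls the entropy. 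Long edges with tiny weights are harmless to the convexity argument, since dropping them only decreases the Laplacian, so Brascamp--Lieb is applied with the Laplacian of the strong subgraph.

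For the probabilistic part, the first two conditions follow from Bernstein/Chernoff bounds for the number of points in balls of radius $\delta\eps$. Taking a cover of $\Omega$ by $O(\eps^{-d})\le O(n)$ cubes and a union bound gives the $n e^{-cn\eps^d}$ probability. This uses \zcref{ass:rho} for $\rho_{\min}$ and \zcref{ass:Omega} to guarantee that each boundary cube has volume fraction inside $\Omega$ bounded below. For the spectral gap, I would use a deterministic Poincaré argument on the good event rather than invoke spectral convergence. Assign the points to cubes of side $\delta\eps/\sqrt d$, so that points in neighbouring cubes are all connected, and compare the Dirichlet form with that of a grid graph on the cubes, whose gap is $\gtrsim \eps^2$ because $\Omega$ is Lipschitz, hence connected with a Poincaré inequality. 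After the normalization by degrees, this gives $\lambda_2\ge c\eps^2$. Combining yields \eqref{eq:assertion_main_thm}.
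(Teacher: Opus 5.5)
Your overall architecture matches the paper's: condition on aligned edges, lift, apply Brascamp--Lieb, bound the complement, then control degrees, $J_0$ and resistance on $\G_{n,\eps}$. There is, however, a genuine gap at the sentence ``on a connected graph, $A$ lets us lift the angles to real values $\phi$''. Connectivity does not suffice, and the obstruction is vorticity. Around a cycle of length $L$ the periodic differences sum to an element of $2\pi\Z$ of modulus at most $L\pi/2$, and this can be nonzero. For example, on a cycle of length $L\ge 5$ the configuration $\theta_k=2\pi k/L$ lies in $A$. Summing $\phi_{k+1}-\phi_k=2\pi/L$ around the cycle would give $2\pi=0$, so no lift exists, and then there is no log-concave measure on $\R^V$ to which Brascamp--Lieb could be applied.

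What guarantees a lift is that every cycle be a combination of cycles of length $<2\pi/\kappa$, where $\kappa$ is the alignment threshold. This is why the paper introduces the algebraic irreducibility $\irr(G)$, takes $\kappa<\pi/\irr(G)$ rather than a fixed $\pi/2$, and proves the change of variables \zcref{thm:change_of_variables} via liftings of a short cycle basis. Correspondingly, the probabilistic part needs a fourth input that your three regularity conditions (degrees, short-edge weights, spectral gap) do not provide. This is \zcref{thm:irr_random_graph}: $\irr(\G_{n,\eps})\le K$ with probability $1-c_1ne^{-c_2n\eps^d}$. It is proved by discretizing a homotopy that contracts loops in $\Omega$, so it uses that $\Omega$ is simply connected. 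The obstruction is real, not a technicality. On a planar annulus centred at the origin, $\theta_x=\arg x$ restricted to $\V_n$ has all edge differences $O(\eps)$, yet winds by $2\pi$ along a non-contractible cycle of $\G_{n,\eps}$.

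Two further points need repair. First, with threshold $\pi/2$ the Hessian coefficients $\cos(\phi_i-\phi_j)$ are only $\ge 0$, so you get no bound $\ge c(-\Delta_G)$. Replacing $1-\cos$ by a uniformly convex function changes the measure, and you give no justification for that. The paper instead conditions with a threshold $\kappa$ satisfying $\cos\kappa>0$, and for the lifting $\kappa$ must also lie below $2\pi/\irr(G)$. Second, weak long edges cannot simply be dropped from the Hessian. They contribute $J\cos(\phi_i-\phi_j)(w_i-w_j)^2$, which can be negative when those differences are not controlled by $A$. They must first be removed from the measure by Ginibre monotonicity (\zcref{thm:decrease_coupling}), as the paper does when passing to the kernel $\bar\eta^{-1}\I_{[0,\bar\eta]}$.

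On the other hand, the step you flagged as the main obstacle, bounding $\mu(A^c)$, is available off the shelf. It is the Bricmont--Fontaine alignment estimate \zcref{th:NNalign}, which gives $\abs{E}c_0e^{-\beta J_0\kappa^2/\pi}$ on any finite graph. Your coarse-graining argument for $\lambda_2\gtrsim\eps^2$ is a plausible elementary substitute for the cited Poincaré inequality, since $\mathbf R_G\le 2/\lambda_2$.
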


	This directly implies the following annealed version of our result.
	
	\begin{corollary}
		Assume that $\Omega$, $\eta$ and $\rho$ satisfy \zcref{ass:Omega,ass:eta,ass:rho}.
		There exist positive constants $c_1$, $c_2$, $c_3$, $c_4$ and $c_5$ depending only on $\Omega$, $\rho$ and $\eta$
		such for all $i,j \in \set{1, \dots, n}$
		\begin{equation}
			\mathbbm E\left[
			\sprod{\cos{\theta_{X_i} - \theta_{X_j}}}{\G_{n,\eps};\beta} \right]
			\geq 1 - \frac{c_3}{\beta \eps^2} - c_4 n^2 e^{- \frac{c_5\beta}{n\eps^d}} - c_1 n e^{-c_2 n \eps^d},
		\end{equation}
		where $\mathbbm E$ denotes the expectation with respect to $\P$.
	\end{corollary}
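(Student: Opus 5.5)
The statement to prove is the Corollary, which should follow directly from \zcref{thm:main}. Fix $i,j \in \set{1,\dots,n}$ and let $A \subset \Xi$ be the event in \eqref{eq:assertion_main_thm}. By \zcref{thm:main}, $\P(A) \geq 1 - c_1 n e^{-c_2 n\eps^d}$, with constants depending only on $\Omega$, $\rho$ and $\eta$. We reuse these same constants in the Corollary.

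On $A$ we apply \eqref{eq:assertion_main_thm} with $x = X_i$ and $y = X_j$, both of which lie in $\V_n$. This gives
\[
\sprod{\cos(\theta_{X_i}-\theta_{X_j})}{\G_{n,\eps};\beta} \geq 1 - \tfrac{c_3}{\beta\eps^2} - c_4 n^2 e^{-c_5\beta/(n\eps^d)} =: 1-\delta .
\]
If $X_i = X_j$ as points (an event of probability zero, since $X_i$ and $X_j$ have densities), the two-point function equals $1$ and the bound is trivial. On the complement $A^c$ we only use the trivial bound $\sprod{\cos(\cdot)}{} \geq -1$.

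One technical point is measurability of $\omega \mapsto \sprod{\cos(\theta_{X_i}-\theta_{X_j})}{\G_{n,\eps}(\omega);\beta}$. The Gibbs expectation is a ratio of integrals over $[-\pi,\pi)^n$. Its integrands depend measurably on $(X_1,\dots,X_n)$ through $\J_{n,\eps}$, which is built from the monotone, hence Borel, function $\eta$. So Fubini gives measurability, and the function is bounded by $1$, hence integrable.

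Taking expectations and splitting over $A$ and $A^c$:
\[
\mathbbm E\bigl[\sprod{\cdot}{}\bigr] \geq (1-\delta)\P(A) - \P(A^c) = 1-\delta - (2-\delta)\P(A^c).
\]
If $\delta \leq 1$, this is at least $1 - \delta - 2c_1 n e^{-c_2 n\eps^d}$. If $\delta > 1$, the claimed right-hand side is negative, so a short case check is needed. Since $\mathbbm E[\cdot] \geq -1$ always, it suffices to have
\[
1 - \delta - 2c_1 n e^{-c_2 n\eps^d} \leq -1 ,
\]
and this does not hold automatically. To handle both cases uniformly, replace $c_1$ by $2c_1$ (and, if needed, $c_3, c_4$ by larger constants). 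Using $-(2-\delta)\P(A^c) \geq -2\P(A^c)$ only requires $\delta \leq 2$; when $\delta > 2$ the right-hand side is below $-1$ and the bound is trivial. With $c_1 \mapsto 2c_1$ the inequality therefore holds in all cases.

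There is no genuine obstacle here. The only care needed is the bookkeeping of the failure probability and the sign of $\delta$, which the enlarged constant $c_1$ absorbs.
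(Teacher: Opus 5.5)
Your argument is correct and is the intended one. The paper gives no separate proof beyond saying the corollary follows directly from \zcref{thm:main}, i.e.\ by splitting the expectation over the event in \eqref{eq:assertion_main_thm} and its complement, exactly as you do. As a minor simplification, you could use Ginibre's inequality $\sprod{\cos(\theta_x-\theta_y)}{\G_{n,\eps};\beta}\ge 0$ on the complement instead of the bound $-1$. This gives $\mathbbm E[\cdot]\ge \max(1-\delta,0)\,\P(A)\ge 1-\delta-\P(A^c)$, so you can keep $c_1$ unchanged and skip the case distinction in $\delta$.
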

			
	\subsection{Proof strategy and outline of the paper}
	
	In \zcref{sec:preliminaries} we recall some known results on the $XY$-model and on random geometric graphs.
	We also  introduce some algebraic notions on graphs, in particular the concept of  \emph{algebraic irreducibility} which measures  the size of its ``largest hole''. 
		\zcref{sec:finite_graphs} is devoted to proving the following lower bound of the two-point function of the $XY$-model on a deterministic graph $G= (V,E,J)$ in terms of the effective resistance $\mathbf R_G$ (see \eqref{eq:def_effective_resistance}) 
	\begin{equation}\label{eq:bound_intro_deterministic}
		\sprod{\cos{\theta_{x} - \theta_{y}}}{G}
		\geq 1
		- \tfrac{c_0 \mathbf R_G}{\beta}
		- c_1 \abs{E} \, e^{-c_2 \beta},
	\end{equation}
	where  $c_{1}$ is a positive constant,  $c_0=c_{0} (\irr G)$ and  $c_2=c_{2} (\irr G,J_{0})$, with  $J_0 = \min_{\set{i,j} \in E} J(i,j)$.
	This bound is extended to the random setting  in \zcref{sec:order_rand_geometric_graphs}.
	The main difficulty is to show  that the following event has a large probability:
	\begin{equation}
		\set{\irr \G_{n, \eps} \leq K \text{ and } \mathbf R_{\G_{n, \eps}} \leq R \text{ and } J_0(\G_{n,\eps}) \geq \bar{J}},
	\end{equation}
	with constants $K$, $R$ and $\bar{J}$ depending on $n$ and $\eps$ but not on $\beta$. For all realizations of the random
	environment within this event, $c_{0} (\irr \G_{n, \eps})\leq \bar{c}_{0}$ and  $c_{2} (\irr G,J_{0})\leq \bar{c}_{2}$, hence the bound \eqref{eq:bound_intro_deterministic} implies
	\eqref{eq:assertion_main_thm}. Controlling the irreducibility is the content of \zcref{sec:_irr_random_graphs}, while
	the effective resistance and the smallest coupling constant are dealt with in \zcref{sec:random_graphs}.
	To conclude this article, we prove a similar result in \zcref{sec:interior}, which holds for all $x,y \in \V_n$ that are sufficiently far away from
	$\partial \Omega$. We chose to give this argument, because it has a simple proof based on a  comparison between the $XY$-model on random graphs and
	on the lattice $\Z^{d},$ $d\geq 3$. For the latter we use a finite volume result from  \cite{Garban:ContinuousSymmetryBreaking2022a}.

	\section{Preliminaries}
	\label{sec:preliminaries}
%%%%%%%%%%%%%%%%%%%%%%%%%%%%%%%%%%%%%%%%%%%%%%%%%%%%%%%%%%%%%%%%%%%%%%%%%%%%%%

	We introduce a few concepts and results which we will need through this article. While they are standard in the literature, we choose to give short explanations and proofs to keep the article a bit more self-contained.

%%%%%%%%%%%%%%%%%%%%%%%%%%%%%%%%%%%%%%%%%%%%%%%%%%%%%%%%%%%%%%%%%%%%%%%
	\subsection{Facts about the \texorpdfstring{$XY$}{XY}-model}
	
	We begin with two standard facts about the classical $XY$-model.
	The following is a classical application of Ginibre's inequality and allows to relate two models with different coupling constants.
	\begin{fact}[Monotonicity in the $XY$-weights]\label{thm:decrease_coupling}
		For any finite set $\Lambda$ and $x,y \in \Lambda$ and $\beta > 0$, consider the function $f \colon [0,\infty)^{V \times V} \to \R$ given by
		\begin{equation}
			[0, \infty)^{\Lambda \times \Lambda} \ni J \mapsto \sprod{\cos \theta_x - \theta_y}{(\Lambda, E_J, J); \beta},
		\end{equation}
		where
		\begin{equation}
			E_J := \set{\set{i,j} \subset \Lambda \mid J(i,j) > 0}.
		\end{equation}
		Then for any $i,j \in \Lambda$
		\begin{equation}
			\frac{\partial}{\partial J_{uv}} f(J) \geq 0.
		\end{equation}
		In particular, if $J, J' \in [0,\infty)^{\Lambda \times \Lambda}$ such that $J \leq J'$ coordinate-wise, then
		\begin{equation}
			\sprod{\cos \theta_x - \theta_y}{(\Lambda, E_{J'}, J'); \beta} 
			\geq \sprod{\cos \theta_x - \theta_y}{(\Lambda, E_J, J); \beta}. 
		\end{equation}
	\end{fact}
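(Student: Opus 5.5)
The plan is to prove the derivative statement directly and then obtain the comparison by integrating along a monotone path in coupling space. Write the measure with the full coupling array,
\[
\mu_J(\d\theta)=Z_J^{-1}\exp\Bigl(\beta\sum_{\{i,j\}}J(i,j)\cos(\theta_i-\theta_j)\Bigr)\d\theta_\Lambda .
\]
The density is a smooth function of $J$ on all of $[0,\infty)^{\Lambda\times\Lambda}$, including pairs with $J(i,j)=0$, so the edge set $E_J$ plays no separate role: removing an edge is the same as setting its weight to zero. Differentiating under the integral, which is justified because $\Lambda$ is finite and the integrand is bounded and smooth on a compact domain, gives for any pair $\{u,v\}$
\[
\frac{\partial}{\partial J_{uv}}\langle\cos(\theta_x-\theta_y)\rangle_J=\beta\Bigl(\langle\cos(\theta_x-\theta_y)\cos(\theta_u-\theta_v)\rangle_J-\langle\cos(\theta_x-\theta_y)\rangle_J\langle\cos(\theta_u-\theta_v)\rangle_J\Bigr),
\]
with the obvious factor $2$ if $J_{uv}$ and $J_{vu}$ are counted as one variable. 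The first claim is therefore exactly the covariance inequality $\mathrm{Cov}_J(\cos(\theta_x-\theta_y),\cos(\theta_u-\theta_v))\ge0$, which is Ginibre's inequality for the plane rotator with nonnegative couplings.

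The Ginibre inequality is the main step, and I would prove it by the standard duplication argument. Take two independent copies $\theta,\theta'$ of $\mu_J$ and change variables to $\phi_i=\theta_i+\theta'_i$ and $\psi_i=\theta_i-\theta'_i$. The covariance becomes
\[
\tfrac12\bigl\langle (f(\theta)-f(\theta'))(g(\theta)-g(\theta'))\bigr\rangle_{J\otimes J},
\]
with $f=\cos(\theta_x-\theta_y)$ and $g=\cos(\theta_u-\theta_v)$. The identity
\[
\cos a-\cos a'=-2\sin\tfrac{a+a'}{2}\sin\tfrac{a-a'}{2}
\]
writes each difference as a product of sines of half-sums and half-differences, and multiplying out gives a sum of products of cosines and sines with nonnegative coefficients. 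Expanding the joint Boltzmann weight $\exp(\beta\sum J(i,j)[\cos(\theta_i-\theta_j)+\cos(\theta'_i-\theta'_j)])$ in a power series, each coupling term becomes
\[
2\cos\tfrac{\phi_i-\phi_j}{2}\cos\tfrac{\psi_i-\psi_j}{2},
\]
again a product with nonnegative coefficient. Each monomial then factorizes into a $\phi$-integral times a $\psi$-integral of products of trigonometric functions of linear combinations of angles. Such integrals are nonnegative: one writes the products as sums of cosines of integer combinations and uses $\int\cos(\sum k_i\alpha_i)\d\alpha\ge0$.

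The main obstacle is handling the half-angles and the $2\pi$-periodicity in this change of variables. The cleanest route is to cite Ginibre's general theorem, whose hypotheses are a nonnegative combination of the cosine functions $\cos(\theta_i-\theta_j)$ and the symmetric Lebesgue measure on $[-\pi,\pi)$, and to verify that our $f$, $g$ and Hamiltonian lie in the required cone. The paper calls this ``a classical application of Ginibre's inequality'', so citing it is legitimate.

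For the comparison statement, set $J_t=J+t(J'-J)$ for $t\in[0,1]$. Each $J_t$ is nonnegative, so the derivative bound applies at every $t$. Then
\[
\frac{d}{dt}f(J_t)=\sum_{u,v}(J'-J)_{uv}\,\partial_{J_{uv}}f(J_t)\ge0,
\]
because $J'\ge J$ coordinatewise, and integrating in $t$ from $0$ to $1$ gives the inequality.
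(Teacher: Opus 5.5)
Your proof is correct and follows the paper's argument. The derivative is $\beta$ times the covariance of $\cos(\theta_x-\theta_y)$ and $\cos(\theta_u-\theta_v)$, which is nonnegative by Ginibre's inequality, and monotonicity (coordinatewise or along your linear path) gives the comparison. One correction to your optional duplication sketch, which does not affect the proof since you end up citing Ginibre: the separate $\phi$- and $\psi$-integrals of monomials containing sines can be negative, and each term is nonnegative only because the $\phi$-factor and $\psi$-factor are the same function of the half-angles, so their product is a square.
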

	\begin{proof}
		We have
		\begin{equation}
			\begin{aligned}
				&\phntm \frac{1}{\beta}\frac{\partial}{\partial J_{uv}} f(J) \\
				&= \left(\sprod{\cos\left(\theta_x - \theta_y\right) \cos\left(\theta_i - \theta_j\right)}{(\Lambda, E_J, J); \beta} - \sprod{\cos \theta_x - \theta_y}{(\Lambda, E_J, J); \beta} \sprod{\cos \theta_i - \theta_j}{(\Lambda, E_J, J); \beta}\right)
				\geq 0
			\end{aligned}
		\end{equation}
		by Ginibre. Thus $f$ is increasing in each coordinate and the claim follows.
	\end{proof}
		
	The second result states that, in a finite graph, nearest neighbors are aligned with very large probability.
	\begin{fact}[Nearest neighbor alignment, {\cite[Theorem 2]{Bricmont:CorrelationInequalitiesContour1981}}]\label{th:NNalign}
			Let $0 < \kappa < \pi$. Let $G=(V, E, J)$ be a finite and connected graph and let
			\begin{equation}\label{def:J0}
				J_0 := \min\set{J(i,j) \mid \set{i,j} \in E} > 0.
			\end{equation}
			Then there exists a positive constant $c_0$ independent of $G$ and $\kappa$ such that for all $\beta > 0$
			\begin{equation}
				\mu_{G;\beta}\Big( \exists \{u,v\}\in E \colon  \cos (\theta_{u}-\theta_{v})\leq \cos \kappa   \Big) 
				\leq \abs{E}c_0\, e^{-\frac{\beta J_0 \kappa^2}{\pi}}.
			\end{equation}
	\end{fact}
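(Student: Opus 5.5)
Since \zcref{th:NNalign} is quoted from \cite[Theorem 2]{Bricmont:CorrelationInequalitiesContour1981}, I sketch a self-contained proof built on a single-edge estimate and a union bound. The union bound gives
\begin{equation}
	\mu_{G;\beta}\Big(\exists \{u,v\}\in E\colon \cos(\theta_u-\theta_v)\le\cos\kappa\Big)\le \sum_{\{u,v\}\in E}\mu_{G;\beta}\big(\abs{\theta_u-\theta_v}_{\mathrm{mod}\,2\pi}\ge\kappa\big).
\end{equation}
It therefore suffices to show that each summand is at most $c_0 e^{-\beta J_0\kappa^2/\pi}$. The main point is to get a bound uniform in $G$, even though the edge $\{u,v\}$ is coupled to the rest of the graph.

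For a fixed edge $\{u,v\}$, I would compare with the model in which the coupling on $\{u,v\}$ is switched off. Write $J_{uv}=J(u,v)\ge J_0$ and let $\langle\cdot\rangle'$ be the $XY$-measure with $J(u,v)$ replaced by $0$. Then
\begin{equation}
	\mu_{G;\beta}(A)=\frac{\langle \I_A\, e^{\beta J_{uv}\cos(\theta_u-\theta_v)}\rangle'}{\langle e^{\beta J_{uv}\cos(\theta_u-\theta_v)}\rangle'},\qquad A=\set{\cos(\theta_u-\theta_v)\le\cos\kappa}.
\end{equation}
The numerator is at most $e^{\beta J_{uv}\cos\kappa}$. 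For the denominator, Ginibre's inequality (as in \zcref{thm:decrease_coupling}) gives $\langle\cos(m(\theta_u-\theta_v))\rangle'\ge0$ for every $m$. Expanding $e^{\beta J\cos\phi}=I_0(\beta J)+2\sum_{m\ge1} I_m(\beta J)\cos(m\phi)$ with Bessel coefficients $I_m\ge0$ then yields the lower bound $I_0(\beta J_{uv})$. This is the free (single-edge) denominator. Alternatively, one can use the convexity bound $\langle e^{\beta J\cos\phi}\rangle'\ge \frac1{2\pi}\int e^{\beta J\cos\phi}\d\phi$, again by positivity of the Fourier modes.

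Standard Bessel asymptotics give $I_0(t)\ge c\,e^{t}/\sqrt{1+t}$. Hence each summand is at most
\begin{equation}
	C\sqrt{1+\beta J_{uv}}\;e^{-\beta J_{uv}(1-\cos\kappa)}.
\end{equation}
For $0<\kappa<\pi$ we have $1-\cos\kappa\ge 2\kappa^2/\pi^2$. Thus the exponent is at least $2\beta J_{uv}\kappa^2/\pi^2$, which exceeds $\beta J_{uv}\kappa^2/\pi$ only if we compare carefully. The excess, a factor $2/\pi$ against $1/\pi$, is exactly what absorbs the polynomial prefactor: $\sqrt{1+t}\,e^{-t\kappa^2(2/\pi^2-1/\pi^{\,\cdot})}$.

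The obstacle is matching the paper's precise constant $\kappa^2/\pi$ uniformly in $\kappa$. I would first try to handle the prefactor $\sqrt{1+t}$ by splitting off a fraction of the exponent, using $\sqrt{1+t}\,e^{-\delta t\kappa^2}\le C$ when $\kappa$ is bounded below. Small $\kappa$ is where uniformity is delicate. If this fails, I would simply invoke \cite{Bricmont:CorrelationInequalitiesContour1981} for the constant. Finally, monotonicity in $J_{uv}\ge J_0$ of the resulting bound, which is decreasing for large $t$ up to constants, replaces $J_{uv}$ by $J_0$. Summing over $E$ gives the factor $\abs{E}$.
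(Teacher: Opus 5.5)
The first part of your argument is correct. With $t=\beta J(u,v)$ and $\phi=\theta_u-\theta_v$, Ginibre together with $I_0(t)\ge c\,e^t/\sqrt{1+t}$ gives $\mu_{G;\beta}(A)\le C\sqrt{1+t}\,e^{-t(1-\cos\kappa)}$.

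\textbf{The gap.} The final conversion to $c_0e^{-\beta J_0\kappa^2/\pi}$ fails, for two separate reasons.

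First, the exponent comparison is backwards. Since $2<\pi$, we have $2\kappa^2/\pi^2<\kappa^2/\pi$, so Jordan's inequality leaves a deficit, not an excess. Even the exact quantity $1-\cos\kappa$ drops below $\kappa^2/\pi$ for $\kappa\ge 2.28$, and there no argument can work. On the two-vertex graph, $\mu_{G;\beta}(A)\asymp t^{-1/2}e^{-t(1-\cos\kappa)}$ as $\beta\to\infty$, which is not $O(e^{-t\kappa^2/\pi})$. So the Fact as literally stated (all $\kappa<\pi$) is false, and the range of $\kappa$ must be restricted. This is harmless here, since the paper only uses $\kappa<\pi/\irr(G)\le\pi/3$.

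Second, even on a range where $1-\cos\kappa\ge(\tfrac1\pi+a)\kappa^2$ with $a>0$, the factor $\sqrt{1+t}$ cannot be absorbed with $c_0$ independent of $\kappa$. Indeed $\sup_t\sqrt{1+t}\,e^{-a\kappa^2t}\asymp\kappa^{-1}$. Even after capping your bound by $1$, at $t\approx 2\kappa^{-2}\log(1/\kappa)$ it is at least of order one, while $c_0e^{-t\kappa^2/\pi}=c_0\kappa^{2/\pi}\to0$. The loss comes from the crude step $\I_A e^{t\cos\phi}\le e^{t\cos\kappa}$, which discards the Gaussian-tail factor. ``Splitting off a fraction of the exponent'' cannot recover it. Falling back on the citation is what the paper itself does (it gives no proof), but it does not make your argument self-contained.

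\textbf{A way to close it}, say for $\kappa\le\pi/2$. Use the exponential Chebyshev bound $\I_A\le e^{\lambda t(\cos\kappa-\cos\phi)}$ with $\lambda\in(0,1)$, so that
\[
\mu_{G;\beta}(A)\le e^{\lambda t\cos\kappa}\,\frac{\langle e^{(1-\lambda)t\cos\phi}\rangle'}{\langle e^{t\cos\phi}\rangle'} .
\]
Set $F(s):=\log\langle e^{s\cos\phi}\rangle'$. Then $F'(s)$ is the two-point function $\langle\cos\phi\rangle$ of the model on $G$ with $\beta J(u,v)$ replaced by $s$. By \zcref{thm:decrease_coupling}, $F'(s)$ is at least its value with all other couplings set to zero, namely $I_1(s)/I_0(s)=(\log I_0)'(s)$. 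Integrating over $[(1-\lambda)t,t]$ bounds the ratio by $I_0((1-\lambda)t)/I_0(t)\le C_\lambda e^{-\lambda t}$. Hence $\mu_{G;\beta}(A)\le C_\lambda e^{-\lambda t(1-\cos\kappa)}$, with no polynomial prefactor.

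Since $(1-\cos\kappa)/\kappa^2$ is decreasing, the choice $\lambda=\pi/4$ gives $\lambda(1-\cos\kappa)\ge\kappa^2/\pi$ on $(0,\pi/2]$. Then $J(u,v)\ge J_0$ and the union bound finish the proof with $c_0=C_{\pi/4}$.
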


%%%%%%%%%%%%%%%%%%%%%%%%%%%%%%%%%%%%%%%%%%%%%%%%%%%%%%%%%%%%%%%%%%%%%%%
	\subsection{Graph-based calculus}
	
	We further need a few definitions from graph-based calculus. 
	Let $G=(V, E, J)$ be a finite, undirected graph. For functions $u,v \colon V \to \R$ consider the euclidean scalar product
	\begin{equation}
		\left(u, v\right) := \sum_{x \in V} {u(x)}v(x)
	\end{equation}
	which induces a norm $\norm{u}_2^2 := \left(u, v\right)$. Define moreover
	\begin{equation}
		\ell^2(G) := \set{u \colon V \to \R \mid \norm{u}_{2} < \infty},
	\end{equation}
	the space of real valued functions on $V$ with the euclidean scalar product. 
	The (weighted) graph Laplacian $-\Delta_G\! \colon\! \ell^2(G) \!\to\! \ell^2(G)$  is defined by
	\begin{equation}\label{eq:def_graph_laplacian}
		\left(-\Delta_G u\right)(x) := \sum_{\set{x,y} \in E} J(x,y)\,  (u(x) - u(y)) \quad \text{for } x \in V.
	\end{equation}
By discrete integration by parts  we have
	\begin{equation}\label{eq:def_graph_laplacian1}
		 \left(v, -\Delta_{G} v \right)
		 := \sum_{\set{i,j} \in E} J(i,j)\, \left({v(i) - v(j)}\right)^2,
	\end{equation}
	which implies in particular
	\begin{equation}\label{eq:def_graph_laplacian2}
		\ker\left(-\Delta_G\right) = \set{u \in \ell^2(G) \mid u \equiv \text{const}}.
	\end{equation} 

%%%%%%%%%%%%%%%%%%%%%%%%%%%%%%%%%%%%%%%%%%%%%%%%%%%%%%%%%%%%%%%%%%%%%%%
	\subsection{Algebraic representation of loops in graphs}
	\label{sec:cycle_space}
	
	Our proof of \zcref{thm:main} relies on the fact that with high probability the graph $\G_{n,\eps}$ does not have large ``holes''. This means that any loop can be decomposed into a sum over small loops of a maximal size. 
	To formalize this, it is useful to introduce the concept of a \emph{cycle basis} of a graph. In this setting, the size of the largest hole is
	described by the concept of \textit{algebraic irreducibility} (see \eqref{eq:defirr}). 
	We adopt definitions commonly used in graph theory (cf. \cite{Diestel:GraphTheory2025,Kavitha:CycleBasesGraphs2009}). 
%%%%%%%%%%%%%%%%%%%%%%%%%%
	\begin{definition}[Paths and cycles]
		Let $G=(V, E)$ be an undirected graph. A \emph{path} $P=x_0 \cdots x_n$ is a sequence of vertices of $V$, such that $\set{x_{k-1}, x_k} \in E$. We allow a  path to intersect itself, however we do not allow the repetition of edges. We denote the length of $P$ by $\abs{P} := n$, which corresponds to the number of edges.
		We say that $P$ is a \emph{cycle} if $x_0 = x_n$.
		Note that every path induces a subgraph of $G$ by the vertices and edges it traverses.
	\end{definition}

%%%%%%%%%%%%%%%%%%%%%%%%%%%%%%%%%%%%%%%%%%%%%%%%%%%%%%%%%%%%%%%%%%%%%%%%%%%%%%%%%%%
	\begin{definition}[Edge and cycle space]
		Let $G=(V, E)$ be an undirected graph and let $\mathbb F_2 \equiv \Z /2 \Z$  be the field over $\set{0,1}$.

The \emph{edge space} of $G$, $\mathbf E \equiv \mathbf E(G)$, is the vector space 
		 of functions $f \colon E \to \mathbb F_2$. while the \emph{cycle space} $\mathbf C \equiv \mathbf C(G)$ is the subspace of $\mathbf E$ of all functions $f \colon E \to \mathbb F_2$ satisfying the constraint
		\begin{equation}\label{eq:cyclecond}
			\sum_{j\colon\set{i,j} \in E} f(\set{i,j}) = 0 \quad \text{for all } i \in V.
		\end{equation}
		For $f \in \mathbf E(G)$ we set 
		\begin{equation}
			\abs{f} := \sum_{e \in E} f(e).
		\end{equation}
	\end{definition}
%%%%%%%%%%%%%%%%%%%%%%%%%%%%%%%%%%%%%%%%%%%%%%%%%%%%%%%%%%%%%%%%%%%%%%%%%%%%%%%%%%%%%%%%	
	\begin{remark}\label{rem:cyclespace}
A function $f$ in $ \mathbf E(G)$ identifies a unique subset of edges $A_{f}\in E$ of size $|A_{f}|=|f|$.
Conversely, each subset  $A\subset E$ is represented by  
its indicator function $\I_A\in  \mathbf E(G)$. In particular, given $A,B \subset E$, $\I_{A} + \I_B \in \mathbf E(G)$ represents the element $A \triangle B \subset E$. For $A \subset E$, we will often write $A \in \mathbf E(G)$ instead of $\I_A \in \mathbf E(G)$. 
		
 Note that \eqref{eq:cyclecond} states that each vertex has an even number of incident edges. This implies in particular that every element in the cycle space $\mathbf C(G)$ corresponds to a collection of edge disjoint cycles in the graph $G$. 
		
	\end{remark}
	
	\begin{definition}[Cycle basis and irreducibility]
		Let $G=(V, E)$ be an undirected graph. A basis of the vector space $\mathbf C(G)$ is called a \emph{cycle basis} of $G$. 
		By a slight abuse of notation we will call $\set{A_1, \ldots A_n}$, $A_i \subset E$ a cycle basis of $G$, if $\set{\I_{A_1}, \ldots \I_{A_n}}$ is a cycle basis of $G$. 
		
		We define the \emph{algebraic irreducibility} of $G$ as
		\begin{equation}\label{eq:defirr}
			\irr(G) := \min_{\substack{\set{A_1, \ldots, A_n}\\\text{ is a cycle basis of $G$}}} \max_{i=1}^n \abs{A_i}.
		\end{equation}
	\end{definition}
	%%%%%%%%%%%%%%%%%%%%%%%%%%%%%%%%%%%%%%%%%%%%%%%%%%%%%%%%%%%%%%%%%%%%%
	\begin{remark}\label{rem:irrG}
		
	The number $\irr(G)$ is the size of the largest 'hole' in the graph $G$. More formally, it is the size of the longest cycle which cannot be decomposed into
	a sum of shorter cycles. Since the length of a cycle is  at least 3, it holds $\irr(G)\geq 3$.
	Here are three examples. 
	
	\begin{enumerate}[label=(\roman*)]
				\item Consider the torus $G=\left(\Z / L\Z\right)^d$ with nearest neighbor edges. In this case, $\irr(G) = 4$, since the collection of unit squares forms a cycle basis and there are no cycles of length smaller that four in $G$.
				\item If $G$ is the honeycomb lattice in two dimensions, then $\irr(G) = 6$, since the collection of hexagons forms a cycle basis and there are no cycles of length smaller than six in $G$.
				\item The complete graph $K_m$ with $m \geq 3$ vertices $v_1, \dots, v_m$ has $\irr(K_m) = 3$ since the collection of cycles $v_1 v_i v_{i+1} v_1$ for $i \in \set{2, \dots, m-1}$ is a cycle basis.
	\end{enumerate}
	To the best of our knowledge the algebraic irreducibility of a graph has not been studied as a graph invariant in the literature. However, there has been interest in finding cycle bases which minimize $\max_{i=1}^n \abs{A_i}$. This problem is known as the \emph{Shortest Maximal Cycle Basis} problem (SMCB) and can be solved in polynomial time $\mathcal O(\abs{E}^3 \abs{V})$, see \cite{Chickering:FindingCycleBasis1995,Horton:PolynomialTimeAlgorithmFind1987}.
	\end{remark}
	%%%%%%%%%%%%%%%%%%%%%%%%%%%%%%%%%%%%%%%%%%%%%%%%%%%	

	\begin{definition}[Orientation]
	Let $G=(V, E)$ be an undirected graph. For all edges $e = \set{i,j} \in E$ choose an orientation $\vec e \in V \times V$, which is either $(i,j)$ or $(j,i)$. The directed graph $\vec G = \left(V, \vec E\right)$ with the directed edge set $\vec E = \set{\vec e \mid e \in E}$ is called \emph{orientation} of $G$.
	\end{definition}
	
	\begin{definition}[Directed cycle space]
		Let $\vec G=(V, \vec E)$ be a directed graph. 
		We call the vector space over $\Q$ of functions $\vec f \colon \vec{E} \to \Q$ satisfying the constraint
		\begin{equation}\label{eq:directed_cycle_contraint}
			\sum_{\vec e \in \delta^+(v)} \vec f(\vec e) = \sum_{\vec e \in \delta^-(v)} \vec f(\vec e) \quad \text{for all } i \in V.
		\end{equation}
		the \emph{directed cycle space} $\mathbf{C}(\vec G)$ of $\vec G$.
		Here, $\delta^+(v) := \set{(v, w) \in \vec E \mid w \in V}$ are the edges leaving $v$, and $\delta^-(v) := \set{(w, v) \in \vec E \mid w \in V}$ are the edges entering $v$.
	
	\end{definition}
	
	\begin{definition}
		Let $G = (V,E)$ be an undirected graph with orientation $\vec G$. 
		Let $\vec f \in \mathbf{C}(\vec G)$ such that $\vec f(\vec E) \subset \Z$. We define its \emph{projection} onto the cycle space of $G$ $\pi_{\vec f} \in \mathbf C(G)$ by
		\begin{equation}
			\pi_{\vec f}(e) := \vec f(\vec e) \mod 2, \quad e \in E.
		\end{equation}
		Let $g \in \mathbf C(G)$ and $\vec g \in \mathbf C(\vec G)$ such that $\vec g(\vec E) \subseteq \set{-1, 0, 1}$. We say that $\vec g$ is a \emph{lifting} of $g$ if $\pi_{\vec g} = g$.   Such a lifting always exists (cf. \zcref{thm:lifting_exists}).
	\end{definition}
	
	\begin{proposition}[cf. {\cite[Lemma 2.4]{Kavitha:CycleBasesGraphs2009}}]\label{thm:lifting_basis}
		Let $G = (V,E)$ be an undirected simple graph with orientation $\vec G$. Let $f_1, \ldots, f_n \in \mathbf C(G)$ be a basis of $\mathbf C(G)$ and let $\vec f_i \in \mathbf C(\vec G)$, $i \in \set{1, \dots, n}$ be a lifting of $f_i$. Then $\vec f_1, \ldots, \vec f_n$ is a basis of $\mathbf C(\vec G)$. 
	\end{proposition}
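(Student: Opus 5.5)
Since the undirected cycle space $\mathbf C(G)$ has $\mathbb F_2$-dimension $\abs{E}-\abs{V}+c$, where $c$ is the number of connected components, and the directed cycle space $\mathbf C(\vec G)$ has the same $\Q$-dimension, it suffices to show that the liftings $\vec f_1,\dots,\vec f_n$ are linearly independent over $\Q$. Spanning then follows from the dimension count.

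The plan is as follows. First I verify the two dimension formulas. Take a spanning forest $T$ of $G$. For $\mathbf C(\vec G)$, a flow is determined uniquely by its values on the non-tree edges $E\setminus T$. Indeed, if a flow vanishes on $E\setminus T$, it is a flow supported on a forest. Peeling off leaves one at a time, the conservation law \eqref{eq:directed_cycle_contraint} forces it to vanish. Conversely, each non-tree edge $e$ together with the tree path between its endpoints gives a fundamental cycle. This cycle carries a $\pm1$-valued flow that equals $1$ on $e$ and $0$ on every other non-tree edge. Hence $\dim_\Q \mathbf C(\vec G)=\abs{E\setminus T}=\abs{E}-\abs{V}+c$. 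The same argument over $\mathbb F_2$ gives $\dim \mathbf C(G)=\abs{E}-\abs{V}+c$, so $n$ equals this number as well.

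The main step is linear independence, and I would use a reduction-mod-2 argument. Suppose $\sum_i q_i \vec f_i=0$ with $q_i\in\Q$ not all zero. Clearing denominators, I may assume the $q_i$ are integers with $\gcd(q_1,\dots,q_n)=1$. Since each $\vec f_i$ is integer-valued, reducing the relation modulo $2$ edgewise gives $\sum_i (q_i \bmod 2)\, \pi_{\vec f_i} = 0$ in $\mathbf E(G)$. This uses that reduction mod $2$ is a ring homomorphism $\Z\to\mathbb F_2$ and that $\pi_{\vec f_i}=f_i$ by the definition of a lifting. Because the $f_i$ form a basis over $\mathbb F_2$, every $q_i$ is even. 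This contradicts $\gcd=1$, so the $\vec f_i$ are independent.

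The only point requiring care is the reduction to integer coefficients with coprime gcd, which is routine. The dimension count for $\mathbf C(\vec G)$ over $\Q$ is where I expect the modest work, specifically the leaf-peeling argument showing that a flow supported on a forest vanishes. Simplicity of $G$ plays no essential role beyond making the notation $\vec e$ well defined.
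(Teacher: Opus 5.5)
Your proof is correct and follows essentially the same route as the paper: linear independence by clearing denominators and reducing modulo $2$ against the basis $f_1,\dots,f_n$, and spanning via the spanning-forest count $\dim \mathbf C(G) = \abs{N} = \dim \mathbf C(\vec G)$, where $N$ is the set of non-forest edges. Your leaf-peeling argument spells out a step the paper only asserts, namely that a directed cycle-space element supported on a forest must vanish.
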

	For convenience of the reader, we include the proof of this result in  \zcref{appendix}.

	%%%%%%%%%%%%%%%%%%%%%%%%%%%%%%%%%%%%%%%%%%%%%%%%%%%%%%%%%%%%%%%%%%%%%%%
	\subsection{Regularity properties of random geometric graphs}
	
	It is known that random geometric graphs, in the limit $n \to \infty$, and $\eps \to 0$ at some minimal rate, approximate the continuum domain $\Omega$ well. In our application, we need the following regularity results, which hold with large probability: each point $x \in \Omega$ is at distance
$\mathcal O(\varepsilon )$ from the graph, the degrees $\deg_{n,\eps}$ scale proportional to $\frac1n$ (and in fact converge to the density $\rho$), and the graph satisfies a discrete Poincaré inequality, uniform in $n$ and $\eps$. We give the precise statements below.	
		
	Recall the definition of the random graph $\G_{n, \eps} = (\V_n, \E_{n,\eps}, \J_{n,\eps})$ from \zcref{sec:main_results}.
	For $h > 0$ we define the $h$-cube centered at $x \in h\Z^d$ to be
	\begin{equation}
		\square_h(x) := x + \left[-\frac{h}2, \frac{h}2\right)^d.
	\end{equation}
	Further, let 
	\begin{equation}\label{def:Lambda_h}
		\Lambda_h := h\Z^d \cap \Omega.
	\end{equation}
Later we will take $h$ proportional to $\varepsilon$.  The next lemma shows that each  $h$-cube contains at least $1\leq k\leq \mathcal( n h^d)$
points of $\V_n$ with large probability. Since the edges play no role in the statement, we only need to use  \zcref{ass:Omega,ass:rho}.
	\begin{lemma}\label{thm:each_square_has_k_points}
		Assume that $\Omega$ and $\rho$ satisfy \zcref{ass:Omega,ass:rho} respectively. Let $h > 0$ such that $n h^d \geq 1$. 
		Then there exists constants $c_1 \equiv c_1(\Omega)>0$, $c_2 \equiv c_2(\Omega, \rho_{\min})>0$ and $c_3 \equiv c_3(\Omega, \rho_{\min})>0$, such that for any $1\leq k \leq c_3 n h^d$,
		\begin{equation}
			\P\big(\text{for all } v \in \Lambda_h \colon \abs{\V_n \cap \square_h(v)} \geq k\big)
			\geq 1 - c_1 n \expa{-\frac{(c_2 n h^d - k+1)^2}{c_2 n h^d}}.
		\end{equation}
	\end{lemma}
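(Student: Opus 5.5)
The plan is a union bound over the cubes $\square_h(v)$, $v \in \Lambda_h$, combined with a Chernoff-type lower tail bound for a binomial random variable.

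For fixed $v \in \Lambda_h$, the count $N_v := \abs{\V_n \cap \square_h(v)}$ is distributed as $\Bin(n, p_v)$, where
\begin{equation}
p_v := \int_{\square_h(v) \cap \Omega} \rho(z) \d z \geq \rho_{\min} \abs{\square_h(v) \cap \Omega}.
\end{equation}
The first and main step is a uniform lower bound $\abs{\square_h(v)\cap\Omega} \geq c\, h^d$ for all $v \in h\Z^d \cap \Omega$ and all relevant $h$, with $c = c(\Omega) > 0$. This uses \zcref{ass:Omega}. A bounded Lipschitz domain satisfies a uniform interior cone condition: there is a cone of fixed aperture and height $r_0$ with vertex at any point of $\bar\Omega$ that is contained in $\Omega$. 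For $h \leq r_0$, the cone with vertex $v$ intersected with $\square_h(v)$ has volume at least $c h^d$. For $h > r_0$ the bound is only needed for finitely many scales up to $\diam\Omega$ (since $\Lambda_h$ is empty or tiny for large $h$), so a compactness argument, or shrinking $c$, handles it. This step is where the geometry enters, and I expect it to be the main obstacle, mostly in the bookkeeping near the boundary. It gives $\E N_v \geq 2c_2 n h^d$, where I set $c_2 := \tfrac12 c\rho_{\min}$ (the factor $\tfrac12$ is there to absorb constants).

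The second step is the tail estimate. The Chernoff bound for the lower tail gives, for $t \leq \mu := \E N_v$,
\begin{equation}
\P(N_v \leq \mu - t) \leq \expa{-\tfrac{t^2}{2\mu}}.
\end{equation}
Monotonicity in $p_v$ (stochastic domination of binomials) lets me replace $\mu$ by $\mu_0 := 2 c_2 n h^d$. I then apply the bound to $\P(N_v \leq k-1)$ with $t = \mu_0 - k + 1$. Choosing $c_3 \leq c_2$ ensures $t \geq c_2 n h^d - k + 1 \geq 1 > 0$. This yields
\begin{equation}
\P(N_v \leq k-1) \leq \expa{-\tfrac{(\mu_0-k+1)^2}{2\mu_0}} \leq \expa{-\tfrac{(c_2 n h^d - k + 1)^2}{c_2 n h^d}},
\end{equation}
after adjusting constants. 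One way to do this is to define $c_2$ smaller from the start: with $\mu_0 \geq 4 c_2' n h^d$ one has $(\mu_0 - k + 1)^2/(2\mu_0) \geq (c_2' n h^d - k + 1)^2/(c_2' n h^d)$, which follows by elementary algebra using $\mu_0 - k + 1 \geq \mu_0/2 + (c_2' n h^d - k + 1)$.

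Finally, the union bound. Since $\Omega$ is bounded, $\abs{\Lambda_h} \leq C(\Omega) h^{-d} \leq C(\Omega)\, n$, using $n h^d \geq 1$. Summing the failure probabilities over $v \in \Lambda_h$ gives the claim with $c_1 = C(\Omega)$.
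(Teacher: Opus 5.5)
Your proposal is correct and follows the paper's proof. The ingredients are the same: a union bound over $\Lambda_h$ with $|\Lambda_h|\le C(\Omega)\,n$, the binomial lower-tail bound with monotonicity in $p_v$, and a measure-density lower bound $p_v\ge c\,\rho_{\min}h^d$. The paper takes this last bound from the Sobolev-extension property of Lipschitz domains, while you derive it from the interior cone condition. Your constant bookkeeping is more careful than the paper's, which drops the factor $2$ in the Chernoff exponent.

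One caveat applies to both arguments. The bound $|\square_h(v)\cap\Omega|\ge c\,h^d$, and in fact the lemma itself, genuinely fails for $h\gg\diam\Omega$: if $0\in\Omega$, then $0\in\Lambda_h$ for every $h$, while $|\square_h(0)\cap\Omega|\le|\Omega|$. So your ``finitely many scales / compactness'' remark cannot rescue that regime. One should simply assume $h\le h_0(\Omega)$, which is all that is used later, where $h\propto\eps$.
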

	\begin{proof}
		We follow \cite[Lemma 5.20]{Calder:CalculusVariations2024} and assume without loss of generality that $\Omega \subset [0,\alpha_0]^d$, otherwise we translate.
		We use the following result: for $Z \sim \Bin(n, p)$ it holds
		\begin{equation}\label{eq:bin_bound}
			\P\left(Z \leq k\right) \leq \expa{-\frac{(np - k)^2}{2np}} \qquad \text{for all } 0\leq k \leq np.
		\end{equation} 
		For $v \in \Lambda_h$ let $Y_v := \abs{\V_n \cap \square_h(v)} = \sum_{i=1}^n \I_{X_i \in \square_h(v)}$ be the number of random points in $\square_h(v)$. Then
		\begin{equation}
			\mathcal A 
			:= \big\{\text{for all } v \in \Lambda_h \colon \abs{\V_n \cap \square_h(v)} \geq k\big\}
			= \bigcap_{v \in \Lambda_h} \set{Y_v \geq k  }.
		\end{equation}
		By independence of the trials we obtain that $Y_v \sim \Bin(n, p_v)$, where
		\begin{equation}
			p_v = \int_{\square_h(v) \cap \Omega} \rho \dx \geq c_0 \rho_{\min} h^d.
		\end{equation}
		The lower bound holds, for a constant $c_{0}=c_0(\Omega)\!>\!0$, because $\Omega$ is a Sobolev extension domain (cf. \cite{Hajlasz:SobolevEmbeddingsExtensions2008}).
			Applying \eqref{eq:bin_bound}, we can estimate for all $1\leq k \leq c_0 \rho_{\min} n h^d + 1$
		\begin{equation}
			\P(Y_v < k)=\P(Y_v \leq  k-1)
			\leq \expa{-\frac{(c_0 \rho_{\min} n h^d - k+1)^2}{c_0 \rho_{\min} n h^d}}.
		\end{equation}
		Hence,
		
\begin{equation}
	\P(\mathcal A) \geq 1 - \sum_{v \in V} \expa{-\frac{(c_0 \rho_{\min} n h^d - k+1)^2}{c_0 \rho_{\min} n h^d}} \geq 1 - \alpha_0^d h^{-d} \expa{-\frac{(c_0 \rho_{\min} n h^d - k+1)^2}{c_0 \rho_{\min} n h^d}} 
\end{equation}
		Using  $n h^d \geq 1$ we can further simplify the probability to
		\begin{equation}
			\P(\mathcal A) 
			\geq 1 - \alpha_0^d n \expa{-\frac{(c_0 \rho_{\min} n h^d - k+1)^2}{c_0 \rho_{\min} n h^d}} \\
		\end{equation}
		as claimed.
	\end{proof}

	We will need control on the random degrees $\deg_{n,\eps}$ defined in \eqref{eq:def_degree}. This result is a direct consequence of \cite[Proposition 3.26]{Bungert:ConvergenceRatesPoisson2024}, using that $\Omega$ is Lipschitz continuous.
	
	\begin{lemma}[]\label{thm:degrees_bounded}
		Assume that $\Omega$, $\eta$ and $\rho$ satisfy \zcref{ass:Omega,ass:eta, ass:rho} and let $\eps \in (0,1)$.
		There exist positive constants $c_1 \equiv c_1(\Omega, \eta, \rho_{\min})$, $c_2 \equiv c_2(\Omega, \eta, \rho_{\max})$ and $c_3 \equiv c_1(\eta(0), \rho_{\min}, \rho_{\max})$ such that 
		\begin{equation}\label{eq:degree_bound}
			\P\left(\text{for all } x \in \V_n \colon  n c_1 \leq \deg_{n,\eps}(x) \leq n c_2\right)
			\geq 1 - 2 n\expa{-c_3 n \eps^d}.
		\end{equation}
	\end{lemma}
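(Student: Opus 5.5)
We bound the degree of each sample point by conditioning on that point and applying a Chernoff bound to the remaining $n-1$ points. Fix $i \in \set{1,\dots,n}$ and condition on $X_i = x$. Then
\begin{equation}
	\deg_{n,\eps}(X_i) = \eta(0)\eps^{-d} + \sum_{j \neq i} \eta_\eps(\abs{x - X_j}),
\end{equation}
and the summands $Y_j := \eta_\eps(\abs{x-X_j})$, $j\neq i$, are i.i.d. We then take a union bound over $i$. This union bound produces the prefactor $n$, and the factor $2$ comes from treating the upper and lower deviations separately. Since $\J_{n,\eps}$ is invariant under $\eta\mapsto c\eta$, the precise normalization does not matter. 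The statement "$nc_1\le \deg \le nc_2$" should be read as $\deg_{n,\eps}(x)\asymp n$, with the $\eps^{-d}$ scaling of $\eta_\eps$ absorbed into the constants.

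\textbf{Step 1: the mean.} We compute
\begin{equation}
	\mathbb E[Y_j]=\int_\Omega \eta_\eps(\abs{x-y})\rho(y)\,dy=\int_{(\Omega-x)/\eps}\eta(\abs z)\rho(x+\eps z)\,dz.
\end{equation}
For the upper bound, we use $\rho\le\rho_{\max}$ and $\int_{\R^d}\eta(\abs{z})\,dz<\infty$. Integrability follows from the moment condition in \zcref{ass:eta} together with monotonicity: $\eta$ is bounded by $\eta(0)$ near $0$, and the moment controls the tail. For the lower bound, we need a uniform interior cone condition for $\Omega$, which holds since $\Omega$ is Lipschitz (\zcref{ass:Omega}). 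It implies that $(\Omega-x)/\eps\cap B_1$ contains a cone of fixed aperture for every $x\in\Omega$ and $\eps<1$. Since $\eta$ is non-increasing with $\eta(0)>0$ and continuous at $0$, we have $\eta\ge \eta(0)/2$ on some ball $B_{r_0}$. This yields $\mathbb E[Y_j]\ge c\,\rho_{\min}\eta(0)$ uniformly in $x$. This uniform-in-$x$ lower bound, which must hold up to the boundary, is where the Lipschitz assumption enters. It is the main geometric point of the argument.

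\textbf{Step 2: concentration.} The variables satisfy $0\le Y_j\le \eta(0)\eps^{-d}$ and
\begin{equation}
	\mathrm{Var}(Y_j)\le \mathbb E[Y_j^2]\le \eta(0)\eps^{-d}\,\mathbb E[Y_j]\lesssim \eps^{-d}.
\end{equation}
Bernstein's inequality applied to $\sum_{j\ne i}(Y_j-\mathbb E Y_j)$ at deviation $t=\tfrac12(n-1)\mathbb E[Y_j]$ gives a bound of the form
\begin{equation}
	\exp\!\left(-\frac{t^2}{2(n\sigma^2+bt/3)}\right)\le \exp(-c_3 n\eps^d),
\end{equation}
because every term in the denominator is $O(n\eps^{-d})$ while $t^2\sim n^2$. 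The constant $c_3$ depends only on $\eta(0)$ (through the bound $b$) and on $\rho_{\min},\rho_{\max}$ (through the mean and variance). This matches the dependence stated in the lemma.

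\textbf{Step 3: conclusion.} On the complement of the deviation event, we have $\tfrac12(n-1)\mu_- \le \deg_{n,\eps}(X_i)\le \tfrac32(n-1)\mu_+ + \eta(0)\eps^{-d}$. The extra diagonal term $\eta(0)\eps^{-d}$ is at most of order $n$ whenever $n\eps^d\gtrsim1$. Otherwise the claimed probability bound $1-2n e^{-c_3 n\eps^d}$ is negative for suitably chosen constants, and the statement is vacuous. We then integrate the conditional bound over $x$, union-bound over $i$ and over the two tails, and adjust constants to obtain \eqref{eq:degree_bound}.

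The only delicate step is the boundary-uniform lower bound on the mean in Step 1, which requires the cone condition. The rest is a routine Bernstein estimate. Equivalently, the whole lemma can be quoted from \cite[Proposition 3.26]{Bungert:ConvergenceRatesPoisson2024}.
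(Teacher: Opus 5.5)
Your proof is correct, but it is not what the paper does. The paper gives no argument: it simply declares the lemma a direct consequence of \cite[Proposition 3.26]{Bungert:ConvergenceRatesPoisson2024} together with the Lipschitz regularity of $\Omega$, which is the option you mention in your last sentence. Your self-contained route is a standard concentration argument and presumably what underlies the cited proposition. You condition on $X_i=x$. You bound the off-diagonal mean $\int_{(\Omega-x)/\eps}\eta(\abs z)\rho(x+\eps z)\,dz$ from above by integrability of $\eta$, and from below, uniformly up to $\partial\Omega$, by the interior cone condition. You then apply Bernstein with $\mathrm{Var}(Y_j)\le\eta(0)\eps^{-d}\,\mathbb E[Y_j]$ and take a union bound over $i$ and the two tails.

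What your route buys is transparency. It shows exactly where the Lipschitz boundary is used. It also exposes something the statement leaves implicit: because of the diagonal term $\eta(0)\eps^{-d}$, the upper bound $\deg_{n,\eps}\le c_2 n$ can only hold when $n\eps^d\gtrsim 1$. Your fix is right: take $c_3\le\log 2$, so that $1-2ne^{-c_3 n\eps^d}\le 0$ whenever $n\eps^d<1$. The citation buys brevity but hides the constants.

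Two corrections.

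First, nothing about the $\eps^{-d}$ in $\eta_\eps$ needs to be \emph{absorbed into the constants}. As your own Step 1 shows, that factor exactly compensates the volume $\eps^d$ of the interaction region, so each off-diagonal summand already has mean of order one. Only the diagonal term needs the $n\eps^d\gtrsim 1$ argument.

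Second, and more substantively, your $c_3$ does not depend only on $\eta(0),\rho_{\min},\rho_{\max}$, contrary to what you claim. The Bernstein exponent is $(n-1)\eps^d$ times a coefficient involving the mean bounds $\mu_\pm$ from Step 1. These depend on $\Omega$ (cone aperture and height) and on the profile of $\eta$ near the origin. This is unavoidable, so the dependence asserted in the lemma is itself too optimistic. Take $\eta=\I_{[0,\delta]}$: then $\eta(0)=1$ for every $\delta$. Yet a given sample has no other sample within distance $\delta\eps$ with probability at least $e^{-C\delta^d n\eps^d}$. On that event its degree is $\eps^{-d}<c_1 n$ once $n\eps^d$ is large. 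Choosing $n\eps^d=\sqrt n$ then forces $c_3\lesssim\delta^d$. So you should state your constant's dependence honestly rather than claim it matches the lemma.
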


		Lastly, we will use the following probabilistic discrete Poincaré inequality.
	\begin{proposition}[Discrete Poincare inequality, {\cite[Proposition 3.30]{Bungert:ConvergenceRatesPoisson2024}}]\label{thm:discrete_poincare}
		Assume that $\Omega$, $\eta$ and $\rho$ satisfy \zcref{ass:Omega,ass:eta, ass:rho}. Additionally, assume that $\supp \eta \subset [0,1]$ and that $\norm{\eta}_{L^1([0,1])} = 1$.
		
		There exist positive constants $c_0 \equiv c_0(\Omega, \eta, \rho)$, $c_1 \equiv c_1(\Omega)$, $c_2 \equiv c_2(\Omega, \eta(0), \rho_{\min}, \rho_{\max})$ and $\eps_0 \equiv \eps_0(\Omega, \rho, \eta)$, such that for any $n \in \N$, $n^{-\frac1d} < \varepsilon \leq \eps_0$ the event that
		\begin{equation}\label{eq:discrete_poincare}
			\norm{u - (u)_{\deg_{n,\eps}}}_2^2
			\leq \frac{c_0}{\sigma_\eta (n-1) \varepsilon^2} 
			\sum_{\set{x,y} \in \E_{n,\eps}} \eta_\eps(\abs{x-y}) \left({u(x) - u(y)}\right)^2
		\end{equation}
		holds for all $u \in \ell^2({\G}_{n,\eps})$ has probability at least $1 - c_1 n e^{-c_2 n \eps^d}$. Here
		\begin{equation}\label{eq:weighted_mean_degree}
			(u)_{\deg_{n,\eps}} := \frac{\sum_{x \in \V_n} u(x) \deg_{n,\eps}(x)}{\sum_{x \in \V_n} \deg_{n,\eps}(x)}
		\end{equation}
		is the degree-weighted mean of $u$.
	\end{proposition}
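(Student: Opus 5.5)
The statement is quoted from \cite[Proposition 3.30]{Bungert:ConvergenceRatesPoisson2024}, so in the paper it suffices to check that our hypotheses match theirs. If one wants a self-contained argument, I would follow the standard nonlocal-to-local route, using \zcref{thm:each_square_has_k_points} and \zcref{thm:degrees_bounded} as the probabilistic input. The idea is to reduce the discrete inequality to a continuum Poincaré inequality on the Lipschitz domain $\Omega$ (\zcref{ass:Omega}) via a piecewise-constant extension on a cube partition at scale $h \sim \eps$.

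\textbf{Step 1 (good event).} Fix $h = c\eps$ with $c$ small, so that any two points in adjacent $h$-cubes are at distance at most $(2\sqrt d)h < \eps$. Then $\eta_\eps(\abs{x-y}) \geq \eps^{-d}\eta(2\sqrt d c)$, which is bounded below since $\eta$ is non-increasing and continuous at $0$ with $\eta(0)>0$. By \zcref{thm:each_square_has_k_points} with $k = \lceil \tfrac12 c_3 n h^d\rceil$, every cube $\square_h(v)$, $v\in\Lambda_h$, contains between $\sim nh^d$ points. The upper bound on the counts comes from an analogous upper-tail Chernoff bound. By \zcref{thm:degrees_bounded}, all degrees are also $\asymp n$. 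Since $n\eps^d>1$, these events together have probability at least $1-c_1ne^{-c_2n\eps^d}$. Cubes only partially inside $\Omega$ are handled by merging them with interior neighbours; for Lipschitz $\Omega$ a bounded number of merges suffices.

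\textbf{Step 2 (coarse-graining).} On the good event, write $\bar u(v)$ for the average of $u$ over $\V_n\cap\square_h(v)$.
\begin{itemize}
\item \emph{Within-cube variance.} Since all points in a cube are pairwise connected with weight $\gtrsim\eps^{-d}$ and the cube has $\sim nh^d$ points, $\sum_{x\in\square_h(v)}(u(x)-\bar u(v))^2 \lesssim (nh^d)^{-1}\sum_{x,y\in\square}(u(x)-u(y))^2 \lesssim \frac{\eps^d}{n h^d}\sum \eta_\eps (u(x)-u(y))^2$.
\item \emph{Neighbouring cubes.} Averaging over all pairs across two adjacent cubes gives $(\bar u(v)-\bar u(w))^2 \lesssim (nh^d)^{-2}\eps^{d}\sum_{x\in\square(v),\,y\in\square(w)}\eta_\eps(u(x)-u(y))^2$.
\end{itemize}

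\textbf{Step 3 (lattice Poincaré).} The cube graph on $\Lambda_h$ approximates a Lipschitz domain at scale $h$, so it satisfies a uniform Poincaré inequality with constant $\sim h^{-2}$ relative to the nearest-neighbour energy. Combining this with Step 2 and the degree bounds, which make the degree-weighted mean comparable to the cube-weighted mean and the optimal-constant minimiser, gives
\[
\norm{u-(u)_{\deg}}_2^2 \lesssim nh^d \sum_v(\bar u(v)-\bar u)^2 + (\text{within-cube}) \lesssim \frac{1}{n\eps^2}\sum\eta_\eps(u(x)-u(y))^2 .
\]
Tracking powers: each term scales as $nh^d\cdot h^{-2}\cdot(nh^d)^{-2}\eps^d\asymp (n\eps^2)^{-1}$. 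The factor $\sigma_\eta$ is a fixed constant absorbed into $c_0$.

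\textbf{Main obstacle.} The delicate step is the uniform Poincaré inequality for the coarse cube graph near $\partial\Omega$, together with the cut boundary cubes. I would first try a bi-Lipschitz flattening of the boundary into finitely many charts, where the cube graph becomes comparable to a box in $\Z^d$ whose Poincaré constant is explicit. Failing that, I would use the Sobolev extension property of $\Omega$ applied to a piecewise-linear interpolation of $\bar u$.
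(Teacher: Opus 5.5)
\textbf{The citation route.} Your primary route is the paper's route. The paper gives no proof; it imports the proposition from \cite[Proposition 3.30]{Bungert:ConvergenceRatesPoisson2024}. The only translation it makes is recorded in the remark after the statement: its $\norm{\cdot}_2$ is not divided by $n$, so the cited inequality is multiplied through by $n$. You should mention this normalization change alongside your hypothesis check.

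\textbf{The self-contained sketch.} This is a genuinely different route, since the paper has none, and its skeleton is sound. The occupancy and degree lemmas already in the paper give cells with $\asymp nh^d$ points. Your within-cell and between-cell Jensen bounds and your power counting are correct. The passage from the uniform mean to $(u)_{\deg_{n,\eps}}$ only needs the two-sided degree bounds. Its payoff would be independence from the reference.

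\textbf{Where it fails.} The step you flag as delicate is false in the form you state it. The graph on $\Lambda_h=h\Z^d\cap\Omega$ with nearest-neighbour edges need not be connected for a Lipschitz $\Omega$, however small $h$ is. Take $\Omega\subset\R^2$ a polygon with a vertex at the origin, whose opening angle $\alpha<2\arctan\frac13$ is bisected by the diagonal. Then for every small $h$:
\begin{itemize}
\item the point $(h,h)$ lies in $\Lambda_h$, and $\abs{\square_h((h,h))\cap\Omega}$ is of order $h^2$;
\item its four lattice neighbours $(0,h)$, $(h,0)$, $(2h,h)$, $(h,2h)$ all lie outside $\Omega$.
\end{itemize}
So the coarse Poincar\'e constant is infinite, even though $\G_{n,\eps}$ is well connected there. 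Bi-Lipschitz flattening cannot fix this as long as you keep the cubes $\square_h(v)$ with nearest-neighbour adjacency. Your ``merging'' remark points the right way, but it has to be built into the adjacency itself.

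\textbf{A workable repair.}
\begin{enumerate}
\item Assign each $x\in\V_n$ to the cell of $\pi_h(x)$; the paper's projection lemma gives $\abs{\pi_h(x)-x}\le c_0h$.
\item Join $v,w\in\Lambda_h$ whenever $\abs{v-w}\le Ch$, with $C=C(\Omega)$. Connectivity then follows as in the paper's curve-discretization lemma.
\item Take $h/\eps$ small enough that such cells are still linked in $\G_{n,\eps}$ with weights $\gtrsim\eps^{-d}$.
\item Obtain the $h^{-2}$ constant by applying a nonlocal Poincar\'e inequality on $\Omega$ at scale $h$ to the piecewise-constant function $\bar u\circ\pi_h$.
\end{enumerate}
That continuum inequality is exactly the nontrivial input hidden behind the citation. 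Until it is supplied, your sketch is an outline rather than a proof.
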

	\begin{remark}
		We remark that in the definition of $\norm{\cdot}_2$ we follow a slightly different convention than the paper \cite{Bungert:ConvergenceRatesPoisson2024}, in that we do not scale it with $\frac1n$. Therefore, the original proposition has an additional factor $\frac1n$ on the right hand side of \eqref{eq:discrete_poincare}.
		
		Moreover, we note that 
		\begin{equation}
			\frac1{\sigma_\eta n \varepsilon^2} 
			\sum_{\set{x,y} \in \E_{n,\eps}} \eta_\eps(\abs{x-y}) \left({u(x) - u(y)}\right)^2
			= (u, -\tilde\Delta_{\G_{n, \eps}} u),
		\end{equation}
		where $-\tilde \Delta_{\G_{n, \eps}}$ is the graph Laplacian with respect to the coupling constants
		\begin{equation}	
			\tilde\J_{n,\eps}(x, y) := 
			\frac{\eta_\eps(\abs{x - y})}{\sigma_\eta n \varepsilon^2}, \quad x,y \in \V_n.
		\end{equation}
		Then \eqref{eq:discrete_poincare} gives a lower bound on the first non-zero eigenvalue of $-\tilde\Delta_{\G_{n, \eps}}$, which is one justification for the name ``Poincaré inequality'' of \zcref{thm:discrete_poincare}. We use this estimate in \zcref{sec:random_graphs} to prove bounds on the effective resistance of $\G_{n, \eps}$.
	\end{remark}

	\section{Order in deterministic graphs}
	\label{sec:finite_graphs}
	
	Let $G=(V,E,J)$ be any finite, undirected graph. In this section we  derive a lower bound on the two-point correlation function $\cos\theta_x-\theta_y$ in the $XY$ measure on $G$ defined in \eqref{eq:def_$XY$_on_G} in terms of the effective resistance of the graph.
	Recall the definition of the graph Laplacian \eqref{eq:def_graph_laplacian} \eqref{eq:def_graph_laplacian1}
	\begin{equation}
		(v, -\Delta_G v) = \sum_{\set{i,j} \in E} J(i,j) \left({v(i) - v(j)}\right)^2, \qquad v \in \ell^2(G).
	\end{equation}
	
	\begin{definition}[Effective resistance]\label{def:effres} 
		The \emph{effective resistance} of $G$ is given by
		\begin{equation}\label{eq:def_effective_resistance}
			\mathbf{R}_{G}:= \sup_{u,v\in V} \left(e_u - e_v, \left(-\Delta_{G}\right)^{-1} \left(e_u - e_v\right) \right),
		\end{equation}	
		where for $x \in V$ we define $e_x \in \ell^2(G)$ by $e_x(v) = 1$ if $v = x$ and $0$ otherwise.
		$\mathbf{R}_{G}$ is well defined, because $e_{u}-e_{v}\in (\ker (-\Delta_{G}))^{\perp}$ (cf. \eqref{eq:def_graph_laplacian1}).
	\end{definition}
	
	\begin{remark}
		The effective resistance between two points $x,y \in V$ has a close connection with the hitting time of $y$ by a simple random walk on $G$ started in $x$, cf. \cite{Lyons:ProbabilityTreesNetworks2016,Doyle:RandomWalksElectric1984}.
	\end{remark}
	Recall  the definition of  algebraic irreducibility 
\eqref{eq:defirr}.	The main result of this section is the following theorem.
	
	%%%%%%%%%%%%%%%%%%%%%%%%%%%%%%%%%%%%%%%%%%%%%%%%%%%%%%%%%%%%%%%%%%%%%%%%%%%%%%%%%%%
	\begin{theorem}\label{thm:expectation_bound_fixed_graph}
	 Let $G=(V, E, J)$ be a finite and connected graph and 	 $0 < \kappa <  \frac{\pi}{\irr(G) }$. Let
		$J_0 := \min\set{J(i,j) \mid \set{i,j} \in E} > 0$ as in \eqref{def:J0}. 
		Then there exist a positive constant $c_0$ independent of $G$ and $\kappa$, such that for all $\beta > 0$ and all $x, y \in V$.
		\begin{equation}	
			\sprod{\cos(\theta_x - \theta_y)}{G;\beta} 
			\geq 1
			- \tfrac{\mathbf{R}_{G}}{2 \beta \cos \kappa}	- c_0 \abs{E} \, e^{-\frac{\beta J_0 \kappa^2}{\pi}}.
			\end{equation}
	\end{theorem}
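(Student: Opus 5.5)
We split according to the event $\mathcal N := \set{\cos(\theta_u-\theta_v) > \cos\kappa \text{ for all } \set{u,v}\in E}$. On $\mathcal N^c$ we use the trivial bound $\cos(\theta_x-\theta_y)\ge -1$, and \zcref{th:NNalign} gives
\begin{equation}
\sprod{\cos(\theta_x-\theta_y)}{G;\beta} \geq \sprod{\cos(\theta_x-\theta_y)\I_{\mathcal N}}{G;\beta} - c_0\abs{E}e^{-\beta J_0\kappa^2/\pi}.
\end{equation}
This produces the last error term, after adjusting $c_0$ by a factor $2$ to absorb the loss from the $-1$ versus $1$ and the normalisation. The remaining task is to control the measure restricted to $\mathcal N$.

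On $\mathcal N$, choose for each oriented edge $(i,j)$ the representative $\phi_{ij}\in(-\kappa,\kappa)$ of $\theta_j-\theta_i$ mod $2\pi$. The key claim is that $\phi$ is a gradient, i.e.\ there is a lift $\tilde\theta\colon V\to\R$ with $\tilde\theta_j-\tilde\theta_i=\phi_{ij}$ and $\tilde\theta\equiv\theta$ mod $2\pi$. It suffices that the circulation of $\phi$ around every element of a basis of $\mathbf C(\vec G)$ vanishes.

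For this we take a cycle basis $A_1,\dots,A_m$ of $\mathbf C(G)$ realising $\irr(G)$ and lift each $A_k$ to a $\pm1$ cycle $\vec f_k$; by \zcref{thm:lifting_basis} these lifts form a basis of $\mathbf C(\vec G)$. Each $\vec f_k$ decomposes into edge-disjoint directed cycles, and going around any one of them the $\phi$'s sum to a multiple of $2\pi$. That sum is bounded in absolute value by $\abs{A_k}\kappa\le\irr(G)\kappa<\pi$, hence it is $0$. Every integer circulation is a rational combination of the $\vec f_k$, so the circulation of $\phi$ vanishes on all of $\mathbf C(\vec G)$, and $\phi$ is a potential difference. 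This is exactly where the hypothesis $\kappa<\pi/\irr(G)$ enters.

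Next we compare with a Gaussian-type convex measure. On $\mathcal N$ each edge satisfies $\abs{\phi_{ij}}<\kappa$, and there
\begin{equation}
\cos\phi \le 1-\tfrac{\cos\kappa}{2}\phi^2 .
\end{equation}
Hence the Hamiltonian $H(\tilde\theta)=\beta\sum J_{ij}(1-\cos(\tilde\theta_i-\tilde\theta_j))$ has Hessian bounded below by $\beta\cos\kappa\,(-\Delta_G)$ on the convex region $\mathcal K=\set{\abs{\tilde\theta_i-\tilde\theta_j}\le\kappa\ \forall\{i,j\}\in E}$, with $\tilde\theta_{x_0}$ pinned.

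Conditionally on $\mathcal N$, the law of $\tilde\theta$ (modulo global rotation) is $e^{-H}$ restricted to the convex set $\mathcal K$. It is therefore log-concave with potential at least as convex as the Gaussian with covariance $(\beta\cos\kappa(-\Delta_G))^{-1}$. The Brascamp--Lieb inequality, which remains valid with restriction to a convex set, gives
\begin{equation}
\mathrm{Var}(\tilde\theta_x-\tilde\theta_y)\le \frac{(e_x-e_y,(-\Delta_G)^{-1}(e_x-e_y))}{\beta\cos\kappa}\le\frac{\mathbf R_G}{\beta\cos\kappa}.
\end{equation}
By the symmetry $\tilde\theta\mapsto-\tilde\theta$, which preserves both $\mathcal K$ and $H$, the mean of $\tilde\theta_x-\tilde\theta_y$ is $0$. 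Combining with $\cos t\ge1-t^2/2$ yields $\sprod{\cos(\theta_x-\theta_y)\mid\mathcal N}{} \ge 1-\mathbf R_G/(2\beta\cos\kappa)$.

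Finally we unwind the conditioning:
\begin{equation}
\sprod{\cos(\theta_x-\theta_y)\I_{\mathcal N}}{} \ge \mu(\mathcal N)\Big(1-\tfrac{\mathbf R_G}{2\beta\cos\kappa}\Big),
\end{equation}
and $\mu(\mathcal N)\ge 1-c_0\abs{E}e^{-\cdots}$ by \zcref{th:NNalign}.

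The main obstacle is making the change of variables rigorous: the map $\theta\mapsto\tilde\theta$ on $\mathcal N$ must be a bijection (modulo global rotation) onto $\mathcal K$ with unit Jacobian, which needs the potential argument above. One must also justify Brascamp--Lieb on the convex set $\mathcal K$, for instance by approximating $\I_{\mathcal K}$ with convex penalties.
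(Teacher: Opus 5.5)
Your proposal is correct and follows essentially the same route as the paper. You discard the bad-edge event with the nearest-neighbour alignment bound, lift $\theta$ to a real field via a short cycle basis and its directed liftings (the paper's well-definedness of $\Psi$ and its change of variables), and apply Brascamp--Lieb with Hessian bound $\cos\kappa\,(-\Delta_G)$ on the convex constraint set, using evenness and $\cos t\ge 1-t^2/2$. The differences are bookkeeping: you pin one vertex, so the grounded Laplacian inverse gives exactly the effective resistance, whereas the paper adds a small mass $m$ and box penalties $(\vartheta_i/a_i)^{2p}$ and removes them at the end; either way, the penalties replacing $\I_{\mathcal K}$ must be strongly convex enough outside $\mathcal K$ to compensate for $\cos(\vartheta_i-\vartheta_j)<\cos\kappa$ there, as the paper's $2p$-th powers are.
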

%%%%%%%%%%%%%%%%%%%%%%%%%%%%%%%%%%%%%%%%%%%%%%%%%%%%%%%%%%%%%%%%%%%%%%%%%%%%%%%%%%%%%%%%%%%%%%%%%%%%%5

	To prove the result we need the following definition.	
%%%%%%%%%%%%%%%%%%%%%%%%%%%%%%%%%%%%%%%%%%%%%%%%%%%%%%%%%%%%%%%%%%%%%%%%%%%	
	\begin{definition}
		Let $\theta \in \R^V$. For $u,v \in V$ we define the \emph{gradient}
		\begin{equation}
			\nabla_{uv} \theta := \theta_v - \theta_u
		\end{equation}
		and the \emph{periodic gradient}
		\begin{equation}
			\nabla^{\per}_{uv} \theta := \nabla_{uv} \theta + 2 \pi k_{uv} \quad \text{where $k_{uv} \in \Z$ such that} \quad \nabla_{uv} \theta + 2 \pi k_{uv} \in [-\pi, \pi).
		\end{equation}
		Further, for $\kappa > 0$, define
		\begin{equation}
			\mathcal{G}_\kappa := \set{\omega \in \R^E \mid \abs{\omega(e)} < \kappa \text{ for all } e \in E},
		\end{equation}
		which we call the $\kappa$-good edge configurations. 
	\end{definition}
%%%%%%%%%%%%%%%%%%%%%%%%%%%%%%%%%%%%%%%%%%%%%%%%%%%%%%%%%%%%%%%%%
	\begin{remark}\label{remark:symmetry_gradient}
		While $\nabla_{uv} \theta=-\nabla_{vu} \theta$ holds for all $u,v\in V$, we have $\nabla^{\per}_{uv} \theta =-\nabla^{\per}_{vu} \theta $
		only when $\nabla^{\per}_{uv} \theta\in (-\pi,\pi )$, while $\nabla^{\per}_{uv} \theta =\nabla^{\per}_{vu} \theta$ when $\nabla^{\per}_{uv} \theta=-\pi$.
	Since $|\nabla^{\per}_{uv} \theta|=|\nabla^{\per}_{vu} \theta|$, 
		by a small abuse of notation, we will write $\gradper \theta \in \mathcal{G}_\kappa$ if $\abs{\gradper_{uv}\theta } \leq \kappa$ for all
		$\set{u,v} \in E$. The same applies when writing $\nabla\theta\in \mathcal{G}_\kappa$.
	\end{remark}
%%%%%%%%%%%%%%%%%%%%%%%%%%%%%%%%%%%%%%%%%%%%%%%%%%%%%%%%%%
	\begin{proof}[Proof of \zcref{thm:expectation_bound_fixed_graph}]
		We argue
	\begin{align}
		\sprod{1-\cos(\theta_x - \theta_y)}{G;\beta} &=\sprod{1 - \cos(\theta_x - \theta_y) \mid \gradper \theta \in \mathcal{G}_\kappa}{G;\beta}\ 
		\mu_{G;\beta}\big(\gradper \theta \in \mathcal{G}_\kappa\big)\nonumber\\
		&+
		\sprod{1 - \cos(\theta_x - \theta_y) \mid \gradper \theta \notin \mathcal{G}_\kappa}{G;\beta}\ 
		\mu_{G;\beta}\big(\gradper \theta \notin \mathcal{G}_\kappa\big)\nonumber\\
		&\leq \sprod{1 - \cos(\theta_x - \theta_y) \mid \gradper \theta \in \mathcal{G}_\kappa}{G;\beta}+2\mu_{G;\beta}\big(\gradper \theta \notin \mathcal{G}_\kappa\big).
	\end{align}
	The second term is exponentially small by known results 
	on nearest neighbor alignement (cf. \zcref{th:NNalign}). 
	The rest of the proof is devoted to show the bound
	\begin{equation}	
			\sprod{1 - \cos(\theta_x - \theta_y) \mid \gradper \theta \in \mathcal{G}_\kappa}{G;\beta} 
			\leq \frac{\left(e_x - e_y, \left(-\Delta_{G}\right)^{-1} \left(e_x - e_y\right) \right)}{2 \beta \cos \kappa}
			\leq \frac{\mathbf{R}_{G}}{2 \beta \cos \kappa}.
	\end{equation}
	Our strategy is to apply the Brascamp--Lieb inequality to the conditional measure. However, since the conditioning event
	is non-convex, we must first replace it by a constraint on the real gradient $\grad \theta$. This will require to change the measure,
	which we will do by a change of variables. The construction is inspired by the work of Newman and Wu \cite{Newman:GaussianFluctuationsClassical2018} where they use a coupling argument instead.
 	Fix an arbitrary point $x_* \in V$ and define
		\begin{equation}\label{def:K}
			\mathcal K := \set{k \in  \Z^V \colon k_{x_*} = 0 \text{ and } k_i - k_j \in \set{-1, 0, 1} \text{ if } \set{i,j} \in E}.
		\end{equation}
	and 
	\begin{equation}\label{def:ai}
		a_i := \max_{k \in \mathcal K} 2\pi \abs{k_i} + \pi,\quad i\in V.
	\end{equation}
		In \zcref{thm:change_of_variables} we prove the following change of variables formula
		\begin{equation}\label{eq:changecoord}
				\int_{[-\pi,\pi  )^{V}} f(\theta)\ \I_{\mathcal{G}_\kappa}(\gradper \theta) \d \theta 
				= \int_{\times_{i\in V}[-a_{i},a_{i})} f(\vartheta)\  \I_{\mathcal{G}_\kappa}(\grad \vartheta) \d \vartheta 
			\end{equation}
			for any $2\pi$-periodic function $f$, and $0 < \kappa < \frac{2\pi}{\irr(G) }$.
	Hence
	\begin{equation}
		\sprod{1 - \cos(\theta_x - \theta_y) \mid \gradper \theta \in \G_\kappa}{G; \beta}
		= \frac{\int_{\times_{i\in V}[-a_{i},a_{i})}  (1-\cos(\vartheta_x - \vartheta_y)) e^{-\beta H(\vartheta)} \I_{\mathcal{G}_\kappa}(\grad \vartheta)\d \vartheta}{\int_{\times_{i\in V}[-a_{i},a_{i})} e^{-\beta H(\vartheta)} \I_{\mathcal{G}_\kappa}(\grad \vartheta)\d \vartheta},
	\end{equation}
	where
	\begin{equation}
		H(\vartheta)
		:= -\sum_{\set{i,j} \in E} J(i,j) \cos(\theta_i - \theta_j).
	\end{equation}
	We show that we can approximate the measure by a log-convex one defined on $\R^V$, which will allow the application of the Brascamp--Lieb inequality.
	Namely, define for $m,p > 0$
	\begin{equation}
		H_{m,p}(\vartheta)
		:= -\sum_{\set{i,j} \in E} J(i,j) \left(\cos(\vartheta_i - \vartheta_j) - \left(\tfrac{\vartheta_i -\vartheta_j}{\kappa}\right)^{2p}\right) + \sum_{i \in V} \left(\tfrac{m \cos \kappa}2 \vartheta_i^2 + \left(\tfrac{\vartheta_i}{a_i}\right)^{2p}\right),
	\end{equation}
	where the $a_k$ are given in \eqref{def:ai}. 
		The $p$ terms approximate the constraints on $\vartheta_{V}$ since 
	\begin{equation}
		e^{-\beta\left(\frac{\vartheta_i - \vartheta_j}{\kappa}\right)^{2p}} \to \I_{\abs{\vartheta_i - \vartheta_j} \leq \kappa} 
		\quad\text{and}\quad
		e^{-\beta\left(\frac{\vartheta_i}{a_i}\right)^{2p}} \to \I_{\abs{\vartheta_i} \leq a_i}
	\end{equation}
	pointwise almost everywhere as $p \to \infty$. The $m$ term acts like a mass making the Hessian strictly convex.
	Therefore, 
	\begin{equation}
		\sprod{1 - \cos(\theta_x - \theta_y) \mid \gradper \theta \in \G_\kappa}{G;\beta}
		= \lim_{m \to 0} \lim_{p \to \infty} \frac{\int_{\R^V}  (1-\cos(\vartheta_x - \vartheta_y)) e^{-\beta H_{m,p}(\vartheta)}\d \vartheta}{\int_{\R^V} e^{-\beta H_{m,p}(\vartheta)}\d \vartheta},
	\end{equation}
	where we use dominated convergence to exchange the pointwise limits with the integral. Let $H_{m,p}''$ be the Hessian of $H_{m,p}$. For $w, \vartheta \in \R^V$ and $p^2 \geq \kappa^2(1+\cos \kappa)$, we obtain
	\begin{equation}\begin{aligned}
		(w, H_{m,p}''(\vartheta) w)
		&= \sum_{\set{i,j} \in E} J(i,j) \left(\cos(\vartheta_i - \vartheta_j) + \tfrac{2p(2p-1)}{\kappa^{2p}}\left(\vartheta_i -\vartheta_j\right)^{2p-2}\right)(w_i - w_j)^2 +\\ 
		&\qquad	+ \sum_{i \in V} \left(m \cos \kappa + \tfrac{2p(2p-1)}{a_i^{2p}}\left(\vartheta_i\right)^{2p-2}\right) w_i^2 \\
		&\geq \cos \kappa \sum_{\set{i,j} \in E} J(i,j)(w_i-w_j)^2
		+ m \cos \kappa \sum_{i \in V} w_i^2 =  \cos\kappa\  \big(w, \left(-\Delta_{G} + m \Id \right) w\big),
	\end{aligned}\end{equation}
	where we used $\frac{\left(\vartheta_i -\vartheta_j\right)^{2p-2}}{\kappa^{2p-2}}\! \geq \! 1$ if $\abs{\vartheta_i -\vartheta_j} \geq \kappa$, while $\cos(\vartheta_i -\vartheta_j) \geq \cos\kappa$ if $\abs{\vartheta_i -\vartheta_j} < \kappa$. Hence, as quadratic forms,
	\begin{equation}
		H_{m,p}''(\vartheta) \geq \cos\kappa \ (-\Delta_{G} + m \Id) > 0.
	\end{equation}
	Applying the Brascamp--Lieb inequality \cite{Brascamp:ExtensionsBrunnMinkowskiPrekopaLeindler1976,Helffer:SemiclassicalAnalysisWitten2002,Johnsen:SpectralPropertiesWittenLaplacians2000} to $f(\vartheta) := \vartheta_x - \vartheta_y$ we obtain
	\begin{equation}
		\begin{aligned}
			\frac{\int_{\R^V}  (1-\cos(\vartheta_x - \vartheta_y)) e^{-\beta H_{m,p}(\vartheta)}\d \vartheta}{\int_{\R^V}  e^{-\beta H_{m,p}(\vartheta)}\d \vartheta}  
		&	\leq \frac12 \frac{\int_{\R^V} \left(\vartheta_x - \vartheta_y\right)^2 e^{-\beta H_{m,p}(\vartheta)} \d \vartheta}{\int_{\R^V} e^{-\beta H_{m,p}(\vartheta)} \d \vartheta} \nonumber\\
			&\leq \frac{1}{2 \beta \cos\kappa} \left(e_x - e_y, \left(-\Delta_{G} + m \Id\right)^{-1}(e_x - e_y)\right).\label{eq:apply_BL}
		\end{aligned}
	\end{equation}
	The addition of the small mass $m$ makes the operator $(-\Delta_{G} + m \Id)$ positive and, since it is just a matrix, invertible.
	To remove the mass, note that $-\Delta_{G} + m \Id> -\Delta_{G} >0$ on  $(\ker ( -\Delta_{G}))^{\perp}$.
	Therefore,
	\begin{equation}
		\left(e_x - e_y, \left(-\Delta_{G} + m \Id\right)^{-1}(e_x - e_y)\right)
		\leq \left(e_x - e_y, \left(-\Delta_{G}\right)^{-1}(e_x - e_y)\right)
		\leq \mathbf{R}_G,
	\end{equation}
	concluding the proof.
	\end{proof}
%%%%%%%%%%%%%%%%%%%%%%%%%%%%%%%%%%%%%%%%%%%%%%%%%%%%%%%%%%%%%%%%%%%%%%%%%%%%%%%%%%%%%%%%%%%%%%

	The rest of this section is devoted to prove the change of coordinates formula \eqref{eq:changecoord}. 

	\begin{proposition}\label{thm:change_of_variables}
		Let $G = (V, E)$ be a finite and connected graph and let $0 < \kappa < \frac{2\pi}{\irr(G)}$. For any $f: \R^V \to \R$ which is $2\pi$-periodic it holds 
		\begin{equation}
			\int_{[-\pi, \pi)^V} f(\theta)\  \I_{\mathcal{G}_\kappa}(\gradper \theta) \d \theta 
			= \int_{\times_{i\in V}[-a_{i},a_{i})}  f(\vartheta)\  \I_{\mathcal{G}_\kappa}(\grad \vartheta) \d \vartheta .
		\end{equation}
	\end{proposition}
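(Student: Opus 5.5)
The plan is to build an explicit piecewise-translation bijection between the two integration domains. Let $A := \set{\theta \in [-\pi,\pi)^V \colon \gradper\theta \in \mathcal G_\kappa}$ and let $B := \set{\vartheta \in \times_{i\in V}[-a_i,a_i) \colon \grad\vartheta \in \mathcal G_\kappa}$. The map $\Phi \colon A \to B$ should have the form $\Phi(\theta) = \theta + 2\pi k(\theta)$ with $k(\theta) \in \mathcal K$. If $\Phi$ is a bijection and $k$ is piecewise constant on measurable pieces, then $\Phi$ preserves Lebesgue measure. Since $f$ is $2\pi$-periodic, $f(\Phi(\theta)) = f(\theta)$, and the identity follows.

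\textbf{Step 1 (no vortices).} Fix $\theta \in A$. Since $\kappa < \pi$, we have $\gradper_{uv}\theta = -\gradper_{vu}\theta$ on every edge (cf. \zcref{remark:symmetry_gradient}). Fix an orientation $\vec G$ and set $\omega(\vec e) := \gradper_{uv}\theta$ for $\vec e = (u,v)$. For any directed cycle $\vec g$ define
\begin{equation}
\Phi_\theta(\vec g) := \sum_{\vec e} \vec g(\vec e)\, \omega(\vec e).
\end{equation}
This is a linear functional on $\mathbf C(\vec G)$.

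Along a closed walk the ordinary gradients $\grad\theta$ telescope to $0$. Because $\gradper - \grad \in 2\pi\Z$ edgewise, $\Phi_\theta(\vec g) \in 2\pi\Z$ whenever $\vec g$ has integer entries.

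Now choose a cycle basis $A_1,\dots,A_N$ attaining $\irr(G)$, and liftings $\vec f_i$ with entries in $\set{-1,0,1}$. Then
\begin{equation}
\abs{\Phi_\theta(\vec f_i)} < \abs{A_i}\,\kappa \le \irr(G)\,\kappa < 2\pi,
\end{equation}
so $\Phi_\theta(\vec f_i) = 0$. By \zcref{thm:lifting_basis} the $\vec f_i$ form a basis of $\mathbf C(\vec G)$, so $\Phi_\theta \equiv 0$ on $\mathbf C(\vec G)$.

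\textbf{Step 2 (definition of $\Phi$).} Set $\vartheta_{x_*} := \theta_{x_*}$. For $i \in V$ define $\vartheta_i := \theta_{x_*} + \sum \gradper\theta$ along any path from $x_*$ to $i$; such a path exists since $G$ is connected. The definition is path independent: two paths differ by an element of the directed cycle space, on which $\Phi_\theta$ vanishes by Step 1.

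Hence $\grad_{uv}\vartheta = \gradper_{uv}\theta$ on every edge, so $\grad\vartheta \in \mathcal G_\kappa$. Also $\vartheta_i - \theta_i \in 2\pi\Z$, so we may write $\vartheta = \theta + 2\pi k$ with $k_{x_*} = 0$. On an edge,
\begin{equation}
2\pi(k_j - k_i) = \grad_{ij}\vartheta - \grad_{ij}\theta,
\end{equation}
which has modulus $< \kappa + 2\pi < 4\pi$. Therefore $k_i - k_j \in \set{-1,0,1}$, that is, $k \in \mathcal K$. Consequently $\abs{\vartheta_i} < 2\pi\abs{k_i} + \pi \le a_i$, and a check of the half-open endpoints gives $\vartheta_i \in [-a_i, a_i)$. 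So $\Phi(\theta) \in B$.

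\textbf{Step 3 (inverse).} For $\vartheta \in B$, first note that every $k \in \mathcal K$ has $k_{x_*} = 0$, so $a_{x_*} = \pi$ and $\vartheta_{x_*} \in [-\pi,\pi)$. Define $\Psi(\vartheta) := \vartheta \bmod 2\pi$, taking values in $[-\pi,\pi)^V$.

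Since $\abs{\grad_{uv}\vartheta} < \kappa < \pi$, the periodic gradient satisfies $\gradper_{uv}\Psi(\vartheta) = \grad_{uv}\vartheta$. Hence $\Psi(\vartheta) \in A$. Moreover, the reconstruction in Step 2 applied to $\Psi(\vartheta)$ returns $\vartheta$, because the base point and the edge increments agree. Thus $\Phi \circ \Psi = \Id$, and $\Psi \circ \Phi = \Id$ is immediate.

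The map $\Phi$ is a translation by $2\pi k$ on each of the finitely many measurable sets $\set{\theta \colon k(\theta) = k}$, $k \in \mathcal K$, so it preserves Lebesgue measure. Changing variables then gives the claim.

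The main obstacle is Step 1: passing from vanishing on a small basis over $\mathbb F_2$ to vanishing on every integer cycle. This is exactly where \zcref{thm:lifting_basis} and the hypothesis $\kappa < 2\pi/\irr(G)$ enter.
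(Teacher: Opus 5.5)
Your proposal is correct and is essentially the paper's argument. Your reconstructed field $\vartheta = \theta + 2\pi k(\theta)$ is exactly $\theta + \Psi(\theta)$ for the paper's map $\Psi$, and you get path independence the same way, from a cycle basis with cycles of length less than $2\pi/\kappa$ and its liftings. The identity then follows, as in the paper, by translating each piece by $2\pi k$ with $k \in \mathcal K$ and using periodicity. The only difference is cosmetic: you prove bijectivity through the explicit inverse $\vartheta \mapsto \vartheta \bmod 2\pi$, whereas the paper proves the equivalence $\gradper\theta \in \mathcal G_\kappa,\ \Psi(\theta) = 2\pi k \iff \grad(\theta + 2\pi k) \in \mathcal G_\kappa$ together with a set identity for the union of the translated cubes.
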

%%%%%%%%%%%%%%%%%%%%%%%%%%%%%%%%%%%%%%%%%%%%%%%%%%%%%%%%%%%%%%%%%%%%%%%%%%%%%%%%%%%

	\begin{proof} Recall the definition of $\mathcal{K}$ \eqref{def:K}.
	We introduce the map $\Psi \colon [-\pi, \pi)^V \to  2\pi\mathcal K$ defined by
	\begin{equation}
		\Psi(\theta)_x := \begin{cases}
			\sum_{k=0}^{n-1} (\nabla^{\per}_{z_k z_{k+1}} \theta) - \left(\theta_x - \theta_{x_*}\right)	& \text{if } \gradper \theta \in \mathcal{G}_\kappa, \\
			0 &\text{else}.
		\end{cases}
	\end{equation}
	Here, $z_0\cdots z_n$, $z_0 = x_*$, $z_n = x$ is any path from $x_*$ to $x$ in $G$. Since $G$ is connected, such a path always exists. In \zcref{thm:Psi_well_defined} we prove that this definition is well defined, i.e. independent of the chosen path. 
	Moreover, we prove in \zcref{thm:gradper_grad_equivalence} that for $\theta \in [-\pi,\pi)^V$ and $k \in \mathcal K$
	\begin{equation}
		\gradper \theta \in \mathcal{G}_\kappa \text{ and } \Psi(\theta) = 2\pi k 
		\quad \text{if and only if} \quad
		\grad(\theta + 2\pi k) \in \mathcal{G}_\kappa
	\end{equation}
	whenever $0 < \kappa < \frac{2\pi}{\irr(G)}$.
	Hence,
	\begin{equation}\begin{aligned}
			&\phntm\int_{[-\pi, \pi)^V} f(\theta)\  \I_{\mathcal{G}_\kappa}(\gradper \theta) \d \theta 
		 	= \sum_{k \in \mathcal K}\int_{[-\pi, \pi)^V} f(\theta) \ \I_{\mathcal{G}_\kappa}(\gradper \theta) \I_{\Psi(\theta) = 2\pi k} \d \theta\\
			&\qquad = \sum_{k \in \mathcal K}\int_{[-\pi, \pi)^V} f(\theta)\  \I_{\mathcal{G}_\kappa}(\grad \left(\theta + 2\pi k\right)) \d \theta
			= \sum_{k \in \mathcal K}\int_{[-\pi, \pi)^V +2\pi  k} f(\vartheta-2\pi k)\ \I_{\mathcal{G}_\kappa}(\grad \vartheta)  \d \vartheta\\
			&\qquad =\sum_{k \in \mathcal K}\int_{[-\pi, \pi)^V +2\pi  k} f(\vartheta)\  \I_{\mathcal{G}_\kappa}(\grad \vartheta)  \d \vartheta
			= \int_{\times_{i\in V}[-a_{i},a_{i})} f(\vartheta)\  \I_{\mathcal{G}_\kappa}(\grad \vartheta) \d \vartheta,
	\end{aligned}\end{equation}
	where in the last line we used the $2\pi$-periodicity of $f$ and the identity		
	\[
		\bigcup_{k\in \mathcal K} [-\pi, \pi)^V +2\pi  k\ \cap\ \{\nabla \vartheta \in \mathcal{G}_\kappa \}=
		\times_{i\in V}[-a_{i},a_{i})\cap\ \{\nabla \vartheta \in \mathcal{G}_\kappa \}.
	\]
	To prove this identity, note that the inclusion from left to right follows from the definition of $a_{i}$. For the other inclusion, let
	$ \vartheta\in  \times_{i\in V}[-a_{i},a_{i})$ with $ \nabla\vartheta \in \mathcal{G}_\kappa. $ Then  $\vartheta= \theta +2\pi k$ with $k\in \Z^{V}$ and $\theta \in  [-\pi, \pi)^V$.  It remains to show that $k\in  \mathcal K$ holds.
	 By definition $a_{x_{*}}=\pi$ and hence $k_{x_{*}}=0$. Moreover, if $\{i,j \}\in E$, we argue
	\[
	2\pi |k_{i}-k_{j}|= |\vartheta_{i}-\vartheta_{j}- (\theta_{i}-\theta_{j}) |\leq \kappa +2\pi. 
	\]
	Since $\irr(G)\geq 3$ (cf. \zcref{rem:irrG}) we have $\kappa <\pi $ and hence $k_{i}-k_{j}\in \{-1,0,1 \}$.
		\end{proof}
	
	The next lemma shows that the function $\Psi$ is well defined.

%%%%%%%%%%%%%%%%%%%%%%%%%%%%%%%%%%%%%%%%%%%%%%%%%%%%%%%%%%%%%%%%%%%%%%%%%%%%%%	
	\begin{lemma}\label{thm:Psi_well_defined}
		Let $G=(V, E)$ be a finite and connected graph and $0 < \kappa < \frac{2\pi}{\irr(G)}$. Then $\Psi$ is well defined, i.e. for all $x \in V$ its definition does not depend on the choice of the path from $x_*$ to $x$, and $\Psi(\theta) \in \mathcal K$ for all $\theta \in [-\pi,\pi)^V$. 
	\end{lemma}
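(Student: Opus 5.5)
The plan is to reduce path independence to the vanishing of a linear functional on the directed cycle space, and to check that vanishing on a basis of short cycles using \zcref{thm:lifting_basis}. Throughout we assume $\gradper\theta\in\mathcal G_\kappa$; otherwise $\Psi(\theta)=0$ and there is nothing to show.

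\textbf{Setup.} Since $\kappa<2\pi/\irr(G)\le 2\pi/3<\pi$, every edge satisfies $\abs{\gradper_{uv}\theta}<\kappa<\pi$. By \zcref{remark:symmetry_gradient} the periodic gradient is then antisymmetric, $\gradper_{uv}\theta=-\gradper_{vu}\theta$. Fix an orientation $\vec G$. For $\vec f\in\mathbf C(\vec G)$ define the two $\Q$-linear functionals
\begin{equation}
	L(\vec f):=\sum_{\vec e=(u,v)\in\vec E}\vec f(\vec e)\,\gradper_{uv}\theta,
	\qquad
	L_0(\vec f):=\sum_{\vec e=(u,v)\in\vec E}\vec f(\vec e)\,(\theta_v-\theta_u).
\end{equation}
Regrouping $L_0$ by vertices gives $\sum_v\theta_v\big(\sum_{\delta^-(v)}\vec f-\sum_{\delta^+(v)}\vec f\big)$, which vanishes by the constraint \eqref{eq:directed_cycle_contraint}. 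Hence $L_0\equiv0$ on $\mathbf C(\vec G)$. Moreover $\gradper_{uv}\theta-(\theta_v-\theta_u)\in2\pi\Z$ on every edge, so $L(\vec f)\in 2\pi\Z$ whenever $\vec f$ is integer valued.

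\textbf{Key step: $L\equiv 0$.} Choose a cycle basis $A_1,\dots,A_N$ of $\mathbf C(G)$ with $\abs{A_i}\le\irr(G)$, which exists by definition of $\irr$. Let $\vec f_i$ be liftings of these elements, with values in $\{-1,0,1\}$. By \zcref{thm:lifting_basis} the $\vec f_i$ form a basis of $\mathbf C(\vec G)$ over $\Q$. For each $i$:
\begin{itemize}
	\item $L(\vec f_i)\in2\pi\Z$, since $\vec f_i$ is integer valued;
	\item $\abs{L(\vec f_i)}\le\abs{A_i}\,\kappa\le \irr(G)\,\kappa<2\pi$.
\end{itemize}
Therefore $L(\vec f_i)=0$ for every $i$, and by linearity $L\equiv0$ on $\mathbf C(\vec G)$. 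This is the step where the hypothesis $\kappa<2\pi/\irr(G)$ is used, and I expect it to be the heart of the argument. The only care needed is that liftings of general cycle-space elements, which are unions of edge-disjoint cycles, are still controlled by $\abs{A_i}$.

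\textbf{Path independence.} Take two paths $P=z_0\cdots z_n$ and $P'=z'_0\cdots z'_m$ from $x_*$ to $x$. Define the integer function $\vec f$ on $\vec E$ as the signed count of traversals: each traversal of an edge along $P$ in the direction of the orientation contributes $+1$ and against it $-1$, and traversals along $P'$ contribute with the opposite signs. Concatenating $P$ with the reversal of $P'$ gives a closed walk. In a closed walk every visit to a vertex enters and leaves it once, so flow conservation holds and $\vec f\in\mathbf C(\vec G)$; repeated vertices or edges cause no problem. By antisymmetry of $\gradper\theta$,
\begin{equation}
	L(\vec f)=\sum_k\gradper_{z_kz_{k+1}}\theta-\sum_k\gradper_{z'_kz'_{k+1}}\theta ,
\end{equation}
and $L(\vec f)=0$ by the key step. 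Hence the two path sums agree, and $\Psi(\theta)_x$ does not depend on the chosen path.

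\textbf{$\Psi(\theta)\in2\pi\mathcal K$.} Telescoping along a path gives
\begin{equation}
	\Psi(\theta)_x=\sum_k\big(\gradper_{z_kz_{k+1}}\theta-(\theta_{z_{k+1}}-\theta_{z_k})\big)\in2\pi\Z .
\end{equation}
\begin{itemize}
	\item Using the trivial path from $x_*$ to itself, $\Psi(\theta)_{x_*}=0$.
	\item For an edge $\{i,j\}$, extend a path to $i$ by the edge $(i,j)$; by path independence this computes $\Psi(\theta)_j$. Thus $\Psi(\theta)_j-\Psi(\theta)_i=\gradper_{ij}\theta-(\theta_j-\theta_i)=:2\pi k$. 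Since $\abs{\gradper_{ij}\theta}<\kappa<\pi$ and $\abs{\theta_j-\theta_i}<2\pi$, we get $2\pi\abs{k}<3\pi$, so $k\in\{-1,0,1\}$.
\end{itemize}
Hence $\Psi(\theta)/(2\pi)\in\mathcal K$. In the case $\gradper\theta\notin\mathcal G_\kappa$ we have $\Psi(\theta)=0\in 2\pi\mathcal K$ trivially.
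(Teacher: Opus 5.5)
Your proof is correct and follows essentially the same route as the paper. The shared steps are: write the difference of the two path sums as the functional $\sum_{\vec e}\vec f(\vec e)\,\gradper_{\vec e}\theta$ on $\mathbf C(\vec G)$, and expand $\vec f$ in a lifted basis of short cycles via \zcref{thm:lifting_basis}. Each basis term is then zero, being in $2\pi\Z$ with absolute value below $\kappa\,\irr(G)<2\pi$, and the last step checks the edge increments of $\Psi$. Two of your details are slightly more careful than the written proof: you allow closed walks with repeated edges, so appending $\{i,j\}$ to a path that already uses it is harmless, and you track the factor $2\pi$ explicitly in $\Psi(\theta)\in 2\pi\mathcal K$.
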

%%%%%%%%%%%%%%%%%%%%%%%%%%%%%%%%%%%%%%%%%%%
	\begin{proof}
		If $\gradper \theta \notin \mathcal{G}_\kappa$, then $\Psi(\theta) = 0 \in \mathcal K$. Thus, from now on we assume that $\gradper \theta \in \mathcal{G}_\kappa$.
			Let $\gamma=y_0 \cdots y_n$, and $\gamma'=z_0 \cdots z_m$ with $y_0 = z_0 = x_*$ and $y_n =z_m = x$ be two different paths from $x_*$ to $x$ in $G$.
		For $\Psi$ to be well defined, we need to show that
		\begin{equation}\label{eq:equality_Psi_on_paths}
			\sum_{k=0}^{n-1} \gradper_{y_{k} y_{k+1}} \theta = \sum_{k=0}^{m-1} \gradper_{z_{k} z_{k+1}} \theta.
		\end{equation}
		Recall that $\irr(G) \geq 3$ and therefore $\kappa < \pi$. Since $\gradper \theta \in \mathcal G_\kappa$, we  have $\gradper_{ij} \theta = - \gradper_{ji} \theta$ as was noted in \zcref{remark:symmetry_gradient}.
		Hence, \eqref{eq:equality_Psi_on_paths} is equivalent to
		\begin{equation}\label{eq:gradper_different_paths_condition}
			\sum_{k=0}^{n-1} \gradper_{y_{k} y_{k+1}} \theta + \sum_{k=0}^{m-1} \gradper_{z_{k+1} z_{k}} \theta = 0.
		\end{equation}
		Let $\vec G = (V, \vec E)$ be any orientation of $G$.
		We can rewrite
		\begin{equation}
			\begin{aligned}
				\sum_{k=0}^{n-1} \gradper_{y_k y_{k+1}} \theta
				&= \sum_{k=0}^{n-1} \left(\gradper_{y_k y_{k+1}} \theta\right) \sum_{\vec e \in \vec E} \left(\delta_{\vec e, (y_k, y_{k+1})} + \delta_{\vec e, (y_{k+1}, y_k)}\right) \\
				&= \sum_{\vec e \in \vec E} \left(\gradper_{\vec e} \theta \right) \sum_{k=0}^{n-1} \left(\delta_{\vec e, (y_k, y_{k+1})} - \delta_{\vec e, (y_{k+1}, y_k)}\right),
			\end{aligned}
		\end{equation}
		where we used $\gradper_{y_{k+1} y_{k}} = -\gradper_{y_{k} y_{k+1}}$ and either $(y_k, y_{k+1}) \in \vec E$ or $(y_{k+1}, y_{k}) \in \vec E$. Similarly,
		\begin{equation}
				\sum_{k=0}^{m-1} \gradper_{z_{k+1} z_{k}} \theta
				= \sum_{\vec e \in \vec E} \left(\gradper_{\vec e} \theta \right) \sum_{k=0}^{n-1} \left(\delta_{\vec e, (z_{k+1}, z_k)} - \delta_{\vec e, (z_k, z_{k+1})}\right).
		\end{equation}
		Defining $\vec f \in \mathbf E(\vec G)$ by
		\begin{equation}
			\vec f(\vec e) := \sum_{k=0}^{n-1} \left( \delta_{\vec e, (y_{k}, y_{k+1})} - \delta_{\vec e, (y_{k+1}, y_{k})} \right) 
			+  \sum_{k=0}^{m-1} \left( \delta_{\vec e, (z_{k+1}, z_{k})} - \delta_{\vec e, (z_{k}, z_{k+1})} \right), \quad \vec e \in \vec E,
		\end{equation}
		we obtain
		\begin{equation}\label{eq:sum_gradper_as_dot_product}
			\sum_{k=0}^{n-1} \gradper_{y_{k} y_{k+1}} \theta + \sum_{k=0}^{m-1} \gradper_{z_{k+1} z_{k}} \theta
			= \sum_{\vec e \in \vec E} \vec f(\vec e) \gradper_{\vec e} \theta.
		\end{equation}
		Further,  we have  $\vec f \in \mathbf{C}(\vec G)$ since each vertex has the same number of incoming and outgoing edges in $\vec f$  (cf. also the argument in the proof of \zcref{thm:lifting_exists}). 
			Let now $f_1, \ldots, f_N \in \mathbf C(G)$ be a basis of $\mathbf C(G)$ such that
		\begin{equation}
			\abs{f_k} < \frac{2\pi}{\kappa} \quad \text{for all } k=1, \ldots, N.
		\end{equation}
		Such a basis exists, because  $\irr(G) < \frac{2\pi}{\kappa}$ by assumption.
		Let $\vec f_i \in \mathbf C(\vec G)$, $i \in \set{1, \dots, N}$, be a lifting of $f_i$. By \zcref{thm:lifting_basis} the $\vec f_1, \dots, \vec f_N$ form a basis of $\mathbf C(\vec G)$ and therefore
		\begin{equation}
			\vec f = \sum_{k=1}^N \lambda_k \vec f_k
		\end{equation}
		for some $\lambda_1, \dots, \lambda_N \in \Q$. Inserting into \eqref{eq:sum_gradper_as_dot_product} we obtain
		\begin{equation}\label{eq:decompose_f_gradper_by_basis}
			\sum_{k=0}^{n-1} \gradper_{y_{k} y_{k+1}} \theta + \sum_{k=0}^{m-1} \gradper_{z_{k+1} z_{k}} \theta
			= \sum_{k=1}^N \lambda_k \left(\sum_{\vec e \in \vec E} \vec f_k(\vec e) \gradper_{\vec e} \theta\right).
		\end{equation}
		For any edge $(i,j) \in \vec E$ we have $\gradper_{ij} \theta = \theta_j - \theta_i + 2\pi k_{ij}$, with $k_{ij} \in \Z$. Thus,
		\begin{equation}
			\sum_{\vec e \in \vec E} \vec f_k(\vec e) \gradper_{\vec e} \theta
			= \sum_{(i,j) \in \vec E} \vec f_k((i,j)) \theta_j - \sum_{(i,j) \in \vec E} \vec f_k((i,j)) \theta_i + 2\pi \sum_{(i,j) \in \vec E} \vec f_k((i,j)) k_{ij}.
		\end{equation}
		Note that
		\begin{equation}
			\vec E = \bigcup_{v \in V} \delta^+(v) = \bigcup_{v \in V} \delta^-(v).
		\end{equation}
		Hence
		\begin{equation}
			\sum_{(i,j) \in \vec E} \vec f_k((i,j)) \theta_j
			= \sum_{j \in V}  \theta_j \sum_{(i,j) \in \delta^-(j)} \vec f_k((i,j))
			= \sum_{j \in V}  \theta_j \sum_{(j,i) \in \delta^+(j)} \vec f_k((j,i))
			= \sum_{(i,j) \in \vec E} \vec f_k((i,j)) \theta_i,
		\end{equation}
		where we used that $\vec f_k \in \mathbf C(\vec G)$ (and thus satisfies \eqref{eq:directed_cycle_contraint}) for the second equality. Thus, using that the $\vec f_k$ are liftings of $f_k$ and therefore $\vec f_k(\vec E) \subset \set{-1, 0, 1}$,
		\begin{equation}
			\sum_{e \in \vec E} \vec f_k(e) \gradper_e \theta = 2\pi \sum_{(i,j) \in \vec E} \vec f_k(e) k_{ij} \in 2\pi \Z.
		\end{equation}
		On the other hand, we assumed $\gradper\theta \in \mathcal{G}_\kappa$. Therefore,
		\begin{equation}
			\abs{\sum_{e \in \vec E} \vec f_k(e) \gradper_e \theta}
			\leq \kappa \sum_{e \in \vec E} \abs{\vec f_k(e)}
			= \kappa \abs{f_k} < 2\pi.
		\end{equation}
		Hence,
		\begin{equation}
			\sum_{e \in \vec E} \vec f_k(e) \gradper_e \theta = 0.
		\end{equation}
		Together with \eqref{eq:decompose_f_gradper_by_basis} this proves \eqref{eq:gradper_different_paths_condition}.
		
		It remains to show $\Psi(\theta) \in \mathcal K$. By choosing the empty path, one sees that $\Psi(\theta)_{x_*} = 0$. Let $\set{u,v} \in E$. Let $z_0 \cdots z_n$, $z_0 = x_*$, $z_n = u$ be a path from $x_*$ to $u$ in $G$. Then $z_0 \cdots z_n \,v$ is a path from $x_*$ to $v$. Thus,
		\begin{equation}
			\begin{aligned}
				\Psi(\theta)_u - \Psi(\theta)_v
				&= \sum_{k=0}^{n-1} \gradper_{z_{k} z_{k+1}} \theta - \sum_{k=0}^{n-1} \gradper_{z_{k} z_{k+1}} - \gradper_{uv} \theta - (\theta_u - \theta_{x_*}) + (\theta_v - \theta_{x_*}) \\
				&= -\gradper_{uv} \theta + \theta_v - \theta_u 
				= -2 \pi k_{uv} \in \set{-2\pi, 0, 2\pi},
			\end{aligned}
		\end{equation}
		where we used that $\theta \in [-\pi, \pi)^V$ and thus $\theta_v - \theta_u \in (-2\pi, 2\pi)$. Hence, $\Psi$ is well defined.
	\end{proof}

	The last lemma of this section relates the $\kappa$-goodness of the periodic gradient to the $\kappa$-goodness of the gradient of a translated field.
	\begin{lemma}\label{thm:gradper_grad_equivalence}
		Let $G = (V, E)$ be a finite and connected graph and $0 < \kappa < \frac{2\pi}{\irr(G)}$. Let $\theta \in [-\pi, \pi)^V$ and $k \in \mathcal K$. 
		Then 
		\begin{equation}
			\gradper \theta \in \mathcal{G}_\kappa \text{ and } \Psi(\theta) =2\pi  k 
			\quad \text{if and only if} \quad
			\grad(\theta +2\pi  k) \in \mathcal{G}_\kappa.
		\end{equation}
	\end{lemma}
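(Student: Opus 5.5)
The plan is to prove both directions by comparing, edge by edge, the periodic gradient of $\theta$ with the ordinary gradient of $\theta + 2\pi k$. The key identity is the following. For every $\set{u,v}\in E$, when $\gradper\theta\in\mathcal G_\kappa$, the definition of $\Psi$ together with path-independence (\zcref{thm:Psi_well_defined}) gives
\begin{equation}
	\Psi(\theta)_v - \Psi(\theta)_u = \gradper_{uv}\theta - (\theta_v-\theta_u).
\end{equation}
To see this, append $v$ to a path from $x_*$ to $u$, as in the last step of the proof of \zcref{thm:Psi_well_defined}.

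\emph{Forward direction.} Assume $\gradper\theta\in\mathcal G_\kappa$ and $\Psi(\theta)=2\pi k$. By the identity,
\begin{equation}
	\grad_{uv}(\theta+2\pi k) = \theta_v-\theta_u + \Psi(\theta)_v-\Psi(\theta)_u = \gradper_{uv}\theta.
\end{equation}
Its absolute value is therefore at most $\kappa$ on every edge, so $\grad(\theta+2\pi k)\in\mathcal G_\kappa$.

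\emph{Reverse direction.} Assume $\grad(\theta+2\pi k)\in\mathcal G_\kappa$ and fix an edge $\set{u,v}$. The number $\theta_v-\theta_u+2\pi(k_v-k_u)$ lies in $(-\kappa,\kappa)\subset[-\pi,\pi)$, since $\kappa<\pi$ because $\irr(G)\ge 3$. It also differs from $\theta_v-\theta_u$ by an integer multiple of $2\pi$. The periodic gradient is the unique representative of $\theta_v-\theta_u$ modulo $2\pi$ in $[-\pi,\pi)$, so $\gradper_{uv}\theta = \grad_{uv}(\theta+2\pi k)$. Hence $\gradper\theta\in\mathcal G_\kappa$.

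It remains to show $\Psi(\theta)=2\pi k$. Take any path $x_*=z_0\cdots z_n=x$. Summing the edge identity along it telescopes to
\begin{equation}
	\sum_{j}\gradper_{z_jz_{j+1}}\theta = \theta_x-\theta_{x_*}+2\pi(k_x-k_{x_*}).
\end{equation}
Since $k_{x_*}=0$ because $k\in\mathcal K$, the definition of $\Psi$ gives $\Psi(\theta)_x = 2\pi k_x$.

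The only delicate point is the uniqueness of the representative in $[-\pi,\pi)$. This needs $|\grad_{uv}(\theta+2\pi k)|<\kappa<\pi$, strictly inside the interval, which is exactly where the hypothesis on $\kappa$ (via $\irr(G)\ge3$) enters. It also requires respecting the orientation convention of \zcref{remark:symmetry_gradient}: with $|\gradper|<\pi$ the periodic gradient is antisymmetric, so the choice of direction on each edge is harmless. Path-independence itself is taken from \zcref{thm:Psi_well_defined}.
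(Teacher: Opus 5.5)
Your proof is correct and follows the paper's argument. The forward direction uses the identity $\Psi(\theta)_v-\Psi(\theta)_u=\gradper_{uv}\theta-(\theta_v-\theta_u)$, obtained by appending $v$ to a path from $x_*$ to $u$. The reverse direction identifies $\gradper\theta$ with $\grad(\theta+2\pi k)$ edge by edge and then telescopes along a path using $k_{x_*}=0$. For that edgewise identification you appeal to uniqueness of the representative in $[-\pi,\pi)$, while the paper uses the equivalent bound $2\pi\abs{n_{uv}-k_v+k_u}\le\pi+\kappa<2\pi$.
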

%%%%%%%%%%%%%%%%%%%%%%%%%%%%%%%%%%%%%%%%%%%%%%%%%%%%%%%%%%%%%%%%%%%%%%%%%%%%%%%%%%%%%%%%%%%
	\begin{proof}
		For one direction, let $\set{u,v} \in E$ and assume that $\gradper \theta \in \mathcal{G}_\kappa$ and $\Psi(\theta) =2\pi  k$. 
		Let $z_0 \cdots z_n$, $z_0 = x_*$, $z_n = u$ be a path from $x_*$ to $u$ in $G$. Then $z_0 \cdots z_n \,v$ is a path from $x_*$ to $v$. Hence
		\begin{equation}
			\begin{aligned}
				\gradper_{uv} \theta
				&= \left(\sum_{l=0}^{n-1} \gradper_{z_l, z_{l+1}} \theta + \gradper_{uv} \theta - \theta_v + \theta_{x_*}\right) - \left(\sum_{l=0}^{n-1} \gradper_{z_l, z_{l+1}} \theta - \theta_u + \theta_{x_*}\right) + \theta_v - \theta_u \\
				&= \Psi(\theta)_v - \Psi(\theta)_u + \theta_v - \theta_u
				= \theta_v + 2\pi k_v - (\theta_u +2\pi  k_u )
				= \grad_{uv} \left(\theta + 2\pi k\right).
			\end{aligned}
		\end{equation}
		Thus $\grad \left(\theta +2\pi  k\right) \in \mathcal{G}_\kappa$.
			
		For the other direction, assume that $\grad \left(\theta +2\pi  k\right) \in \mathcal{G}_\kappa$.
		Let $\set{u,v} \in E$ and $n_{uv} \in \Z$ such that $\gradper_{uv} \theta = \theta_v - \theta_u +2\pi  n_{uv}$.
		Then
		\begin{equation}
		2\pi \abs{n_{uv} -  k_v + k_u} =	\abs{\gradper_{uv} \theta - \grad_{uv} (\theta +2\pi  k)} 
			\leq \pi + \kappa.
		\end{equation}
		On the other hand, $k \in \Z^V$, and thus $n_{uv} - k_v + k_u \in  \Z$. Since $\kappa < \pi$, the only option is that $n_{uv} = k_v - k_u$ and therefore
		\begin{equation}\label{eq:gradper_equals_grad_plus_k}
			\gradper \theta = \grad (\theta + 2\pi k) \in \mathcal{G}_\kappa.
		\end{equation}
		For $x \in V$ let $z_0 \cdots z_n$, $z_0 = x_*$, $z_n = x$ be a path from $x_*$ to $x$ in $G$. Then, using \eqref{eq:gradper_equals_grad_plus_k},
		\begin{equation}
				\Psi(\theta)_x
				= \sum_{l=0}^{n-1} \nabla_{z_l, z_{l+1}} (\theta +2\pi k) - (\theta_x - \theta_{x_*}) 
				= \theta_{z_n} +2\pi  k_{z_n} - \theta_{z_0} -2\pi k_{z_0} - (\theta_x - \theta_{x_*}) 
				=2\pi k_x,
		\end{equation}
		concluding the proof.
	\end{proof}
	%%%%%%%%%%%%%%%%%%%%%%%%%%%%%%%%%%%%%%%%%%%%%%%%%%%%%%%%%%%%%%%%%%%%
	\section{Order in random geometric graphs}
	\label{sec:order_rand_geometric_graphs}
	
	We use \zcref{thm:expectation_bound_fixed_graph}, established in the previous section, to prove our main \zcref{thm:main}. For this, we consider realizations $\omega \in \Xi$ of the random environment, on which we have uniform control for $\irr(\G_{n, \eps}(\omega))$ and $\mathbf R_{\G_{n, \eps}(\omega)}$. In \zcref{sec:_irr_random_graphs} we show that there does indeed exist a subset $\mathcal A_1 \subset \Xi$ with very large probability, on which the algebraic irreducibility of the random graph is bounded by some constant $K$ depending only on $\Omega$, $\eta$ and $\rho$. Then, in \zcref{sec:random_graphs}, we prove that there exists a second set $\mathcal A_2 \subset \Xi$ which also has a very large probability and on which the random effective resistance is bounded by $R\, \eps^{-2}$ for a  constant $R$ depending only on $\Omega$, $\eta$ and $\rho$. These two sets are used in \zcref{sec:random_graphs} to prove the main theorem.

%%%%%%%%%%%%%%%%%%%%%%%%%%%%%%%%%%%%%%%%%%%%%%%%%%%%%%%%%%%%%%%%%%%%%%%%%
	\subsection{Algebraic irreducibility of random geometric graphs}
	\label{sec:_irr_random_graphs}
	
	In this section, we will show that the algebraic irreducibility of random geometric graphs can be controlled uniformly in $n$ and $\varepsilon$.
	\begin{theorem}\label{thm:irr_random_graph}
		Assume that $\Omega$, $\eta$ and $\rho$ satisfy \zcref{ass:Omega,ass:eta, ass:rho}.	
		Additionally, assume that $\Omega$ is simply connected. 
		
		Then there exist positive constants 
		$c_1 \equiv c_1(\Omega)$, 
		$c_ 2 \equiv c_2(\rho_{\min}, \Omega)$, 
		$\eps_0 \equiv \eps_0(\Omega)$ and
		$K \equiv K(\Omega, \supp \eta)$, such that for any $\eps \in (0,\eps_0]$ 
		\begin{equation}\label{eq:prob_connected_and_irr_bounded}
			\P\left({\G}_{n,\eps} \text{ is connected and } \irr\left({{\G}}_{n,\eps}\right)\leq K\right) \geq 1 - c_1 n e^{-c_2 n \eps^d}.
		\end{equation}
	\end{theorem}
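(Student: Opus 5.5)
We reduce to a deterministic statement: on a good event for the point cloud, the graph $\G_{n,\eps}$ contains a coarse "skeleton" modelled on the lattice $\Lambda_h$, and every cycle of $\G_{n,\eps}$ can be written as a sum in $\mathbf C(\G_{n,\eps})$ of cycles of bounded length. Since $\eta$ is non-increasing, continuous at $0$ and $\eta(0)>0$, there is $r_0>0$ with $\eta(t)>0$ for $t\le r_0$. Hence any two points at distance at most $r_0\eps$ are joined by an edge. Choose $h=c\eps$ with $c$ small, say $2\sqrt d\,h\le r_0\eps$. Then points in the same or in adjacent $h$-cubes are adjacent in $\G_{n,\eps}$.

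\textbf{Step 1: the good event.} Apply \zcref{thm:each_square_has_k_points} with $k=1$ and this $h$. The event that every cube $\square_h(v)$, $v\in\Lambda_h$, contains a point has probability at least $1-c_1ne^{-c_2n\eps^d}$. The condition $nh^d\ge1$ may fail, but then the bound is trivial after adjusting constants. Since $\Omega$ is Lipschitz, we may take $\eps_0$ small enough that the following holds: the cubes $\square_h(v)$, $v\in\Lambda_h$, together with cubes meeting $\Omega$ (each handled by absorbing boundary cubes into a neighbour), cover $\Omega$. Moreover, the cube-adjacency graph $\Lambda_h$ (a "fattened" discretisation) is connected and its cycle space is generated by unit squares. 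Equivalently, the union of the closed cubes is simply connected. This uses the simple connectivity and Lipschitz regularity of $\Omega$ at scales below $\eps_0(\Omega)$.

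\textbf{Step 2: deterministic decomposition.} On the good event, fix a representative point $p(v)\in\V_n\cap\square_h(v)$ for each cube. Every vertex $x\in\V_n$ lies in some cube $v(x)$ and is adjacent to or equal to $p(v(x))$. For an edge $\{x,y\}$ of $\G_{n,\eps}$, the cubes $v(x)$ and $v(y)$ are at lattice distance at most $M:=\lceil \operatorname{diam}\supp\eta\cdot\eps/h\rceil$, which depends only on $\supp\eta$. Fix a lattice path $\gamma_{xy}$ of length at most $dM$ between them; on a Lipschitz domain one may need a path within the cube graph, of length at most $C(\Omega)M$. The path $p(\gamma_{xy})$ is a path in $\G_{n,\eps}$. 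Map each edge $\{x,y\}$ to the path $\phi(x,y)=x\,p(v(x))\,p(\gamma_{xy})\,p(v(y))\,y$. For any cycle $C=x_0\cdots x_m$, the cycle is the sum in $\mathbf C$ of the "ladder" cycles $\{x_i,x_{i+1}\}+\phi(x_i,x_{i+1})$, each of length at most $C(\Omega)M+4$, plus the image cycle $\sum_i\phi(x_i,x_{i+1})$. This last cycle is, up to backtracking spurs at the representatives (which cancel mod $2$), the $p$-image of a closed walk in the cube graph. By Step 1 that walk decomposes into unit squares, whose images are $4$-cycles through representatives of adjacent cubes, all of which are edges. Thus $\mathbf C(\G_{n,\eps})$ is spanned by cycles of length at most $K:=C(\Omega)M+4$, and any spanning set contains a basis, so $\irr\le K$. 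Connectivity follows the same way: every vertex connects to its representative, and the representatives are connected along the cube graph.

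\textbf{Main obstacle.} The hard part is Step 1's topological claim: that for small $\eps$ the discretised cube graph of a simply connected Lipschitz domain has a cycle space generated by squares, with lattice paths between nearby cubes staying inside. I would handle this by using bi-Lipschitz charts of $\partial\Omega$ to show that the union of interior cubes is a deformation retract of an inner parallel set of $\Omega$. Simple connectivity then passes to it, and a simply connected cubical complex has $H_1(\cdot;\mathbb F_2)=0$, generated by boundaries of squares. Points near $\partial\Omega$ lying in cubes whose centres are not in $\Omega$ need care; I would attach them to the nearest interior cube, which is within $C(\Omega)$ cubes by the Lipschitz cone condition.
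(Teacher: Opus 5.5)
\textbf{There is a genuine gap, in Step~1.}

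The rest of your architecture is sound: one representative per cube, a bounded ladder cycle $\{x,y\}+\phi(x,y)$ for each edge, and reduction of an arbitrary cycle of $\G_{n,\eps}$ to the image of a closed walk in a coarse lattice graph. This mirrors the paper's last step, where a general cycle is traded for a discretised curve via the bounded cycles $u_jP_ju_{j+1}u_j$. But Step~1 carries all of the topology, and it is asserted in a form that is false.

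Take $\Omega=(-1,1)^2\setminus N$, with $N$ the closed wedge with apex $0$, axis at angle $22.5^\circ$ and half-opening $2.5^\circ$.
\begin{itemize}
\item It is star-shaped with respect to a point on the backward axis, hence simply connected.
\item It is Lipschitz: near $0$ it is $\{y'>-\cot(2.5^\circ)\abs{x'}\}$ in rotated coordinates.
\item For every $h<1$, the origin is a lattice point outside $\Lambda_h$, while all eight of its lattice neighbours lie in $\Omega$.
\end{itemize}
Consider the cubes centred in $\Lambda_h$, which are the only ones the good event equips with representatives. The $8$-cycle around $0$ is a cycle of the nearest-neighbour graph on $\Lambda_h$. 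It is not a sum of unit squares, because the only finite $2$-chain it bounds consists of the four squares at $0$. Likewise, the union of the closed cubes centred in $\Lambda_h$ has a hole at $\square_h(0)$.

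So your claim fails at every scale $h$, in either formulation. The two formulations are also not equivalent: the closed-cube union is governed by corner adjacency, not face adjacency.

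The statement that can hold, and that your Step~2 actually needs, is weaker. It says that for a connection radius $C(\Omega)h$, the coarse cycle space is generated by cycles of length at most $K(\Omega)$. Proving this is essentially the whole theorem, and the deformation-retract sketch does not deliver it. Bi-Lipschitz charts do not map lattice cubes to lattice cubes, and nothing in the sketch controls the boundary layer where such holes live.

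\textbf{How the paper avoids this.} The paper never needs a statement about the topology of a cube complex.
\begin{itemize}
\item It uses simple connectivity of $\Omega$ only through a continuous null-homotopy $H$, cut into slices $\gamma^k=H(k/M,\cdot)$ at mutual sup-distance $<h$.
\item Each slice is discretised by composing a piecewise-linear approximant with the nearest-lattice-point map $\pi_h$. By the interior cone condition, its error is at most $c_0h$.
\item The connection radius is chosen as $\bar\eta\eps=(2c_0+\sqrt d+1)h$. Then consecutive points along a slice, and the ``rungs'' between consecutive slices, are edges of $\G_{n,\eps}$.
\end{itemize}
Hence $P_{\gamma^{k-1}}+P_{\gamma^k}$ is a sum of cycles of length at most $4$. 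Holes like the one above are filled automatically, because the radius exceeds the projection error.

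\textbf{Two smaller fixes.} First, unbounded $\supp\eta$ (complete graph, $M=\infty$) must be treated separately. Second, $h/\eps$ must depend on $\Omega$ and not only on $r_0$. Otherwise a vertex in a boundary cube need not be adjacent to the representative it is attached to.
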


	We postpone the proof of this theorem to the end of the section and introduce first the two main technical results that we will need for the argument. The first result (\zcref{thm:project_Omega_to_Lambda_h}) shows that $\Omega $ can be projected  onto the lattice $\Lambda_h := h\Z^d \cap \Omega$ defined in \eqref{def:Lambda_h} without moving  points too far (for $h$ small enough).  The second result (\zcref{thm:existence_discretization}) shows that continuous curves in $\Omega $ induce paths in $\Lambda_h$ without losing much. Recall that we will take $h$ proportional to $\varepsilon$ in the subsequent applications.
	
	\begin{definition}
		Let $h > 0$. We define the map $\pi_h \colon \Omega \to \Lambda_h$ by
		\begin{equation}\label{eq:def_pi_h}
			\pi_h(x) := 
			\Phi_h\left(\set{v \in \Lambda_h \colon \abs{x - v} \leq \abs{x-w} \text{ for all } w \in \Lambda_h} \right),
		\end{equation}
		where $\Phi_h \colon 2^{\Lambda_h} \to \Lambda_h$ is some rule selecting a single element from a subset of $\Lambda_h$, whose precise definition we do not need.
	\end{definition}
%%%%%%%%%%%%%%%%%%%%%%%%%%%%%%%%%%%%%%%%%%%%%%%%%%%%%%%%%%%%%%%%%%%%%%	

	\begin{lemma}
	\label{thm:project_Omega_to_Lambda_h}
	Let $\Omega$ satisfy \zcref{ass:Omega}. Then there exist positive constants $h_0 \equiv h_0(\Omega)$ and $c_0 \equiv c_0(\Omega)$ such that for all $h \in (0, h_0]$
	\begin{equation}\label{eq:pi_h_bound}
		\abs{\pi_h(x) - x} \leq c_0 h \quad \text{for all } x \in \Omega.
	\end{equation}
	\end{lemma}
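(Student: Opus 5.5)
The idea is to reduce the claim to the statement that every $x\in\Omega$ lies within distance $c_0 h$ of some lattice point of $\Lambda_h = h\Z^d\cap\Omega$. Since $\pi_h(x)$ is a nearest point of $\Lambda_h$ to $x$, this gives $\abs{\pi_h(x)-x}\le c_0 h$ directly. Note also that $\pi_h$ is only well defined once $\Lambda_h\neq\emptyset$; this will follow from the same argument for $h\le h_0$.

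The tool is a uniform interior cone condition, which is standard for bounded Lipschitz domains. By \zcref{ass:Omega} there exist $r_0>0$, an opening angle $\alpha\in(0,\pi/2)$ and a height $\ell>0$, all depending only on $\Omega$, with the following property. For every $x\in\Omega$ there is a unit vector $\xi_x$ such that the finite cone
\[
C_x:=\set{x+t\zeta \colon 0<t<\ell,\ \abs{\zeta}=1,\ \zeta\cdot\xi_x>\cos\alpha}
\]
is contained in $\Omega$. For points far from $\partial\Omega$ this is trivial: a ball of radius $\ell$ around $x$ lies in $\Omega$, and any direction works. Near the boundary one uses the local representation of $\Omega$ as the region above a Lipschitz graph with constant $L$. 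A cone with axis the vertical direction and half-angle $\alpha$ satisfying $\cot\alpha> L$ stays above the graph. Compactness of $\partial\Omega$ lets us take finitely many charts and uniform constants.

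The plan is then as follows.

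\begin{enumerate}[label=(\roman*)]
\item Locate a ball inside the cone. Set $t=s h$ with $s$ to be chosen. The cone $C_x$ contains the ball $B(x+t\xi_x, t\sin\alpha)$ provided $t(1+\sin\alpha)<\ell$.
\item Place a lattice point in that ball. Every ball of radius larger than $\tfrac{\sqrt d}{2}h$ contains a point of $h\Z^d$, since every point of $\R^d$ is within $\tfrac{\sqrt d}{2}h$ of $h\Z^d$. So choose $s=\sqrt d/\sin\alpha$. Then $t\sin\alpha=\sqrt d\,h>\tfrac{\sqrt d}{2}h$, and the ball contains some $v\in h\Z^d$.
\item Conclude. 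We have $v\in C_x\subset\Omega$, hence $v\in\Lambda_h$ and in particular $\Lambda_h\neq\emptyset$. Moreover $\abs{v-x}\le t+t\sin\alpha\le 2sh$. Since $\pi_h(x)$ is a nearest point of $\Lambda_h$ to $x$, we get $\abs{\pi_h(x)-x}\le\abs{v-x}$.
\end{enumerate}

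This yields the claim with $c_0=2\sqrt d/\sin\alpha$ and $h_0=\ell/(2s)$, which guarantees the constraint $t(1+\sin\alpha)<\ell$ in step (i).

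The main obstacle is the uniform cone condition itself, and in particular making the constants uniform over all of $\Omega$ using only the Lipschitz boundary. I would cite it as the classical equivalence between Lipschitz boundary and the uniform cone property, for instance from Grisvard's monograph. If a self-contained argument were needed, I would go chart by chart: cover $\partial\Omega$ by finitely many cylinders in which $\Omega$ is the epigraph of an $L$-Lipschitz function. Points inside a slightly shrunken cylinder get the upward cone of half-angle $\arctan(1/L)/2$ with small height. All remaining points are at distance at least some $\delta>0$ from $\partial\Omega$, and there a ball of radius $\delta$ serves as the cone. The other steps are elementary.
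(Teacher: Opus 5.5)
Your proposal is correct and follows essentially the same route as the paper: both use the uniform interior cone condition for Lipschitz domains and place a ball of radius $\sqrt d\,h$, centered at $x+\frac{\sqrt d}{\sin\alpha}h\,\xi_x$, inside the cone, so that this ball must contain a point of $h\Z^d$; both then conclude because $\pi_h(x)$ is a nearest point of $\Lambda_h$ (the paper obtains $c_0=\frac{\sqrt d}{\sin\alpha}+\sqrt d$, you obtain the comparable $2\sqrt d/\sin\alpha$). You also note explicitly that the same argument gives $\Lambda_h\neq\emptyset$, so $\pi_h$ is well defined, a point the paper leaves implicit.
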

	\begin{proof}
	Since $\Omega$ has a Lipschitz boundary, it satisfies the uniform interior cone condition: there exist $R \equiv R(\Omega) > 0$ and $\alpha \equiv \alpha(\Omega) \in (0,\pi/2) $ depending only on $\Omega$ and  a function $\nu \colon \Omega \to \S^{d-1}$ such that for all $x \in \Omega$
	\begin{equation}
		C_{R,\alpha}(x) := \set{x + t y \mid t \in (0,R), ~y \in \S^{d-1} \colon y \cdot \nu(x) > \cos\alpha} \subset \Omega.
	\end{equation}
	Set $r_0 := \frac{\sqrt d}{\sin \alpha}>0$ and $h_0 := \tfrac{R}{r_0 + \sqrt d}$. Then for all $0 < h \leq h_0$ one has that
	\begin{equation}\label{eq:ball_around_z_in_cone}
		B(x + r_0 h \nu(x); h \sqrt d) \subset C_{R,\alpha}(x).
	\end{equation}
	The radius of the ball is chosen such that $h\Z^d \cap B(x + r_0 h \nu(x); h \sqrt d) \neq \emptyset$. Thus, there exists $\hat x \in \Lambda_h$ such that $\abs{x - \hat x} \leq (r_0 + \sqrt d) h$. In particular, this implies \eqref{eq:pi_h_bound} with $c_{0}=r_0 + \sqrt d$.
	\end{proof}
	
	\begin{lemma}\label{thm:existence_discretization}
		Let $\Omega$ satisfy \zcref{ass:Omega}. Then there exist positive constants $h_0 \equiv h_0(\Omega)$ and $c_0 \equiv c_0(\Omega)$ such that for all $h \in (0, h_0]$
		and any $\gamma \colon [0,1] \to \Omega$ continuous with $\gamma(0), \gamma(1) \in \Lambda_h$, 
		there exists $\hat \gamma \colon [0,1] \to \Lambda_h$ satisfying the following properties:
		\begin{enumerate}[label=(\roman*)]
			\item $\gamma(0) = \hat\gamma(0)$ and $\gamma(1) = \hat\gamma(1)$, \label{item:existence_discretization_endpoints}	
			\item it is piecewise constant with only finitely many discontinuities at $0 < t_1 < \dots < t_k < 1$, \label{item:existence_discretization_pw_constant} 
			\item it satisfies $\norm{\gamma - \hat\gamma}_\infty := \sup_{s \in [0,1]} \abs{\gamma(s) - \hat\gamma(s)} \leq c_0 h$, and \label{item:existence_discretization_error}		
			\item it is right continuous and satisfies $\abs{\hat\gamma(t_{i}) - \hat\gamma(t_{i-1})} \leq c_0 h$ $\forall i \in \set{1, \dots, k}$, where $t_0 = 0$. \label{item:existence_discretization_continuity}
		\end{enumerate}
	\end{lemma}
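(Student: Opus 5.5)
The plan is to build $\hat\gamma$ by sampling $\gamma$ on a fine time grid and projecting the samples to the lattice with $\pi_h$ from \zcref{thm:project_Omega_to_Lambda_h}. Fix $h \in (0,h_0]$, where $h_0$ and $c_0'$ are the constants of \zcref{thm:project_Omega_to_Lambda_h}, so that $\abs{\pi_h(x)-x}\le c_0' h$ for all $x\in\Omega$. Since $[0,1]$ is compact, $\gamma$ is uniformly continuous. Hence there is $N\in\N$ such that $\abs{\gamma(s)-\gamma(t)}\le h$ whenever $\abs{s-t}\le 1/N$.

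Set $s_i := i/N$ for $i=0,\dots,N$ and define
\begin{equation}
	\hat\gamma(t) := \pi_h(\gamma(s_i)) \quad \text{for } t\in[s_i,s_{i+1}),\ i=0,\dots,N-2,
\end{equation}
together with $\hat\gamma(t):=\gamma(1)$ for $t\in[s_{N-1},1]$. We also replace $\pi_h(\gamma(s_0))$ by $\gamma(0)$. This is consistent with the definition of $\pi_h$, because $\gamma(0)\in\Lambda_h$ implies $\pi_h(\gamma(0))=\gamma(0)$ (the unique nearest lattice point is the point itself). In the same way, $\pi_h(\gamma(1))=\gamma(1)$. The resulting $\hat\gamma$ takes values in $\Lambda_h$, is right continuous and piecewise constant, and can only jump at the grid points $s_i$. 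Merging consecutive intervals on which it takes the same value leaves finitely many discontinuities $t_1<\dots<t_k$ in $(0,1)$. This gives \ref{item:existence_discretization_endpoints} and \ref{item:existence_discretization_pw_constant}.

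For \ref{item:existence_discretization_error}, take $t\in[s_i,s_{i+1})$. Then
\begin{equation}
	\abs{\gamma(t)-\hat\gamma(t)}\le \abs{\gamma(t)-\gamma(s_i)}+\abs{\gamma(s_i)-\pi_h(\gamma(s_i))}\le h+c_0'h .
\end{equation}
On the last interval $[s_{N-1},1]$ we use $\hat\gamma(t)=\gamma(1)=\pi_h(\gamma(1))$ instead, which gives the same bound.

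For \ref{item:existence_discretization_continuity}, note that each jump happens at some grid point $s_{i}$ and moves from the value $\pi_h(\gamma(s_{i-1}))$ to $\pi_h(\gamma(s_{i}))$. The triangle inequality gives
\begin{equation}
	\abs{\pi_h(\gamma(s_i))-\pi_h(\gamma(s_{i-1}))}\le 2c_0'h+\abs{\gamma(s_i)-\gamma(s_{i-1})}\le (2c_0'+1)h .
\end{equation}
After merging constant stretches, $\hat\gamma(t_{i-1})$ equals the value just before the jump at $t_i$, because $\hat\gamma$ is constant on $[t_{i-1},t_i)$. So the same bound holds for $\abs{\hat\gamma(t_i)-\hat\gamma(t_{i-1})}$. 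Setting $c_0:=2c_0'+1$ gives all four items.

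I expect no real obstacle here; the geometric input is entirely in \zcref{thm:project_Omega_to_Lambda_h}. The two points that need care are the endpoints, where $\pi_h$ fixes lattice points, and the bookkeeping that turns the grid jumps into the jump times $t_i$ of the statement.
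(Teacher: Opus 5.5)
Your proof is correct, and it takes a simpler route than the paper's.

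\textbf{The paper's route.} The paper replaces $\gamma$ by a piecewise linear $\tilde\gamma$ with the same endpoints and $\norm{\tilde\gamma-\gamma}_\infty\le h$, and sets $\hat\gamma:=\pi_h\circ\tilde\gamma$. It then has to prove each structural property separately:
\begin{itemize}
\item finiteness of the jump set comes from convexity of the Voronoi cells $A_x$, since each linear piece enters and leaves a cell at most once;
\item right continuity is obtained by redefining $\hat\gamma$ at the jump times;
\item $t_1>0$ and $t_k<1$ follow from $B(\gamma(0);h/2)\subset A_{\gamma(0)}$;
\item the jump bound $2c_0h$ comes from a one-sided limit argument using continuity of $\tilde\gamma$.
\end{itemize}

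\textbf{Your route.} Sampling $\gamma$ on a time grid builds finitely many jumps, right continuity, and jump locations in $(0,1)$ directly into the definition. You only evaluate $\pi_h$ at points $\gamma(s_i)\in\Omega$, so you do not need density of piecewise linear curves in $C([0,1];\Omega)$. The paper's version gives in addition that $\hat\gamma(t)$ is, at every time, the nearest-lattice-point projection of a curve uniformly close to $\gamma$. Its jumps therefore occur exactly where that curve crosses into an adjacent cell. Nothing in the lemma or its later use needs this: the homotopy and boundary arguments in the proof of the irreducibility bound only use items (iii) and (iv) together with piecewise constancy and right continuity, all of which your $\hat\gamma$ has.

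\textbf{A bookkeeping slip.} At the grid point $s_{N-1}$ the jump goes from $\pi_h(\gamma(s_{N-2}))$ to $\gamma(1)$, not to $\pi_h(\gamma(s_{N-1}))$. Since $1-s_{N-2}=2/N$, uniform continuity only gives $\abs{\gamma(1)-\gamma(s_{N-2})}\le 2h$, and hence a jump bound of $(c_0'+2)h$. Taking $c_0:=2c_0'+2$ covers this. You should also require $N\ge 2$, so that the first piece (value $\gamma(0)$) and the last piece (value $\gamma(1)$) are distinct intervals.
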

	\begin{proof}
		Let $C([0,1]; \Omega)$ be the space of continuous functions $[0,1] \to\Omega$ with the supremum norm. Since the set of piecewise linear function $[0,1] \to\Omega$ is dense in $C([0,1]; \Omega)$, there exists a piecewise linear function $\tilde\gamma \colon [0,1] \to \Omega$ such that $\norm{\tilde\gamma - \gamma}_{\infty} \leq h$. Without loss of generality, we assume $\tilde\gamma(0)=\gamma(0)$ and $\tilde\gamma(1)=\gamma(1)$, because one way of constructing such a $\tilde\gamma$ is linearly interpolating sufficiently many points in $\gamma([0,1]) \subset \Omega$.
		
		Let $\pi_h \colon \Omega \to \Lambda_h$ be the map defined in \eqref{eq:def_pi_h} and let $\hat\gamma := \pi_h \circ \tilde\gamma$. We claim that $\hat\gamma$ satisfies the requirements of the statement after possibly modifying the definition for a finite number of points $t_i \in [0,1]$.
		
		Since $\tilde\gamma(0) = \gamma(0) \in \Lambda_h$ and $\tilde\gamma(1) = \gamma(1) \in \Lambda_h$, \zcref[noname]{item:existence_discretization_endpoints} holds by definition of $\pi_h$. 
		To show \zcref[noname]{item:existence_discretization_pw_constant}, note that for any $x \in \Lambda_h$ the set 
		\begin{equation}
			A_x := \set{y \in \R^d \mid \abs{y - x} \leq \abs{y - z} \text{ for all } z \in \Lambda_h}
		\end{equation}
		is an intersection of half planes and thus convex. 
		Therefore, every straight line segment of $\tilde\gamma$ (of which there are only finitely many) intersect $A_x$ at most twice. Since the only points where $\hat\gamma$ can be discontinuous are at these intersection points ($\hat\gamma(s)=\hat\gamma(s')$ for all $s,s' \in [0,1]$ such that $\tilde\gamma(s), \tilde\gamma(s') \in A_x$), it can have at most finitely many points of discontinuity.
		
		Let $0 \leq t_1 < \ldots < t_k \leq 1$ be the points of discontinuity of $\hat\gamma$. We claim that $t_1 > 0$. Indeed, since $\tilde\gamma(0) \in \Lambda_h$, it holds that $B(\tilde\gamma(0); h/2) \subset A_{\tilde\gamma(0)}$. By continuity of $\tilde\gamma$, there exists $\delta > 0$ such that 
		\begin{equation}
			\tilde\gamma([0,\delta)) \subset B(\tilde\gamma(0); h/2) \subset A_{\tilde\gamma(0)}.
		\end{equation}
		This implies for all $s \in [0,\delta)$
		\begin{equation}
			\hat\gamma(s) = \pi_h(\tilde \gamma(s)) = \tilde\gamma(0),
		\end{equation}
		and therefore $t_1 \geq \delta$. 
		Analogously, we must have $t_k < 1$.
		
		We redefine (without relabeling) $\hat\gamma$ such that $\lim_{s \downarrow t} \hat\gamma(s) = \hat\gamma(t)$ for all $t_1, \ldots, t_k \in (0,1)$ which makes $\hat\gamma$ right continuous and does not affect \zcref[noname]{item:existence_discretization_endpoints,item:existence_discretization_pw_constant}.
		
		We proceed with the proof of \zcref[noname]{item:existence_discretization_error}. By \zcref{thm:project_Omega_to_Lambda_h}, there exist $h_0(\Omega) >0$ and $c_0(\Omega) > 0$ such that for all $0 < h \leq h_0$
		\begin{equation}
			\abs{\pi_h(x) - x} \leq c_0 h \quad \text{for all } x \in \Omega.
		\end{equation}
		We assume for the rest of the proof that $h \leq h_0$. If $t \in [0,1] \setminus \set{t_1, \ldots, t_k}$, then
		\begin{equation}
			\abs{\hat\gamma(t) - \gamma(t)} 
			\leq \abs{\hat\gamma(t) - \tilde\gamma(t)} + \abs{\tilde\gamma(t) - \gamma(t)}
			= \abs{\pi_h(\tilde\gamma(t)) - \tilde\gamma(t)} + \abs{\tilde\gamma(t) - \gamma(t)}
			\leq (c_0 + 1) h.
		\end{equation}
		On the other hand, if $t \in \set{t_1, \ldots, t_k}$, then
		\begin{equation}
			\abs{\hat\gamma(t) - \tilde\gamma(t)}
			= \lim_{\eps \downarrow 0} \abs{\hat\gamma(t + \eps) - \tilde\gamma(t + \eps)}
			= \lim_{\eps \downarrow 0} \abs{\pi_h(\tilde\gamma(t + \eps)) - \tilde\gamma(t + \eps)}
			\leq c_0 h
		\end{equation}
		by the right continuity of $\hat\gamma$ and $\tilde\gamma$ and \zcref{thm:project_Omega_to_Lambda_h}. 
		
		Finally, for \zcref[noname]{item:existence_discretization_continuity}, note that for all $i \in \set{1, \dots, k}$
		\begin{equation}
			\abs{\hat\gamma(t_{i-1}) - \hat\gamma(t_i)}
			= \lim_{\eps \downarrow 0} \abs{\pi_h(\tilde\gamma(t - \eps)) - \pi_h(\tilde\gamma(t + \eps))}
			\leq 2 c_0 h + \lim_{\eps \downarrow 0} \abs{\tilde\gamma(t-\eps) - \tilde\gamma(t+\eps)}.
		\end{equation}
		The last term vanishes by continuity of $\tilde\gamma$.
	\end{proof}

	With all technical tools at hand, we proceed with proving the main theorem of this section.
	\begin{proof}[Proof of \zcref{thm:irr_random_graph}]
		If $\supp \eta = [0,\infty)$, both claims are trivial, because in that case $\G_{n,\eps}$ is the complete graph (with positive edge weights) which has $\irr(\G_{n, \eps}) = 3$ (cf. \zcref{rem:irrG}). Therefore, from now on we will assume that $\supp \eta = [0, 2\bar\eta]$ is compact.
		
		We define the event that every cube $\square_h(v)$ for $v \in \Lambda_h$ contains at least one element of $\V_n$
		\begin{equation}
			\mathcal A_h := \set{\omega \in \Xi \colon \abs{\V_n(\omega) \cap \square_h(v)} \geq 1 \text{ for all } v \in \Lambda_h}.
		\end{equation}
		By \zcref{thm:each_square_has_k_points} we have for all $h > 0$ with $nh^d \geq 1$
		\begin{equation}
			\P\left(\mathcal A_h\right)
			\geq 1 - C_1 n \expa{-C_2 n h^d},
		\end{equation}
for some constants $C_{1},C_{2}>0$. Let $h_0 > 0$ and $c_0 > 0$ be the constants from \zcref{thm:existence_discretization}
and set $h \equiv h(\eps) = c_1 \bar\eta \eps$, where we choose $c_1 := \frac{1}{2c_0 + \sqrt d+1}$. Moreover, we choose $\eps_0 \equiv \eps_0(\Omega) > 0$ such that $h(\eps) \leq h_0$ for all $0 < \eps \leq \eps_0$.
		The rest of the proof will be devoted to proving
		\begin{equation}\label{eq:Gconnirr}
			\set{{\G}_{n,\eps} \text{ is connected and } \irr\left({{\G}}_{n,\eps}\right)\leq K} \supset \mathcal A_h\qquad 
			\text{for all } h \equiv h(\eps)\leq h_0.
		\end{equation}
		Fix $\omega \in \mathcal A_h$. For any $v \in \Lambda_h$, choose arbitrarily a vertex
		$v' \in \V_{n}(\omega) \cap \square_h(v)$ and denote this assignment by $\Psi_h \equiv \Psi_h(\omega) \colon \Lambda_h \to \V_n(\omega)$. Since  $\omega \in \mathcal A_h$,
		this map is well defined and injective. Let $\gamma \colon [0,1] \to \Omega$ be a continuous curve such that $\gamma(0), \gamma(1) \in \Lambda_h$, which exists since $\Omega $ is connected.  Let $\hat\gamma \colon [0,1] \to \Lambda_h$ be the discretization of $\gamma$ from \zcref{thm:existence_discretization} with points of discontinuity $0 < t_1 <\ldots < t_k < 1$.
		We define the sequence of points in $\V_n(\omega )$ (cf. \zcref{fig:Pgamma})
		\begin{equation}\label{eq:def_Pgamma}
			P_\gamma := \Psi_h(\hat\gamma(0)) \, \Psi_h(\hat\gamma(t_1)) \cdots \Psi_h(\hat\gamma(t_k)).
		\end{equation}
		We show  that $P_\gamma$ is  a path in $\G_{n, \eps}(\omega)$, i.e. $P_\gamma \in \mathbf E(\G_{n, \eps}(\omega))$.
		If $\hat{\gamma}$ is constant, there are no discontinuity points and $P_{\gamma}$ reduces to a single point which we identify with the $0$ element in $\mathbf E(\G_{n, \eps}(\omega))$. If $k\geq 1$ we need to prove  $\set{\Psi_h(\hat\gamma(t_{i})), \Psi_h(\hat\gamma(t_{i-1}))} \in \E_{n,\eps}(\omega)$ for all $i \in \set{1, \dots, k}$  (with the convention $t_0 := 0$). In this regard, note that 
		\begin{equation}
			\abs{\Psi_h(\hat\gamma(t_{i})) - \Psi_h(\hat\gamma(t_{i-1}))} 
			\leq (\sqrt d+c_0) h \leq c_1^{-1} h = {\bar\eta \eps},
		\end{equation}
		where we used that $\abs{\Psi_h(z)-z} \leq \frac{h \sqrt d}2$ for all $z \in \Lambda_h$ and \zcref{thm:existence_discretization}\zcref[noname]{item:existence_discretization_continuity}. Therefore,
		\begin{equation}	
			\eta_\eps(\abs{\Psi_h(\hat\gamma(t_{i})) - \Psi_h(\hat\gamma(t_{i-1}))}) 			
			= \eps^{-d} \eta\left(\frac{\abs{\Psi_h(\hat\gamma(t_{i})) - \Psi_h(\hat\gamma(t_{i-1}))}}{\eps}\right)
			\geq \eps^{-d} \eta\left({\bar\eta}\right)
			> 0,
		\end{equation}
		so that these two points are indeed connected by an edge.
		
		\begin{figure}
			\centering
			\def\svgwidth{0.7\textwidth}
			\begingroup
			\makeatletter
			\newcommand*\fsize{\dimexpr\f@size pt\relax}
			\newcommand*\lineheight[1]{\fontsize{\fsize}{#1\fsize}\selectfont}
			\setlength{\unitlength}{\svgwidth}
			\global\let\svgwidth\undefined
			\makeatother
			\begin{picture}(1,0.40854425)
				\lineheight{1}
				\setlength\tabcolsep{0pt}
				\put(0,0){\includegraphics[width=\unitlength,page=1]{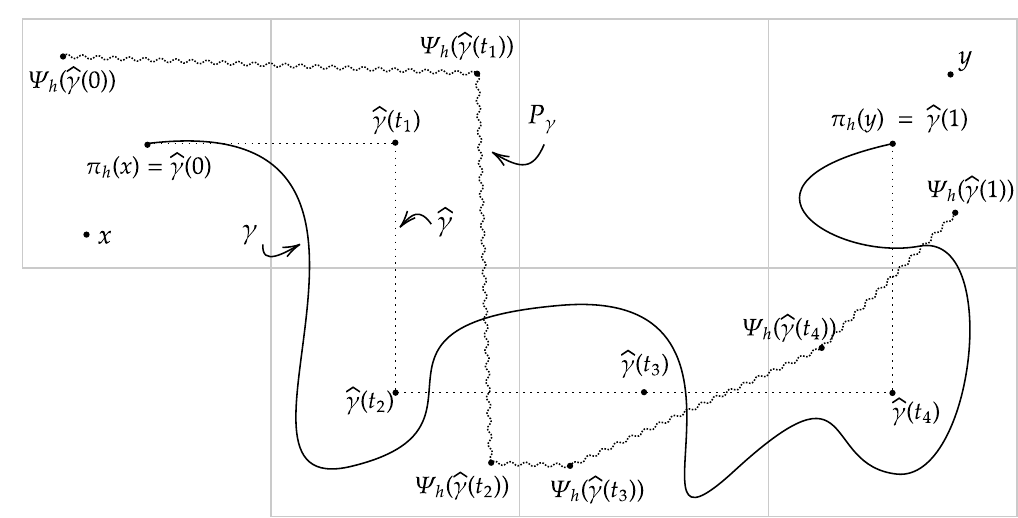}}
			\end{picture}
			\endgroup
			%\includesvg[width=0.7\textwidth]{img/Pgamma}			
			\caption{Construction of the path $P_\gamma$: Start with a continuous curve $\gamma$ (solid line) and discretize it to obtain $\hat\gamma$ (dotted). Finally  $P_\gamma$ is obained projecting $\hat\gamma$ back to $\G_{n, \eps}$ via the map $\Psi_h$ (dotted wavy line).}
			\label{fig:Pgamma}
		\end{figure}
		
		\medskip
		We are now ready to prove \eqref{eq:Gconnirr} and start by showing that $\G_{n, \eps}(\omega)$ is connected.
		Let $x,y \in \V_n(\omega)$ and let $\hat x := \pi_h(x)$ and $\hat y := \pi_h(y)$. There exists $\gamma \colon [0,1] \to \Omega$ continuous, such that $\gamma(0) = \hat x$ and $\gamma(1) = \hat y$ and $\gamma$ induces the path in $\G_{n,\eps}(\omega)$ given by $	P_{\gamma} = \Psi_h(\hat x) \cdots \Psi_h(\hat y) \in \mathbf E(\G_{n, \eps}(\omega))$.
		Further, $\abs{x - \Psi_h(\hat x)} \leq (\sqrt d + c_0) h \leq \bar\eta \eps$, thus if $x \neq \Psi_h(\hat x)$ then $\set{x, \Psi_h(\hat x)} \in \E_{n, \eps}(\omega)$. Similarly, either $y = \Psi_h(\hat y)$ or $\set{y, \Psi_h(\hat y)} \in \E_{n, \eps}(\omega)$.
		Therefore, $x P_\gamma y \in \mathbf E(\G_{n, \eps}(\omega))$, which shows that $x$ and $y$ are connected in $\G_{n, \eps}(\omega)$.
		
		\medskip To prove \eqref{eq:Gconnirr}  it remains to show  $\irr(\G_{n,\eps}(\omega)) \leq K$ for a constant $K$ which depends only on $\Omega$ and $\bar\eta$.
		Let $f \in \mathbf C(\G_{n,\eps}(\omega))$. It suffices to show that there exist $f_1, \ldots f_m \in \mathbf C(\G_{n,\eps}(\omega))$ such that $f = f_1 + \ldots + f_m$ and $\abs{f_i} \leq K$ for every $i \in \set{1, \ldots, m}$. Without loss of generality we assume that $f$ represents a closed cycle in $\G_{n,\eps}(\omega)$, otherwise we decompose every connected component separately  (cf.  \zcref{rem:cyclespace}). Therefore, we view $f$ as a path in $\G_{n, \eps}(\omega)$ given by
		\begin{equation}
			f = u_0 \cdots u_m u_0, \quad u_i \in \V_{n}(\omega).
		\end{equation}
		Assume for the moment the following: there exists $\gamma \colon [0,1] \to \Omega$ continuous with $\gamma(0) = \gamma(1) \in \Lambda_h$ which satisfies
		\begin{equation}\label{eq:additional_assumption}
			f = P_\gamma.
		\end{equation}
		Since $\Omega$ is simply connected, there exists a continuous function $H \colon [0,1]^2 \to \Omega$ such that $H(0, \cdot) = \gamma$, $H(1, \cdot) = \gamma(0)$ and $H(\cdot, 0) = H(\cdot, 1) = \gamma(0)$ contracting $\gamma$ to the point $\gamma(0)$. By uniform continuity of $H$, there exists $M$ large enough, such that
		\begin{equation}
			\norm{H\left(\frac{k}{M}, \cdot\right) - H\left(\frac{k-1}{M}, \cdot\right)}_\infty < h
		\end{equation}
		for all $k \in \set{1, \ldots, M}$. We define $\gamma^k := H\left(\frac{k}{M}, \cdot\right)$ with its discretization $\hat\gamma^k \colon [0,1] \to \Lambda_h$ as in \zcref{thm:existence_discretization}. Each $\gamma^k$ defines a cycle $f^k := P_{\gamma^k} \in \mathbf{C}(\G_{n, \eps}(\omega))$ by the construction \eqref{eq:def_Pgamma}.
		Note that it is possible that $\hat\gamma^k$ is constant, which implies  $f^k = 0$. In particular $f^{M}=0$.
		We have
		\begin{equation}
			f = f^0 + 2 \sum_{k=1}^M f^k = (f^0+f^1) + \sum_{k=2}^M (f^k + f^{k-1}).
		\end{equation}
		where we used that $f = f^0$ and that addition in $\mathbf E(\G_{n, \eps}(\omega))$ is defined $\textrm{mod}~2$.
		We show
		\begin{equation}\label{eq:f+f_0_decomposition}
			f^0 + f^1 = \sum_{l=1}^L g_l,
		\end{equation}
		for some $g_l \in \mathbf C(\G_{n, \eps}(\omega))$ with $\abs{g_l} \leq 4$.  By the same arguments,  each sum $f^k + f^{k-1}$ can be decomposed into a sum of cycle of length at most four, which concludes the proof under the additional assumption \eqref{eq:additional_assumption}. 
		Let now
		\begin{equation}
			\set{\tau_1, \dots, \tau_L}
			:= \set{s \in (0,1) \mid \hat\gamma^0 \text{ or } \hat\gamma^1 \text{ are discontinuous at $s$}}
		\end{equation}
		be the joint points of discontinuity of $\hat\gamma^0$ and $\hat\gamma^1$, ordered such that $0 < \tau_1 < \dots < \tau_L < 1$ and set $\tau_0 : =0$. 
		For $l \in \set{1,\dots, L}$ define the path
		\begin{equation}
			g_l := \Psi_h(\hat\gamma^0(\tau_{l-1})) \,\Psi_h(\hat\gamma^0(\tau_l)) \, \Psi_h(\hat\gamma^1(\tau_l)) \, \Psi_h(\hat\gamma^1(\tau_{l-1})) \, \Psi_h(\hat\gamma^0(\tau_{l-1})),
		\end{equation}
		where we ignore repeated, consecutive vertices. 
		Clearly, $\abs{g_l} \leq 4$, so it remains to show $\gamma_l \in \mathbf C(\G_{n, \eps}(\omega))$.
		If $\hat\gamma^0(\tau_{l-1}) \neq \hat\gamma^0(\tau_l)$, then $\set{\Psi_h(\hat\gamma^0(\tau_{l-1})), \Psi_h(\hat\gamma^0(\tau_l))} \in \E_{n, \eps}(\omega)$ as in the construction of $P_\gamma$ \eqref{eq:def_Pgamma}. Similarly, if $\hat\gamma^1(\tau_{l-1}) \neq \hat\gamma^1(\tau_l)$, then $\set{\Psi_h(\hat\gamma^1(\tau_{l-1})), \Psi_h(\hat\gamma^1(\tau_l))} \in \E_{n, \eps}(\omega)$.	
		Hence, it remains to show
		\begin{equation}
			\set{\Psi_h(\hat\gamma^0(\tau_l)), \Psi_h(\hat\gamma^1(\tau_l))} \in \E_{n, \eps}(\omega)
			\quad\text{and}\quad
			\set{\Psi_h(\hat\gamma^0(\tau_{l-1})), \Psi_h(\hat\gamma^1(\tau_{l-1}))} \in \E_{n, \eps}(\omega).
		\end{equation}
		This holds, because
		\begin{equation}
			\begin{aligned}
				\abs{\Psi_h(\hat\gamma^0(\tau_{l})) - \Psi_h(\hat\gamma^1(\tau_{l}))}
				&\leq h \sqrt d + \norm{\hat\gamma^0 - \gamma^0}_\infty + \norm{\hat\gamma^1 - \gamma^1}_\infty +\abs{\gamma^0(t_{l})- \gamma^1(t_{l})}\\
				&\leq(\sqrt d + 2 c_0 + 1)h = c_1^{-1} h = {\bar\eta \eps}
			\end{aligned}
		\end{equation}
		and analogously for the second edge.
		Finally, one readily verifies \eqref{eq:f+f_0_decomposition} (cf. \zcref{fig:build_squares}).
		
		\begin{figure}
			\centering			
			\def\svgwidth{0.5\textwidth}
			\begingroup
			\makeatletter
			\newcommand*\fsize{\dimexpr\f@size pt\relax}
			\newcommand*\lineheight[1]{\fontsize{\fsize}{#1\fsize}\selectfont}
			\setlength{\unitlength}{\svgwidth}
			\global\let\svgwidth\undefined
			\makeatother
			\begin{picture}(1,0.40854425)
				\lineheight{1}
				\setlength\tabcolsep{0pt}
				\put(0,0){\includegraphics[width=\unitlength,page=1]{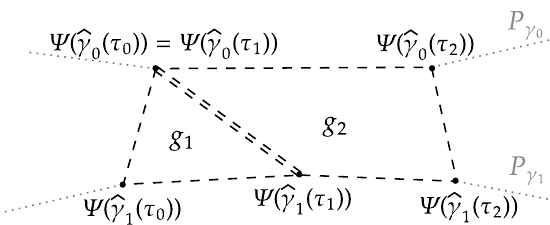}}
			\end{picture}
			\endgroup
			%\includesvg[width=0.5\textwidth]{img/build_squares}			
			\caption{Decomposing $P_{\gamma_0} + P_{\gamma_1}$ for two curves $\gamma_0, \gamma_1$ which are close in $C([0,1], \Omega)$. The edge $\set{\Psi(\hat\gamma_0(\tau_0)), \Psi(\hat\gamma_1(\tau_1))}$ appears in $g_1$ and $g_2$, so it does not count in the decomposition.}
			\label{fig:build_squares}
		\end{figure}
		
		\medskip It remains to remove the additional assumption \eqref{eq:additional_assumption}.
		Recall that
		\begin{equation}\label{eq:f_as_path_of_vertices}
			f = u_0 \cdots u_m u_0
		\end{equation}
		and that we assumed that $\eta$ has compact support $\supp \eta = [0, 2\bar\eta]$. In particular for any edge $\{u,v \}\in \E_{n,\varepsilon} (\omega)$ we have $|u-v|\leq 2\bar \eta$.
Therefore,
		\begin{equation}
			\abs{\pi_h(u_i) - \pi_h(u_{i-1})} 
			\leq \abs{\pi_h(u_i) - u_{i}} + \abs{\pi_h(u_{i-1}) - u_{i}}  + \abs{u_i - u_{i-1}} 
			\leq  2c_{0}h+ 2\bar\eta \eps\leq 3 \bar \eta \eps
		\end{equation}
		for all $i \in \set{1,\dots,m+1}$ with the convention $u_{m+1}=u_{0}$.
		Since $\Omega$ has a Lipschitz boundary, there exist constants $L \equiv L(\Omega) > 0$ and $0<\eps_1(\Omega) \leq \varepsilon_{0}$, such that we can connect $\pi_h(u_{i-1})$ and $\pi_h(u_{i})$ by a curve $\gamma_i \colon [0,1] \to \Omega$ with $\gamma_i(0)=\pi_h(u_{i-1})$ and $\gamma_i(1)=\pi_h(u_i)$ with
\[
\ell(\gamma_i) \leq L|\gamma_{i} (0)-\gamma_{i} (1) |\leq 3 L \bar\eta \eps \quad \text{for all } \eps \leq \eps_1(\Omega),
\]
where $\ell(\gamma_i)$ denotes the length of $\gamma_i$. Let $\gamma \colon [0,1] \to \Omega$ be the concatenation of the curves $\gamma_1, \dots, \gamma_m$ such that $\gamma(i/(m+1)) = \pi_h(u_i)$ for $i \in \set{0,\dots,m}$ and $\gamma(1)=\gamma(0) = \pi_h(u_0)$ and define $g := P_\gamma$. Then
		\begin{equation}
			f = f +g + g
		\end{equation}
		We claim that there exists $K \equiv K(\Omega, \bar\eta) \geq 4$ such that
		\begin{equation}
			f + g = \sum_{j=1}^m h_j
		\end{equation}
		for some $h_j \in \mathbf C(\G_{n, \eps}(\omega))$ with $\abs{h_j} \leq K$. This implies  $\irr(\G_{n, \eps}(\omega)) \leq K$, because $g$ can be decomposed as the sum of cycles of length at most four by the arguments above.
To prove the claim, let $v_i := \Psi_h(\pi_h(u_i))$ for $i \in \set{0, \dots, m}$. Then we can write
		\begin{equation}
			g = P_0 + P_1 + \dots + P_m 
		\end{equation}
		where $P_i \in \mathbf E(\G_{n, \eps}(\omega))$ is the part of the path $P_{\gamma}$ which connects $v_i$ and $v_{i+1}$ (cf. \zcref{fig:boundary_paths}). Then
		\begin{equation}
			f + g = \sum_{j=0}^m u_j P_j u_{j+1} u_j,
		\end{equation}
		where we used that 
		\begin{equation}
			\abs{u_j - v_j} 
			= \abs{u_j - \Psi_h(\pi_h(u_j))}
			\leq \left(\frac{\sqrt d}2 + c_0\right) h \leq \bar\eta \eps
		\end{equation}
and hence either $u_{j}=v_{j}$ or $\{u_{j},v_{j} \}$ is an edge and in this case
it is summed over twice. 
		Hence, $h_j := u_j P_j u_{j+1} u_j \in \mathbf C(\G_{n, \eps}(\omega))$. It remains to prove $|h_{j}|\leq K$
		for some $K=K (\Omega,\bar \eta)$. For this we show
\[
P_{j}\subset B (v_{j}, Rh)
\]
for some $R=R (\Omega,\bar \eta)>0$.  
Indeed, we can write any vertex $z \in P_j$ as $z=\Psi_h(\hat{\gamma}_{j} (t))$ for some $t \in [0,1]$. Then
\begin{align*}
	\abs{z - v_j} &\leq 
			 h \sqrt d + \abs{\hat{\gamma}_{j} (t)) - \pi_h(u_j)}\leq 
			 h \sqrt d+c_{0}h+ \abs{\gamma_{j} (t)) - \pi_h(u_j)}\\
&=  h \sqrt d+c_{0}h+ \abs{\gamma_{j} (t)) - \gamma_{j} (0)}\leq 
			h \sqrt d + \ell(\gamma_j) + c_0 h
			\leq R h
		\end{align*}
		for a constant $R \equiv R(\Omega, \bar\eta) > 0$.
		A ball of radius $R h$ intersects at most $(2 R + 1)^d$ disjoint cubes of side length $h$,
		so $P_j$ consists of at most $(2 R + 1)^d$ vertices. There could potentially be an edge between all of them, and additionally the three edges containing $u_j$ and $u_{j+1}$, so that
		\begin{equation}
			\abs{h_j} \leq (2 R + 1)^{2d} + 3 =: K.
		\end{equation}
		As this constant depends only on $\Omega$ and $\bar\eta$, this concludes the proof that $\irr(\G_{n, \eps}(\omega)) \leq K$.\qedhere
		\begin{figure}
			\centering			
			\def\svgwidth{0.6\textwidth}
			\begingroup
			\makeatletter
			\newcommand*\fsize{\dimexpr\f@size pt\relax}
			\newcommand*\lineheight[1]{\fontsize{\fsize}{#1\fsize}\selectfont}
			\setlength{\unitlength}{\svgwidth}
			\global\let\svgwidth\undefined
			\makeatother
			\begin{picture}(1,0.40854425)
				\lineheight{1}
				\setlength\tabcolsep{0pt}
				\put(0,0){\includegraphics[width=\unitlength,page=1]{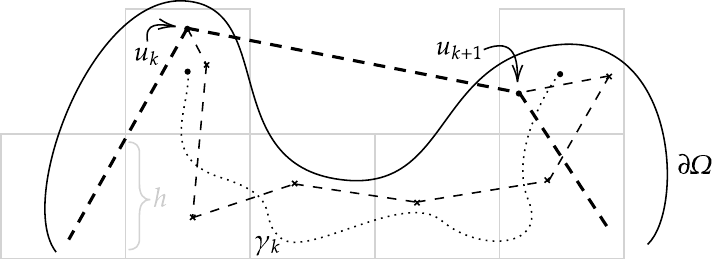}}
			\end{picture}
			\endgroup
			%\includesvg[width=0.6\textwidth]{img/boundary_paths}
			\caption{Decomposition of $f \in \mathbf C(\G_{n,\eps}(\omega))$ close to the boundary $\partial \Omega$ as in \eqref{eq:f_as_path_of_vertices}. The fat dashed line represents edges in $f$. The two points $u_k$ and $u_{k+1}$ are connected. The dotted curve $\gamma_k$ connects the points $\pi_h(u_k)$ and $\pi_h(u_{k+1})$ in $\Omega$. From $\gamma_k$, we construct a new path $P_k$ from $u_k$ to $u_{k+1}$ in $\G_{n,\eps}(\omega)$, which is represented by the thin dashed line, and which uses at most $K$ vertices from $\V_n(\omega)$. This path together with the edge $\set{u_k, u_{k+1}}$ is denoted by $h_k \in \mathbf C(\G_{n,\eps}(\omega))$ in the proof.}
			\label{fig:boundary_paths}
		\end{figure}
	\end{proof}

	\subsection{Proof of the main theorem}
	\label{sec:random_graphs}
	
	To prove our main theorem, we will combine \zcref{thm:irr_random_graph} with \zcref{thm:expectation_bound_fixed_graph}. This will require estimating the effective resistance (cf. \zcref{def:effres}) of the random graph. In this chapter all constants are assumed to depend only on $\Omega$, $\eta$ and $\rho$, even when we do not mention this explicitly.
	
	\begin{lemma}\label{thm:random_graph_effective_resistance}
		Assume that $\Omega$, $\eta$ and $\rho$ satisfy \zcref{ass:Omega,ass:eta, ass:rho} and that the coupling constant $\J_{n,\eps}$ are given by \eqref{eq:random_coupling_constants_normalized}. Additionally, assume $\supp \eta \subset [0,1]$ and $\norm{\eta}_{L^1([0,1])} = 1$.
				
		Then there exist positive constants $R \equiv R(\Omega, \eta, \rho)$, $c_1 \equiv c_1(\Omega)$, $c_2 \equiv c_2(\Omega, \eta(0), \rho_{\min}, \rho_{\max})$ and $\eps_0 \equiv \eps_0(\Omega, \rho, \eta)$ such that for any $n \geq 2$, $n^{-\frac1d} < \varepsilon \leq \eps_0$
		\begin{equation}\label{eq:bound_probabilitic_resistance}
			\P\left(\mathbf{R}_{\G_{n, \eps}} \leq R\eps^{-2}\right)
			\geq 1 - c_1 n e^{-c_2 n \eps^d},
		\end{equation}
		where $\mathbf{R}_{\G_{n, \eps}}$ is the effective resistance of $\G_{n, \eps}$ as defined in \eqref{eq:def_effective_resistance}.
	\end{lemma}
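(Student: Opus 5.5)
We work on the intersection of the two high-probability events provided by \zcref{thm:discrete_poincare} and \zcref{thm:degrees_bounded}: on the first, the discrete Poincaré inequality \eqref{eq:discrete_poincare} holds for all $u$; on the second, $nc_1\le \deg_{n,\eps}(x)\le nc_2$ for all $x\in\V_n$. By a union bound, the intersection has probability at least $1-c_1ne^{-c_2n\eps^d}$ after adjusting constants. Everything else is deterministic on this event.

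\textbf{Step 1 (comparison of the two Laplacians).} On the degree event,
\[
\J_{n,\eps}(x,y)=\frac{\eta_\eps(\abs{x-y})}{\sqrt{\deg_{n,\eps}(x)\deg_{n,\eps}(y)}}\ge \frac{\eta_\eps(\abs{x-y})}{c_2 n}.
\]
Hence, as quadratic forms,
\[
(u,-\Delta_{\G_{n,\eps}}u)\ge \frac{1}{c_2n}\sum_{\{x,y\}\in\E_{n,\eps}}\eta_\eps(\abs{x-y})(u(x)-u(y))^2 .
\]
Combining this with \eqref{eq:discrete_poincare}, and using $n-1\ge n/2$ for $n\ge 2$, we get
\[
\norm{u-(u)_{\deg_{n,\eps}}}_2^2\le \frac{2c_0c_2}{\sigma_\eta\eps^2}\,(u,-\Delta_{\G_{n,\eps}}u).
\]

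\textbf{Step 2 (spectral gap of $-\Delta_{\G_{n,\eps}}$).} For $u\perp \mathbf 1$ the unweighted mean vanishes, and the orthogonal projection onto $\mathbf 1^\perp$ is the minimizer of $c\mapsto\norm{u-c}_2$. Therefore
\[
\norm{u}_2^2\le\norm{u-(u)_{\deg_{n,\eps}}}_2^2\le C\eps^{-2}(u,-\Delta_{\G_{n,\eps}}u).
\]
This shows that the smallest nonzero eigenvalue satisfies $\lambda_2(-\Delta_{\G_{n,\eps}})\ge \eps^2/C$. The connectedness needed for the kernel to be only the constants also follows from this Poincaré inequality.

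\textbf{Step 3 (bounding the resistance).} For $w=e_u-e_v\perp\mathbf 1$,
\[
(w,(-\Delta_{\G_{n,\eps}})^{-1}w)\le \lambda_2^{-1}\norm{w}_2^2=2\lambda_2^{-1}\le 2C\eps^{-2}.
\]
Taking the supremum over $u,v$ gives $\mathbf R_{\G_{n,\eps}}\le R\eps^{-2}$ with $R=2C=4c_0c_2/\sigma_\eta$.

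\textbf{Expected obstacle.} The only delicate point is the normalization bookkeeping. Our $\norm{\cdot}_2$ carries no $1/n$ factor, so the $1/n$ coming from the degree normalization of $\J_{n,\eps}$ must cancel exactly against the $1/(n-1)$ in \eqref{eq:discrete_poincare}, leaving a bound of order $\eps^{-2}$ independent of $n$. Checking that the constants from the two cited results combine consistently under the stated hypotheses ($\supp\eta\subset[0,1]$, unit $L^1$ norm, $n^{-1/d}<\eps\le\eps_0$) is the main care needed; the probabilities simply combine via the union bound.
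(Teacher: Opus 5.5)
Your argument is correct and is essentially the paper's proof. Both combine the degree upper bound with the discrete Poincaré inequality to get $\norm{u-(u)_{\deg_{n,\eps}}}_2^2\le C\eps^{-2}(u,-\Delta_{\G_{n,\eps}}u)$; you state the conclusion as a spectral-gap bound on $\mathbf 1^\perp$, whereas the paper applies it to the degree-mean-zero solution of $-\Delta_{\G_{n,\eps}}u=e_x-e_y$ and uses Cauchy--Schwarz. Your remark that connectedness already follows from the Poincaré event is a small gain: the paper instead takes connectedness from \zcref{thm:irr_random_graph}, which additionally assumes $\Omega$ is simply connected.
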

	\begin{proof}
		Define two events $\mathcal A_1, \mathcal A_2 \subset \Xi$ by
		\begin{equation}
			\begin{gathered}
				\mathcal A_1 := \set{\G_{n, \eps} \text{ is connected and for all } x \in \V_n \colon {\deg_{n,\eps}(x)} \leq n c_0},\\
				\mathcal A_2 := \set{\forall v \in \ell^2(\G_{n, \eps}) \colon \norm{v - (v)_{\deg_{n,\eps}}}_2
					\leq \frac{c_1}{\sigma_{\eta} (n-1) \varepsilon^2} 
					\sum_{\set{x,y} \in \E_{n,\eps}} \eta_\eps(\abs{x-y}) \left({v(x) - v(y)}\right)^2}.
			\end{gathered}
		\end{equation}
		where $c_0$ and $c_1$ are the constant from \eqref{eq:degree_bound} and \eqref{eq:discrete_poincare}. By \zcref{thm:degrees_bounded,thm:discrete_poincare,thm:irr_random_graph} it holds 
		\begin{equation}
			\P(\mathcal A_1 \cap \mathcal A_2)
			\geq 1-  c n e^{-c' n \eps^d}
		\end{equation}
		for some positive constants $c, c' > 0$.
		We show that there exists a constant $R \equiv R(\Omega, \eta, \rho)$ such that \eqref{eq:bound_probabilitic_resistance} holds for all $\omega \in \mathcal A_1 \cap \mathcal A_2$. To estimate the effective resistance between $x$ and $y$ in $\G_{n, \eps}(\omega)$, let $u \in \ell^2({\G}_{n,\eps}(\omega))$ be the unique solution of
		\begin{equation}\label{eq:u_sol_f_knv}
			-\Delta_{\G_{n,\eps}(\omega)} u = e_x - e_y 
			\quad \text{and} \quad 
			(u)_{\deg_{n,\eps}^\omega} = 0.
		\end{equation}
		Then
		\begin{equation}\label{eq:bound_effective_resistance_tilde_G}
			\left(e_x - e_y, \left(-\Delta_{\G_{n,\eps}(\omega)}\right)^{-1} \left(e_x - e_y \right)\right)
			\leq \norm{e_x - e_y}_2 \norm{u}_2.
		\end{equation}
		Recall definition \eqref{eq:random_coupling_constants_normalized}
		\begin{equation}
			\J_{n,\eps}({x,y}) := \frac{\eta_\eps(\abs{x-y})}{\sqrt{\deg_{n,\eps}(x) \deg_{n,\eps}(y)}} \qquad \text{for } x,y \in \V_n.
		\end{equation}
		Since $u$ solves \eqref{eq:u_sol_f_knv}, using that $\omega \in \mathcal A_1 \cap \mathcal A_2$,
		\begin{equation}
			\begin{aligned}
				\norm{u}_2
				&\leq \frac{c_1}{\sigma_{\eta} (n-1) \varepsilon^2} \sum_{\set{x,y} \in \E_{n,\eps}(\omega)} \eta_\eps(\abs{x-y}) \left({u(x) - u(y)}\right)^2 \\
					&\leq \frac{c_1 \max_{x \in \V_n(\omega)} \deg_{n,\eps}^\omega(x)}{\sigma_{\eta} (n-1) \varepsilon^2}  \left(u, -\Delta_{\G_{n,\eps}(\omega)} u\right) \\
				&\leq \frac{c_2}{\sigma_{\eta}\varepsilon^2} \left(u, e_x - e_y\right)
				\leq \frac{c_2}{\sigma_{\eta} \varepsilon^2} \norm{e_x - e_y}_2 \norm{u}_2,
			\end{aligned}
		\end{equation}
		where $c_2 = 2 c_0 c_1$.
		Dividing by $ \norm{u}_2$ on both sides and inserting into \eqref{eq:bound_effective_resistance_tilde_G}, we obtain
		\begin{equation}\label{eq:final_bound_two_point_effres}
			\left(e_x - e_y, \left(-\Delta_{\G_{n,\eps}(\omega)}\right)^{-1} \left(e_x - e_y \right)\right)
			\leq \frac{c_2}{\sigma_{\eta} \eps^2} \norm{e_x - e_y}_2^2 
			= \frac{2 c_2}{\sigma_{\eta} \eps^2},
		\end{equation}
		where we used that $\norm{e_x - e_y}_2^2 = 2$ for the last bound. Since	the discrete Poincaré inequality holds for all $v \in \ell^2(\G_{n, \eps})$ (on the event $\mathcal A_2$), we can take the supremum over $x,y \in \V_n$ in \eqref{eq:final_bound_two_point_effres}, which concludes the proof of \eqref{eq:bound_probabilitic_resistance} with $R := \frac{2 c_2}{\sigma_\eta}$.
	\end{proof}

%%%%%%%%%%%%%%%%%%%%%%%%%%%%%%%%%%%%%%%%%%%%%%%%%%%%%%%%%%%%%%%%%%%%%%%%%%%%%%%%%%%%%%%%%%%%%%%%%%%%%%%%%%
	With these probabilistic estimates on the effective resistance of $\G_{n, \eps}$ and the non-probabilistic bound from \zcref{thm:expectation_bound_fixed_graph}, we prove now the quenched lower bound on the two-point function.

	\begin{proof}[Proof of \zcref{thm:main}] 
		We assume first that $\eta={\bar \eta^{-1}} \I_{[0,\bar{\eta}]}$ holds for some $\bar \eta\in (0,1]$. Consider the two events $\mathcal{A}_{1},\mathcal{A}_{2}\subset \Xi$
		given by
		\begin{equation}\label{eq:A1A2}
			\begin{gathered}
				\mathcal{A}_{1}=\set{\G_{n, \eps} \text{ is connected and } \irr\left({{\G}}_{n,\eps}\right)\leq K \text{ and for all } x \in \V_n \colon \deg_{n,\eps}(x) \leq n c_0 }, \\
				\mathcal{A}_{2}=\set{\mathbf{R}_{\G_{n, \eps}} \leq R\eps^{-2}},
			\end{gathered}
		\end{equation}
		where $c_0$, $K$ and $R$ are the constants from \eqref{eq:degree_bound}, \eqref{eq:prob_connected_and_irr_bounded} and \eqref{eq:bound_probabilitic_resistance}. By \zcref{thm:degrees_bounded,thm:irr_random_graph,thm:random_graph_effective_resistance} we have
		\begin{equation}
			\P (\mathcal{A}_{1}\cap \mathcal{A}_{2})\geq 1-  c n e^{-c' n \eps^d}
		\end{equation}
		for some constants $c, c'>0$.
		Let  $\omega \in\mathcal{A}_{1}\cap \mathcal{A}_{2}$ and $\kappa \in (0, \tfrac{\pi}{K})$ a parameter. By \zcref{thm:expectation_bound_fixed_graph}, it holds
		\begin{equation}\label{eq:lowerboundindicator}	
					\sprod{\cos(\theta_x - \theta_y)}{\G_{n,\varepsilon} (\omega);\beta} 
					\geq 1
					- \tfrac{R}{2  \eps^2 \beta \cos \kappa}
					- c_1 n^{2} \, e^{-\frac{\beta J_{0}  \kappa^2}{\pi}},
		\end{equation}
		where we bounded the number of edges $\abs{\E_{n,\eps}(\omega)}\leq n^{2}$ and $J_0 \equiv J_0(\omega)$ is defined in \eqref{def:J0}. Since $\omega \in \mathcal A_1$, we have
		\begin{equation}
			J_{0}
			= \min_{\{i,j \}  \in \E_{n, \eps}(\omega)} \J_{n,\eps}^\omega(i,j) 
			\geq \frac{1}{\bar \eta \varepsilon^{d}} \min_{x  \in \V_{n}(\omega)} \deg^\omega_{n,\eps}(x)^{-1}
			\geq \frac{c_2}{n \eps^d},
		\end{equation}
		where $c_2 = \frac{1}{c_0 \bar \eta}$. This concludes the proof for the special choice of $\eta$. It remains to show how we can reduce a general $\eta$ to this special case.
		
		By continuity of $\eta$ at the origin (cf. \zcref{ass:eta}), we can choose $\bar\eta \in (0,1]$ such that $\eta(\bar\eta) \geq \eta(0)/2$. Define $\hat\eta := \bar\eta^{-1} \I_{[0, \bar\eta]}$. Let $\hat\G_{n,\eps} = (\V_n, \hat\E_{n,\eps}, \hat\J_{n,\eps})$ be defined in the same way as $\G_{n,\eps}$, with the role of $\eta$ replaced by $\hat\eta$ (see \zcref{sec:main_results}). We denote by $\widehat{\deg}_{n,\eps}$ the corresponding degree function \eqref{eq:def_degree}. This new graph has less edges, $\hat\E_{n,\eps} \subseteq \E_{n,\eps}$, but the edge weights may be larger.
		
		Since $\J_{n,\eps}$ is invariant under multiplication of $\eta$ by a constant, we rescale it (without relabeling and without modifying $\bar\eta$) such that $\eta \geq \hat\eta$. This does not change the coupling constants $\J_{n,\eps}$, so it does not affect the graph $\G_{n,\eps}$, but it guarantees $\deg_{n,\eps} \geq  \widehat{\deg}_{n,\eps}$.
		
		We define, with the constants $c_3$ and $c_4$ given in \eqref{eq:degree_bound}, the event $\mathcal A_3 \subset \Xi$
		\begin{equation}
			\mathcal A_3 
			:= \set{\text{for all } x \in \V_n \colon c_3 n \leq \widehat{\deg}_{n,\eps}(x) \leq \deg_{n,\eps}(x) \leq c_4 n}.
		\end{equation}
		On this event, it holds for all $x,y \in \V_n$
		\begin{equation}
			\J_{n,\eps}(x,y) 
			= \frac{\eta_\eps(\abs{x-y})}{\sqrt{\deg_{n,\eps}(x)\deg_{n,\eps}(y)}}
			\geq \sqrt{\frac{{\widehat{\deg}_{n,\eps}(x) \widehat{\deg}_{n,\eps}(y)}}{\deg_{n,\eps}(x)\deg_{n,\eps}(y)}} \hat\J_{n,\eps}(x,y)
			\geq \frac{c_4}{c_3} \hat\J_{n,\eps}(x,y).
		\end{equation}
		Therefore, the monotonicity of the two point function in the coupling constants implies
		\begin{equation}
		\sprod{\cos(\theta_x - \theta_y)}{\G_{n,\varepsilon} (\omega);\beta} 
		\geq \sprod{\cos(\theta_x - \theta_y)}{\hat\G_{n,\varepsilon} (\omega);\frac{c4}{c3}\beta}
		\end{equation}
		for all $\omega \in \mathcal A_3$.
		Finally, let $\hat{\mathcal{A}}_{1},\hat{\mathcal{A}}_{2}\subset \Xi$ be the events obtained from \eqref{eq:A1A2}
		by replacing $\G_{n,\varepsilon}$ with  $\hat{\G}_{n,\varepsilon}$  Then,
		\begin{equation}
			\P(\mathcal A_3 \cap \hat{\mathcal{A}}_{1}\cap \hat{\mathcal{A}}_{2}) \geq 1- c n e^{-c' n \eps^d}.
		\end{equation}
		By \eqref{eq:lowerboundindicator} we have, for all $\omega \in A_3 \cap \hat{\mathcal{A}}_{1}\cap \hat{\mathcal{A}}_{2}, $	
\[ \sprod{\cos(\theta_x - \theta_y)}{\hat\G_{n,\varepsilon} (\omega);\frac{c4}{c3}\beta}\geq 1 - \tfrac{\hat{R}}{  \eps^2 \beta}
					- c_1 n^{2} \, e^{-\frac{\beta \hat{c} }{n\varepsilon^{d}}},
\]
		from which the claim follows.
	\end{proof}
	
	\section{Alternative strategy: order in the interior for  random geometric graphs}
	\label{sec:interior}
	
	To conclude, we give an alternative version of \zcref{thm:main}, which holds in the interior of the domain $\Omega$. The idea of the proof is simple and relies only on a comparison of the \emph{$XY$}-model on the random graph $\G_{n, \eps}$ with the \emph{$XY$}-model on a deterministic subset of the lattice $\Z^d$. The two dimensional case will be treated slightly differently, because there is no long-range order in the $XY$-model on $\Z^2$.
	
	For $\delta > 0$ define the \emph{inner parallel set}
	\begin{equation}
		\Omega_{\delta} := \set{x \in \Omega \mid \dist(x, \partial \Omega) > \delta}.
	\end{equation}
	
	\begin{theorem}\label{thm:main_interior}
		
		Assume that $\Omega$, $\eta$ and $\rho$ satisfy \zcref{ass:Omega,ass:eta,ass:rho} and let $\delta > 0$.
		There exist positive constants $c_*$, $c_1$, $c_2$, $c_3$ and $\eps_0$, depending only on $\Omega$, $\rho$ and $\eta$, and a positive constant $a$ depending additionally on $\delta$,
		such that for all $\eps \in (0, \eps_0)$, $\beta \geq c_* n \eps^d,$  and $n \eps^{d} \geq c_1$ for $d\geq 3$ or  $n \eps^{3} \geq c_1$
		in the case $d=2,$ 
		the event that
		\begin{equation}\label{eq:long_range_order_graph_interior_paralell_set}
			\text{for all } x,y \in \V_n \cap \Omega_{\delta} \colon 
			\sprod{\cos{\theta_{x} - \theta_{y}}}{\G_{n,\eps};\beta} 
			\geq 1 - a\,\Gamma\left(\frac{\beta}{n\eps^d}\right)
		\end{equation}
		has probability at least $1 - c_2 n e^{-c_3 n \eps^d}$.
		Here $\Gamma(t) := \sqrt{\frac{\log t}{t}}$.
	\end{theorem}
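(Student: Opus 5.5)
The plan is to compare the $XY$-model on $\G_{n,\eps}$ with the nearest-neighbour $XY$-model on a deterministic box of the lattice $\Z^d$, using the monotonicity of the two-point function in the couplings (\zcref{thm:decrease_coupling}). Setting a coupling to zero deletes an edge, and a vertex left without edges is an independent uniform spin that does not affect $\sprod{\cos(\theta_x-\theta_y)}{}$ for $x,y$ in the embedded part. So it suffices to find, inside $\G_{n,\eps}(\omega)$, an embedded copy of a lattice box containing $x$ and $y$ whose edge weights are all at least some $\bar J$. Then
\begin{equation}
\sprod{\cos(\theta_x-\theta_y)}{\G_{n,\eps};\beta}\ \geq\ \sprod{\cos(\theta_x-\theta_y)}{B;\beta\bar J},
\end{equation}
and I would conclude with the finite-volume long-range-order bound of \cite{Garban:ContinuousSymmetryBreaking2022a} for $d\geq 3$. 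I expect that bound to have the form $1-C\,\Gamma(\beta')$ at inverse temperature $\beta'$, uniformly in the box and in the positions of the points.

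\textbf{Step 1 (good event).} As in the proof of \zcref{thm:main}, first reduce $\eta$ to $\bar\eta^{-1}\I_{[0,\bar\eta]}$. Take $h=c\eps$ with $c$ small enough that any two points in adjacent (or equal) cubes $\square_h(v)$ are joined by an edge. Consider the intersection of the event of \zcref{thm:each_square_has_k_points} (every cube of $\Lambda_h$ contains at least $k$ points) with the degree event of \zcref{thm:degrees_bounded}. This intersection has probability at least $1-c_2ne^{-c_3n\eps^d}$ as long as $k\leq c\,nh^d$. On it, every edge weight satisfies $\J_{n,\eps}\geq \bar J:=c'/(n\eps^d)$, so the comparison model runs at $\beta'=c'\beta/(n\eps^d)$. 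The hypothesis $\beta\geq c_*n\eps^d$ makes $\beta'$ large enough for the Garban--Spencer bound to apply. For $d\geq3$ I take $k=1$. For $d=2$ I take $k\geq M\asymp 1/\eps$, which is admissible exactly because $n\eps^3\geq c_1$ gives $n\eps^2\gtrsim 1/\eps$.

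\textbf{Step 2 (embedding).} For $d\geq 3$, map the lattice box $hz+h\{0,\dots,L\}^d\subset\Lambda_h$ injectively into $\V_n$ by picking one point per cube, as with $\Psi_h$ in \zcref{sec:_irr_random_graphs}. For the cubes containing $x$ and $y$ I choose $x$ and $y$ themselves. Lattice neighbours land in adjacent cubes and are therefore adjacent in $\G_{n,\eps}$. The selection depends on the pair $(x,y)$, but the good event does not, so this is harmless. For $d=2$, where the lattice $\Z^2$ has no long-range order, I embed the three-dimensional slab $\{0,\dots,L\}^2\times\{1,\dots,M\}$. The vertex $(z,j)$ is sent to the $j$-th chosen point in the cube of $hz$, so distinct layers use distinct points of the same cube. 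All required edges are present because every point of a cube is adjacent to every point of the same cube and of neighbouring cubes. The three-dimensional result then applies to a box of side of order $\min(L,M)\sim 1/\eps$.

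\textbf{Step 3 (covering $\Omega_\delta$; main obstacle).} $\Omega_\delta$ need not be a box, and $x,y$ may be far apart relative to any single box sitting inside $\Omega_{\delta/2}$. I would connect $x$ to $y$ by a chain $x=z_0,z_1,\dots,z_N=y$ of vertices such that consecutive points lie in a common box of side about $\delta/4$ contained in $\Omega_{\delta/2}$. Compactness of $\overline{\Omega_\delta}$ and connectedness, via a Lipschitz-domain covering argument, should give $N\leq N(\delta)$; connectedness of $\Omega_\delta$ may need care, and I would handle it by working in the component structure or shrinking $\delta$. Then I would iterate the elementary bound $1-\cos(a+b)\leq 2(1-\cos a)+2(1-\cos b)$ along the chain to get $1-\sprod{\cos(\theta_x-\theta_y)}{}\leq 2^{N}C\,\Gamma(\beta')=:a\,\Gamma(\beta')$. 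The delicate points are making the Garban--Spencer input uniform in the box size and in the location of the two points inside the box (possibly near its boundary), and checking that the boxes fit, which is where $\eps<\eps_0(\delta)$ is used. I expect this uniform finite-volume input to be the main obstacle.
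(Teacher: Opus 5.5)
Your overall architecture matches the paper's. Setting couplings to zero and leaving free isolated spins is exactly \zcref{thm:comparison}. The cube-occupancy and degree events give couplings $\gtrsim 1/(n\eps^d)$ on the embedded edges, and $x,y$ serve as representatives of their own cubes. In $d=2$ you stack $\asymp 1/\eps$ points per square into a third lattice direction, which is admissible precisely because $n\eps^3\ge c_1$. Finally you chain over a finite cover of $\Omega_\delta$.

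The gap is at the step you yourself call the main obstacle, and it is not merely technical. The finite-volume input you expect, a bound $1-C\,\Gamma(\beta')$ uniform in the box and in the positions of the two points, is not what is available. \zcref{thm:nishimori_long_range_order} applies only to pairs with $B\bigl(\tfrac{\hat x+\hat y}{2};2\abs{\hat x-\hat y}\bigr)\cap\Z^d\subset\Lambda$, and your Step 3 geometry does not guarantee this. Two vertices near opposite corners of a box of side $\delta/4$ are at distance about $\sqrt d\,\delta/4$. The required ball then has radius about $\sqrt d\,\delta/2$ around the centre of the box, so it sticks out of the box and need not even stay inside $\Omega$. As designed, the central application of the lattice result therefore fails, and the proposal does not say how to repair it.

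The missing idea is to localise the pairs so that the ball condition holds automatically, rather than to look for a boundary-uniform lattice estimate. The paper does not restrict to a box at all. It embeds the whole set $\Lambda=\frac1h\Lambda_h$ (any finite subset of $\Z^d$ is allowed), and for $z\in\Omega$ it takes $r_z$ with $B(z;10r_z)\subset\Omega$. For $x,y\in\V_n\cap B(z;r_z)$ and $h(1+\sqrt d)\le r_z$, every lattice point $\hat w$ of the ball above satisfies $\abs{h\hat w-z}<10r_z$, hence $\hat w\in\Lambda$.

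The chaining then goes as follows, and each step is within reach of your framework.
\begin{itemize}
\item Pass to a connected open $U_\delta$ with $\Omega_\delta\subset\overline{U_\delta}\subset\Omega$; your compactness-and-shrinking idea suffices for this.
\item Cover $\overline{U_\delta}$ by finitely many balls $B(z_k;r_{z_k})$.
\item Link consecutive balls through a vertex in a cube contained in their overlap; this needs $\eps$ small depending on $\delta$.
\item Conclude with \zcref{thm:scalar_product_triangle_expectation}. Your inequality $1-\cos(a+b)\le 2(1-\cos a)+2(1-\cos b)$ is an equally good substitute.
\end{itemize}

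In $d=2$ the same ball condition is what forces $x$ and $y$ onto adjacent middle layers. It also requires a stack height large compared to $\abs{\hat x-\hat y}$; the paper takes $k\approx 12\diam\Omega/h$. A small point you also skip: if $x$ and $y$ lie in the same cube, $\hat y$ must be moved to a neighbouring site, otherwise the embedding is not injective.
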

	
	We will use the following pointwise lower bound on the two point function in finite subsets of $\Z^d$ which was established by Garban and Spencer in \cite{Garban:ContinuousSymmetryBreaking2022a}.
	\begin{theorem}[{\cite[Theorem 1.3 and Remark 1]{Garban:ContinuousSymmetryBreaking2022a}}]\label{thm:nishimori_long_range_order}
		Let $d \geq 3$ and recall $\Gamma(t) := \sqrt{\frac{\log t}{t}}$. There exist constants $\beta_*\equiv \beta_*(d) > 0$ and $a_* \equiv a_*(d) > 0$ such that for all $\beta \geq \beta_*$, all finite $\Lambda \subset \Z^d$, and all points $x,y \in \Lambda$ such that $B\left(\frac{x+y}2; 2\abs{x-y}_2\right)\cap \Z^{d} \subset \Lambda$
		\begin{equation}
			\sprod{\cos \theta_x - \theta_ y}{\Lambda; \beta} \geq 1 - a_*\, \Gamma (\beta ).
		\end{equation}
		Here, $\Lambda$ is seen as a finite subgraph of $\Z^d$ with nearest neighbor edge set, i.e. $\Lambda = (\Lambda, E_\Lambda, J_\Lambda)$, where
		\begin{equation}
			E_\Lambda := \big \{\set{i,j} \subset \Lambda \mid \abs{i-j}_1 = 1\big \}
			\quad \text{and} \quad
			J_\Lambda(i,j) := \begin{cases}
				1	& \text{if } \set{i,j} \in E_\Lambda,\\
				0	& \text{else}.
			\end{cases}
		\end{equation} 
	\end{theorem}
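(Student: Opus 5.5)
This result is quoted from \cite{Garban:ContinuousSymmetryBreaking2022a}, and we would follow the Garban--Spencer strategy. The idea is to compare the clean $XY$-model on $\Lambda$ with an $XY$-model carrying random gauge phases drawn on the \emph{Nishimori line}, where correlations become computable through a Gaussian field. The outline is as follows.

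First, introduce edge phases $A=(A_{ij})$ and the disordered measure with Hamiltonian $-\beta\sum_{\set{i,j}\in E_\Lambda}\cos(\theta_i-\theta_j-A_{ij})$. We would prove a Ginibre-type domination, in the spirit of \zcref{thm:decrease_coupling}: for every fixed $A$,
\begin{equation}
	\abs{\sprod{e^{i(\theta_x-\theta_y)}}{\Lambda,A;\beta}}\leq \sprod{\cos(\theta_x-\theta_y)}{\Lambda;\beta}.
\end{equation}
Averaging over any law of $A$ then gives a lower bound on the clean two-point function by the averaged modulus of the disordered one. It therefore suffices to find a law of $A$ for which the averaged disordered correlation is $\geq 1-a_*\Gamma(\beta)$.

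Second, choose $A$ on the Nishimori line, i.e.\ with density proportional to the disordered partition function times an a-priori law. We would take $A_{ij}=\nabla_{ij}\phi+\xi_{ij}$, where $\phi$ is a suitable Gaussian field on $\Lambda$ and the $\xi_{ij}$ are i.i.d.\ small noises whose variance is tuned to $\beta$. Gauge invariance, i.e.\ the substitution $\theta\mapsto\theta+\phi$, then decouples the spins from $\phi$. The correlation becomes, up to the noise contribution, $\mathbb E[\cos(\phi_x-\phi_y)]$ times a factor close to one. Since
\begin{equation}
	\mathbb E[\cos(\phi_x-\phi_y)]=e^{-\frac12\mathbb E[(\phi_x-\phi_y)^2]},
\end{equation}
everything reduces to a Green-function estimate. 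For a field at inverse temperature of order $\beta$, $\mathbb E[(\phi_x-\phi_y)^2]$ is of order $\beta^{-1}(G_\Lambda(x,x)+G_\Lambda(y,y)-2G_\Lambda(x,y))$, and in $d\geq3$ this is bounded uniformly in $\abs{x-y}$. The $\sqrt{\log\beta}$ in $\Gamma(\beta)$ should come from optimizing the noise level: the spin alignment cost, handled via \zcref{th:NNalign}, must be balanced against the Gaussian fluctuation cost.

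Third, handle the finite, free-boundary geometry. The hypothesis $B(\frac{x+y}2;2\abs{x-y}_2)\cap\Z^d\subset\Lambda$ is used to compare the Neumann-type Green function of $\Lambda$ with that of a box containing the ball. By monotonicity in the couplings (\zcref{thm:decrease_coupling}) we may also discard all edges of $\Lambda$ outside this ball, reducing to a ball where effective-resistance bounds of order one hold in $d\geq3$, exactly as in \zcref{def:effres}.

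I expect the main obstacle to be the second step. One must construct the Nishimori-line law precisely enough that the gauge change is exact and the noise-induced error is quantified as $O(\Gamma(\beta))$ rather than a weaker power of $\beta$. In particular, the large-deviation estimate on edges where $\abs{\xi_{ij}}$ is large, fed into the alignment bound, is where the $\log\beta$ must be tracked carefully. I would first try a Gaussian choice of $\xi$ with variance $\asymp\log\beta/\beta$ and check that both error terms are of size $\Gamma(\beta)$.
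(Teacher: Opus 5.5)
Note first that the paper gives no proof of this statement: it is quoted from Garban--Spencer \cite{Garban:ContinuousSymmetryBreaking2022a} and used as a black box in the proof of \zcref{thm:main_interior}. Your Step~1 matches the cited strategy. That is the Ginibre domination $\abs{\langle e^{i(\theta_x-\theta_y)}\rangle_{\Lambda,A;\beta}}\le\langle\cos(\theta_x-\theta_y)\rangle_{\Lambda;\beta}$, averaged over $A$, together with the choice of Nishimori-line disorder. Restricting to the ball via \zcref{thm:decrease_coupling} is harmless.

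The gap is Step~2, which as written proves nothing. With $A_{ij}=\phi_j-\phi_i+\xi_{ij}$, the gauge change $\vartheta=\theta+\phi$ gives $\langle e^{i(\theta_x-\theta_y)}\rangle_{\Lambda,A;\beta}=e^{-i(\phi_x-\phi_y)}\langle e^{i(\vartheta_x-\vartheta_y)}\rangle_{\Lambda,\xi;\beta}$. So $\phi$ drops out of the modulus entirely. Without the modulus, $\phi$ only contributes the factor $\mathbbm{E}[\cos(\phi_x-\phi_y)]\le 1$, which is a loss. The Green-function estimate is therefore irrelevant.

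All the content then sits in your ``factor close to one'', namely $\mathbbm{E}\langle e^{i(\theta_x-\theta_y)}\rangle_{\Lambda,\xi;\beta}$ for i.i.d.\ noise $\xi$. By your own Step~1 this is bounded above by the clean two-point function, so asserting it is at least as strong as the theorem.

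\zcref{th:NNalign} cannot supply this factor. It is a union bound carrying a factor $\abs{E_\Lambda}$; combined with Brascamp--Lieb and effective resistance it gives exactly \zcref{thm:expectation_bound_fixed_graph}, with error $c_0\abs{E}e^{-c\beta}$. That error degenerates once $\abs{x-y}^d\gtrsim e^{c\beta}$, so it cannot be uniform over all finite $\Lambda$. This is precisely why the paper has to import the present theorem instead of using its own Section~3 bound. Moreover, nearest-neighbour alignment holds verbatim in $d=2$, where the conclusion is false, so it cannot be where $d\geq 3$ enters.

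The missing idea is the Bayesian reading of the Nishimori line. Take a \emph{Haar-uniform} planted field $\sigma\in(\S^1)^\Lambda$ and set $e^{iA_{ij}}=\sigma_i\bar\sigma_j e^{i\xi_{ij}}$, with $\xi_{ij}$ i.i.d.\ of density proportional to $e^{\beta\cos\xi}$. On the Nishimori line this noise law is fixed by $\beta$, so there is no noise level to optimize. The disordered Gibbs measure is then exactly the posterior of $\sigma$ given $A$, so $\langle e^{i(\theta_x-\theta_y)}\rangle_{\Lambda,A;\beta}=\mathbbm{E}[\sigma_x\bar\sigma_y\mid A]$. Cauchy--Schwarz then gives $\mathbbm{E}\abs{\langle e^{i(\theta_x-\theta_y)}\rangle_{\Lambda,A;\beta}}\ge\abs{\mathbbm{E}[\sigma_x\bar\sigma_y\,\overline{T(A)}]}^2$ for every estimator $T$ with $\abs{T}\le 1$.

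With the Haar prior, trivial estimators give nothing, since $\mathbbm{E}[\sigma_x\bar\sigma_y]=0$. With a Gaussian prior, as in your proposal, the constant estimator would already do well. However, the posterior would then carry the Gaussian weight on real-valued fields and would no longer be the $XY$ measure, so nothing transfers back.

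What is needed is an explicit estimator that reconstructs $\sigma_x\bar\sigma_y$ from the noisy phases near $x,y$ with error independent of the volume. Garban--Spencer obtain it from the multiscale group-synchronization theorem of Abbe, Massouli\'e, Montanari, Sly and Srivastava. That is where both $d\geq 3$ and uniformity in $\Lambda$ come from. Without an input of this kind, your outline does not prove the statement.
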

	
	\subsection{A comparison result}
	
	The main tool in establishing \zcref{thm:main_interior} will be the following comparison result, which will allow an application of the lower bound from \zcref{thm:nishimori_long_range_order}.
	
	\begin{proposition}\label{thm:comparison}
		Let $G_1=(V_1, E_1, J_1)$ and $G_2 = (V_2, E_2, J_2)$ be two weighted graphs, $\beta_1, \beta_2 > 0$. Let $\tau \colon V_1 \to V_2$ be injective and assume the coupling constants satisfy the bound
		\begin{equation}
			\beta_1 J_1(i,j) \leq \beta_2 J_2(\tau(i), \tau(j))
			\quad \text{for all $i,j \in V_1$}.
		\end{equation}
		Then
		\begin{equation}
			\sprod{\cos(\theta_x - \theta_y)}{G_1; \beta_1} 
			\leq \sprod{\cos(\theta_{\tau(x)} - \theta_{\tau(y)})}{G_2; \beta_2}
			\quad \text{for all $x,y \in V_1$}.
		\end{equation}
	\end{proposition}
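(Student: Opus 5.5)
The plan is to reduce the comparison to the monotonicity of the two-point function in the coupling constants (\zcref{thm:decrease_coupling}). The first step embeds $G_1$ into the vertex set of $G_2$. Since only the products $\beta J$ enter the measure \eqref{eq:def_$XY$_on_G}, we may absorb $\beta_1$ and $\beta_2$ into the couplings and assume $\beta_1=\beta_2=1$, with $J_1(i,j)\le J_2(\tau(i),\tau(j))$.

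Define the couplings $\tilde J_1$ on $V_2\times V_2$ by $\tilde J_1(\tau(i),\tau(j)):=J_1(i,j)$, and set $\tilde J_1=0$ on all pairs not of this form. Injectivity of $\tau$ makes this well defined and symmetric. The resulting graph $\tilde G_1=(V_2,E_{\tilde J_1},\tilde J_1)$ consists of a copy of $G_1$ on $\tau(V_1)$ plus isolated vertices $V_2\setminus\tau(V_1)$. By construction $\tilde J_1\le J_2$ coordinatewise on $V_2\times V_2$. On the pairs $(\tau(i),\tau(j))$ this is the hypothesis, and elsewhere it holds because $J_2\ge0$.

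The second step identifies the two-point function of $\tilde G_1$ with that of $G_1$. In $\mu_{\tilde G_1;1}$ the density factorizes into the density for the angles on $\tau(V_1)$, which is exactly the density of $\mu_{G_1;1}$ transported by $\tau$, times the uniform Lebesgue measure on $[-\pi,\pi)$ for each isolated vertex. The isolated vertices therefore integrate out and cancel against the partition function. This gives $\sprod{\cos(\theta_{\tau(x)}-\theta_{\tau(y)})}{\tilde G_1;1}=\sprod{\cos(\theta_x-\theta_y)}{G_1;1}$.

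The final step applies \zcref{thm:decrease_coupling} on $\Lambda=V_2$ with $J=\tilde J_1\le J'=J_2$. This yields $\sprod{\cos(\theta_{\tau(x)}-\theta_{\tau(y)})}{\tilde G_1;1}\le \sprod{\cos(\theta_{\tau(x)}-\theta_{\tau(y)})}{G_2;1}$, and combining it with the second step gives the claim.

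I expect no real obstacle. The one point needing care is that \zcref{thm:decrease_coupling} must be applied on the full coordinatewise segment between $\tilde J_1$ and $J_2$, where some couplings vanish. The function $f$ in that Fact is defined on all of $[0,\infty)^{\Lambda\times\Lambda}$ and the graph's edge set varies with $J$, so this is covered. Concretely, one integrates the nonnegative partial derivatives along the straight segment $J_t=(1-t)\tilde J_1+tJ_2$.
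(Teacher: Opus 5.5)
Your proposal is correct and follows essentially the same route as the paper: you extend $J_1$ by zero to a coupling on $V_2\times V_2$ via the injection $\tau$, integrate out the resulting isolated vertices to identify the two-point function with that of $G_1$, and conclude with the Ginibre monotonicity of \zcref{thm:decrease_coupling}. The differences are only cosmetic. You absorb $\beta_1,\beta_2$ into the couplings explicitly, which the paper does implicitly when it writes $\beta_1 J_2'\le\beta_2 J_2$. You also apply the monotonicity after the identification step rather than before it.
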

	\begin{proof}
		Set $V_2' := \tau(V_1) \subset V_2$ and consider the graph $G_2' := (V_2, E'_2, J_2')$ where
		\begin{equation}
		J_2'(i,j) = \begin{cases}
		J_1(\tau^{-1}(i), \tau^{-1}(j)) & 	\quad\text{if}\quad i,j\in V_{2}'\\
                0 & 	\quad\text{otherwise},
		\end{cases}
		\end{equation}
	and $E'_2 := \set{\set{i,j} \subset V_2 \mid J'_2(i,j) > 0}$.
	With this definition $\beta_1 J_2' \leq \beta_2 J_2$ and hence
		\begin{equation}\label{eq:bound_G_G''}
			\sprod{\cos(\theta_{x'} - \theta_{y'})}{G_2'; \beta_1}
			\leq \sprod{\cos(\theta_{x'} - \theta_{y'})}{G_2; \beta_2}
			\quad \text{for all } x',y' \in V_{2},
		\end{equation}
		by an application of \zcref{thm:decrease_coupling}.
		We compute
		\begin{equation}
			\begin{aligned}
				Z_{G_2', \beta_1}
				&=\int_{[-\pi, \pi)^{V_2}} e^{{\beta_1} \sum_{\set{i,j} \in E_2'} J_2'(i,j) \cos \theta_i - \theta_j} \d \theta_{V_2} \\
				&=(2\pi)^{\abs{V_2 \setminus V_1}}\int_{[-\pi, \pi)^{\tau(V_1)}} e^{\frac{\beta_1} 2 \sum_{(i,j) \in V_1 \times V_1} J_2'(\tau(i), \tau(j)) \cos \theta_{\tau(u)} - \theta_{\tau(v)}} \d \theta_{\tau(V_1)} \\
				&=(2\pi)^{\abs{V_2 \setminus V_1}} \int_{[-\pi, \pi)^{V_1}} e^{\frac{\beta_1} 2 \sum_{(i,j) \in V_1 \times V_1} J_2'(i,j) \cos \theta_i - \theta_j} \d \theta_{V_1} 
				=Z_{G_1, \beta_1} 
			\end{aligned}
		\end{equation}
		and analogously
		\begin{equation}
			\begin{aligned}
				&=\int_{[-\pi, \pi)^{V_2}} \cos\left({\theta_{\tau(x)} - \theta_{\tau(y)}}\right)  e^{{\beta_1} \sum_{\set{i,j} \in E_2'} J_2'(i,j) \cos \theta_i - \theta_j} \d \theta_{V_2} \\
				&=(2\pi)^{\abs{V_2 \setminus V_1}} \int_{[-\pi, \pi)^{V_1}} \cos\left({\theta_{x} - \theta_{y}}\right) e^{\frac{\beta_1} 2 \sum_{(i,j) \in V_1 \times V_1} J_2'(i,j) \cos \theta_i - \theta_j} \d \theta_{V_1}.
			\end{aligned}
		\end{equation}
		Thus,
		\begin{equation}
			\sprod{\cos(\theta_x - \theta_y)}{G_1; \beta_1}
			= \sprod{\cos(\theta_{\tau(x)} - \theta_{\tau(y)})}{G_2'; \beta_1}
		\end{equation}
		for all $x, y \in V_1$. Together with \eqref{eq:bound_G_G''} this implies the claim.
	\end{proof}
	
	\subsection{The three-dimensional case}
	
	For the rest of this section, assume that $d \geq 3$. 
	
	\begin{proof}[Proof of \zcref{thm:main_interior} in $d \geq 3$]
		The proof is organized in  two steps.  
		First, we show that \eqref{eq:long_range_order_graph_interior_paralell_set} holds with $\Omega_{\delta}$ replaced by $ B(z; r_{z})$
for some $z \in \Omega$ and 
		\begin{equation}\label{eq:defrz}
			r_z := \sup \set{r > 0 \mid B(z; 10\, r) \subset \Omega}= \max \set{r > 0 \mid B(z; 10\, r) \subset \Omega}>0.
		\end{equation}
		Second, we will propagate this bound over finite distances by a finite cover of $\Omega_{\delta}$ with appropriately chosen balls.
		
		Let $\bar\eta > 0$ such that $[0, \bar\eta] \subset \supp \eta$ and set 
\begin{equation}\label{eq:hetadef}
h :=\frac{\bar\eta}{2} \frac{ \eps}{(2+\sqrt d)}.
\end{equation}
We choose $\eps$ small enough such that $\frac{\bar\eta}2 \eps \leq r_z$. In particular $h\leq  \frac{r_{z}}{(2+\sqrt d)}$.
		We consider two events $\mathcal A_1, \mathcal A_2 \subset \Xi$ defined by
		\begin{equation}\label{eq:A12altern}
			\begin{gathered}
				\mathcal A_1 := \set{\text{for all } v \in \Lambda_h \colon \abs{\V_n \cap \square_h(v)} \geq 1},\\
				\mathcal A_2 := \set{\text{for all } x \in \V_h \colon \deg_{n,\eps}(x) \leq nc_2},
			\end{gathered}
		\end{equation}
		where $c_2$ is the constant from \eqref{eq:degree_bound}. By \zcref{thm:each_square_has_k_points,thm:degrees_bounded} it holds that
		\begin{equation} 
			\P(\mathcal A_1 \cap \mathcal A_2) \geq 1 - c n e^{-c' n \eps^d}
		\end{equation}
		for positive constants $c, c'$ depending only on $\Omega$, $\eta$ and $\rho$. From now on, we consider $\omega$ in $\mathcal A_1 \cap \mathcal A_2$.\vspace{0,2cm}

		\emph{Step 1.} We define the rescaled set $\Lambda = \frac1h \Lambda_h$ and the graph $G = (\Lambda, E, J)$, where
		\begin{equation}
			E := \set{\set{i,j} \subset \Lambda \mid \abs{i-j}_1 = 1}
			\quad\text{and}\quad
			J(i,j) := \begin{cases}
				1 & \text{if } \set{i,j} \in E, \\
				0 & \text{else}.
			\end{cases}
		\end{equation}
		Note that $\Lambda$ defines a $d$-dimensional subset of $\Z^d$ and $J$ is a nearest neighbor interaction, so we can hope to apply \zcref{thm:nishimori_long_range_order}. 
		
		Let $z \in \Omega$ and recall the definitions of $r_{z}$  and $h$ \eqref{eq:defrz} \eqref{eq:hetadef}.  Fix $x,y \in \V_n(\omega) \cap B(z; r_z)$. We will apply \zcref{thm:comparison} with the choices
		\begin{equation}
			G_1 \equiv G \text{ and } G_2 = \G_{n,\eps}(\omega).
		\end{equation} 
		To define the injection $\tau \equiv \tau_{xy} \colon \Lambda \to \V_n(\omega)$ we proceed as follows. Let $\hat x, \hat y \in \Lambda$ such that
		\begin{equation}
			x \in \square_h(h \hat x) \quad \text{and} \quad y \in \square(h \hat y).
		\end{equation}
		If $\hat x = \hat y$ we redefine $\hat y$ to be any nearest neighbor of $\hat x$, i.e. $J(\hat x, \hat y) = 1$. 
		These points $\hat x, \hat y$ always exist, because $ \square_{2h}(x)\subset B(x;h \sqrt d) \subset B(z; r_z + h \sqrt d)\subset B(z; 2r_{z})\subset \Omega$.
		For $\hat v \in \Lambda$ we define
		\begin{equation}
			\tau(\hat v) := \begin{cases}
				x & \text{if } \hat v = \hat x,\\
				y & \text{if } \hat v = \hat y,\\
				\arg \min \set{\norm{z}_2 \mid z \in \V_n(\omega) \cap \square_h(h\hat{v})} & \text{else}.
			\end{cases}
		\end{equation}
		The choice for the points which are not equal to $\hat x$ or $\hat y$ is not important here, we could choose any point in $\V_n(\omega) \cap \square_h(h\hat{v})$. By construction this function is injective and 
		\[
		|\tau(\hat v)-h\hat v|\leq \frac{h\sqrt{d}}{2}\quad \forall \hat v\neq \hat y,\qquad |\tau(\hat y)-h\hat y|\leq h\left( \tfrac{\sqrt d}{2}+1\right).
		\]
Let $\hat u, \hat v \in \Lambda$ such that $J(\hat u, \hat v) = 1$, then 
		$\abs{\tau(\hat u) - \tau(\hat v)} \leq h (2+\sqrt{d})=\frac{\bar\eta}2\eps$ and therefore
		\begin{equation}
			\beta \mathcal J_{n,\eps}^\omega(\tau(\hat u), \tau(\hat v))
			= \frac{\beta \eta \left (\frac{\tau(\hat u)- \tau(\hat v)}{\eps}\right)}{\eps^d \sqrt{\deg_{n,\eps}(\tau(\hat u)) \deg_{n,\eps}(\tau(\hat v))}} 
			\geq \frac{\beta \eta\left(\frac{\bar\eta}2\right)}{c_2 n \eps^d}
			= \hat \beta_{n,\eps} J(\hat u, \hat v),
		\end{equation}
		where we used $\omega \in \mathcal{A}_{2}$ and we set
		\begin{equation}
			\hat \beta_{n,\eps} := \frac{\beta \eta\left(\frac{\bar\eta}2\right)}{c_2 n \eps^d}.
		\end{equation}
		Since $\mathcal J_{n,\eps}^\omega$ is non-negative we can therefore apply \zcref{thm:comparison} to obtain
		\begin{equation}
			\sprod{\cos \theta_x - \theta_y}{\G_{n,\eps}(\omega);\beta}
			=\sprod{\cos \theta_{\tau(\hat x)} - \theta_{\tau(\hat y)}}{\G_{n,\eps}(\omega);\beta}
			\geq \sprod{\cos \theta_{\hat x} - \theta_{\hat y}}{G;\hat\beta_{n,\eps}}.
		\end{equation}
		To apply \zcref{thm:nishimori_long_range_order} on the right-hand side, we need to prove that $B\left(\frac{\hat x + \hat y}2; 2 \abs{\hat x - \hat y}\right) \cap \Z^d \subset \Lambda$. Let $\hat w \in B\left(\frac{\hat x + \hat y}2; 2 \abs{\hat x - \hat y}\right) \cap \Z^d$. We show first
		$h\hat w \in\Omega$. Indeed using the definition of $r_z$ \eqref{eq:defrz}, we compute
		\begin{equation}
			\abs{h \hat w - z}
			\leq \frac52 \left(\abs{h \hat x - z} + \abs{h \hat y - z}\right)
			< 5 r_z + \frac52 \left(\abs{h \hat x - x} + \abs{h \hat y - y}\right) 
			\leq 5 (r_z + h (1+\sqrt d)).
		\end{equation}
Since we assumed that $h (1+\sqrt d) \leq \frac{\bar\eta}2\eps\leq  r_z$ we obtain that $\abs{h \hat w - z} < 10 r_z$ and therefore $h \hat w \in B(z; 10r_z) \subset \Omega$.
	 Therefore, by construction $h \hat w \in \Lambda_h$ and $\hat w \in \Lambda$.
		Hence, for $\hat\beta_{n,\eps} \geq \beta_*,$
		we obtain
		\begin{equation}
			\sprod{\cos \theta_x - \theta_y}{\G_{n,\eps}(\omega);\beta}
			\geq 1 - a_*\Gamma\left({\hat\beta_{n,\eps}}\right).
		\end{equation}
		where $a_*$ and $\beta_*$ are the constants from \zcref{thm:nishimori_long_range_order}.
		Adjusting the constants, this can be brought into the form of the statement \eqref{eq:long_range_order_graph_interior_paralell_set}.\vspace{0,2cm}

		\emph{Step 2.} By \zcref{thm:inner_parallel_set_connected} below, there exists a connected open set $U_\delta$ such that $\Omega_{\delta} \subset \overline{U_\delta} \subset \Omega$.
		Let $z_1, \ldots, z_{M-1} \in U_{\delta}$ such that
		\begin{equation*}
			\overline{U_\delta} \subset \bigcup_{k=1}^{M-1} B(z_k; r_{z_k}) =: W_{\delta} \subset \Omega.
		\end{equation*}
		Here $M \in \N$ is a constant that depends on $\Omega$ and $\delta$. Such a finite collection of points exists by compactness of $\overline{U_\delta}$. $W_\delta$ is connected because $U_\delta$ is connected.
		
		Define an undirected graph $T = (H, F)$, where $H = \set{z_1, \ldots, z_{M-1}}$ and
		\begin{equation}
			F := \set{\set{z, w} \subset H  \mid z \neq w \text{ and } B(z; r_z) \cap B(w; r_w) \neq \emptyset}.
		\end{equation}
		This graph is connected, which can be seen as follows (cf. \zcref{fig:3D_propagation}).
				There exists $j \in \set{2, \ldots, M-1}$ such that $B(z_1; r_{z_1}) \cap B(z_j; r_{z_j}) \neq \emptyset$, because otherwise we could write $W_\delta$ as the union of two disjoint, non-empty open sets $B(z_1; r_{z_1})$ and $\bigcup_{k=2}^{M-1} B(z_k; r_{z_k})$ which contradicts the connectivity of $W_\delta$. Without loss of generality we assume that $j = 2$. Then the vertices $z_1$ and $z_2$ are connected in $T$.
			Further, there exists $i \in \set{1,2}$ and $j \in \set{3, \ldots, M-1}$ such that $B(z_i; r_{z_i}) \cap B(z_j; r_{z_j}) \neq \emptyset$, because otherwise we would again obtain a contradiction to the connectivity of $W_\delta$. We assume without loss of generality that $j=3$. Thus all vertices in $\set{z_1, z_2, z_3}$ are (pairwise) connected in $T$. Iterating this procedure shows that $T$ is connected.
		
		For the rest of the proof we assume that $0\! <\! \eps \!\leq\! \eps_0(\Omega, \delta),$ with $\varepsilon_{0}$ small enough, such that $\bar\eta\eps_{0} \leq r_z$ for all $z \in H$ (cf. Step 1) and such that for any $\set{z,w} \in F$ there exists $a \in \Lambda_h$ with $\square_h(a) \subset B(z; r_z) \cap B(w; r_w)$.
		
		Let $x, y \in \V_n(\omega) \cap W_\delta$ and assume that $x \in B(z_x; r_{z_x})$ and $y \in B(z_y; r_{z_y})$ for $z_x, z_y \in H$. 		
		Let $z_x = w_1, w_2, \ldots, , w_{k-1}, w_k=z_y$ be a path in $T$ between $z_x$ and $z_y$ which is ordered such that $\set{w_i, w_{i+1}} \in F$ for all $i \in {1, \ldots, k-1}$. Note that $k \leq M-2$, because there are $M-1$ vertices in $H$. To any edge $\set{w_i, w_{i+1}}$ we associate some $\alpha_i \in \V_n(\omega) \cap \square_h(a_i)$, where $\square_h(a_i) \subset B(w_i; r_{w_i}) \cap B(w_{i+1}; r_{w_{i+1}})$. The point $\alpha_i$ exists, because we are considering realizations in the event $\mathcal A_{1}$ defined in \eqref{eq:A12altern}.
%		
%		
		%%%%%%%%%%%%%%%%%%%%%%%%%%%%%%%%%%%%%%%%%%%%%%%%%%%%%%%%%%%%%%%
		\begin{figure}
			\centering			
			\def\svgwidth{0.6\textwidth}
			\begingroup
			\makeatletter
			\newcommand*\fsize{\dimexpr\f@size pt\relax}
			\newcommand*\lineheight[1]{\fontsize{\fsize}{#1\fsize}\selectfont}
			\setlength{\unitlength}{\svgwidth}
			\global\let\svgwidth\undefined
			\makeatother
			\begin{picture}(1,0.40854425)
				\lineheight{1}
				\setlength\tabcolsep{0pt}
				\put(0,0){\includegraphics[width=\unitlength,page=1]{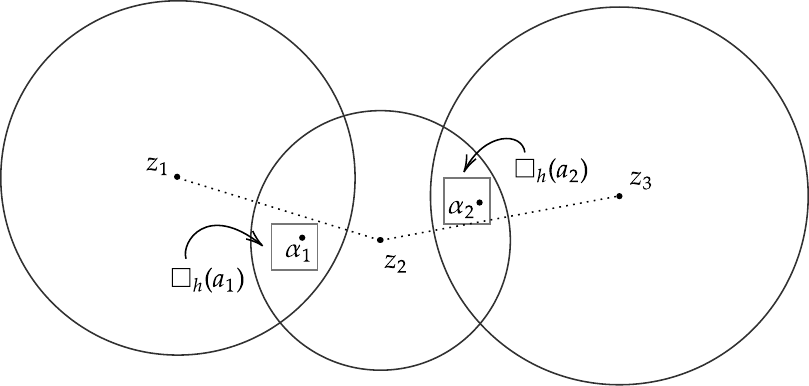}}
			\end{picture}
			\endgroup
			%\includesvg[width=0.6\textwidth]{img/3D_propagation}
			\caption{Balls $B(z_i; r_{z_i})$ for $i=1,2,3$. The dotted lines represent edges in $H$. The parameter $h$ is assumed small enough, such that $\square_h(a_1) \subset B(z_1; r_{z_1}) \cap B(z_2; r_{z_2})$ and $\square_h(a_2) \subset B(z_2; r_{z_2}) \cap B(z_3; r_{z_3})$. The points $\alpha_1, \alpha_2 \in \V_n$ are both contained in $B(z_2; r_{z_2})$.}
			\label{fig:3D_propagation} 
		\end{figure}
		Since two consecutive edges $e_i, ~e_{i+1}$  share a common endpoint $w_{i+1} \in H$, we have 
		\begin{equation}
			\alpha_i, \alpha_{i+1} \in B(w_{i+1}; r_{w_{i+1}})
		\end{equation}
		and therefore again by the first step
		\begin{equation}
			\sprod{\cos(\theta_{\alpha_i} - \theta_{\alpha_{i+1}})}{\G_{n,\eps}(\omega); \beta}
			\geq 1 - a_*\Gamma\left(\frac\beta{n\eps^d}\right)
		\end{equation}
		for all $i \in \set{1, \ldots, k-1}$. Further, since $
		x, \alpha_1 \in B(w_1; r_{w_1})$
		we have
		\begin{equation}
			\sprod{\cos(\theta_{x} - \theta_{\alpha_{1}})}{\G_{n,\eps}(\omega); \beta}
			\geq 1 - a_*\Gamma\left(\frac\beta{n\eps^d}\right)
		\end{equation}
		and since 
		$y, \alpha_{k-1} \in B(w_k; r_{w_k})$
		we have
		\begin{equation}
			\sprod{\cos(\theta_{y} - \theta_{\alpha_{k-1}})}{\G_{n,\eps}(\omega); \beta} 
			\geq 1 - a_*\Gamma\left(\frac\beta{n\eps^d}\right).
		\end{equation}
		Therefore, an application of \zcref{thm:scalar_product_triangle_expectation} below yields that
		\begin{equation}
			\sprod{\cos(\theta_{x} - \theta_{y})}{\G_{n,\eps}(\omega); \beta}
			\geq 1 - a_*(k+2)^2 \Gamma\left(\frac\beta{n\eps^d}\right)
			\geq 1 - a_* M^2 \Gamma\left(\frac\beta{n\eps^d}\right)
		\end{equation}
		as claimed with $a:= a_* M^2$.
	\end{proof}
	%%%%%%%%%%%%%%%%%%%%%%%%%%%%%%%%%%%%%%%%%%%%%%
	We conclude this subsection with two technical results used in the proof above.
	
	\begin{lemma}\label{thm:scalar_product_triangle_expectation}
		Let $\mu$ be a probability measure on $(\S^{n-1})^\Lambda$ with associated expectation $\sprod{\cdot}{}$. Suppose that $x_1, \ldots, x_L \in \Lambda$ satisfy
		\begin{equation}
			\sprod{S_{x_k} \cdot S_{x_{k+1}}}{} \geq 1 - \alpha \quad\text{for all } k \in \set{1, \ldots, L}
		\end{equation}
		for some $\alpha > 0$. Then
		\begin{equation}
			\sprod{S_{x_1} \cdot S_{x_{L}}}{} \geq 1 - L^2\alpha.
		\end{equation}
	\end{lemma}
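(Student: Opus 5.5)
Since every spin lies on the unit sphere, I would work with squared distances: $\abs{S_x - S_y}^2 = 2 - 2\, S_x\cdot S_y$. The hypothesis then reads $\sprod{\abs{S_{x_k}-S_{x_{k+1}}}^2}{} \leq 2\alpha$ for each consecutive pair. The conclusion is equivalent to $\sprod{\abs{S_{x_1}-S_{x_L}}^2}{} \leq 2L^2\alpha$, and proving it reduces to a triangle inequality in the Hilbert space $L^2(\mu;\R^n)$.

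Concretely, I will write the telescoping sum $S_{x_1}-S_{x_L} = \sum_{k=1}^{L-1}(S_{x_k}-S_{x_{k+1}})$ pointwise. Minkowski's inequality in $L^2(\mu;\R^n)$, where $\norm{F}_{L^2} = \sprod{\abs{F}^2}{}^{1/2}$, then gives
\begin{equation}
\sprod{\abs{S_{x_1}-S_{x_L}}^2}{}^{1/2} \leq \sum_{k=1}^{L-1} \sprod{\abs{S_{x_k}-S_{x_{k+1}}}^2}{}^{1/2} \leq (L-1)\sqrt{2\alpha}.
\end{equation}
Squaring yields $2 - 2\sprod{S_{x_1}\cdot S_{x_L}}{} \leq 2(L-1)^2\alpha$, hence $\sprod{S_{x_1}\cdot S_{x_L}}{} \geq 1-(L-1)^2\alpha \geq 1-L^2\alpha$. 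An equivalent route is Cauchy--Schwarz on the sum of $L-1$ vectors, giving $\abs{\sum v_k}^2 \leq (L-1)\sum\abs{v_k}^2$; taking expectations gives the weaker but still sufficient bound $(L-1)^2\alpha$ directly.

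I do not expect a real obstacle here. The only points to watch are two bookkeeping matters. First, the hypothesis is stated for $k\in\{1,\dots,L\}$, which would reference $x_{L+1}$; I will use it only for $k\le L-1$. Second, in the application above to the $XY$-model, $S_x=(\cos\theta_x,\sin\theta_x)$, so that $S_x\cdot S_y=\cos(\theta_x-\theta_y)$ and the lemma applies with $n=2$.
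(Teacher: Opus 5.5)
Your proof is correct and is essentially the paper's argument. The paper inducts on $L$ and bounds the cross term $\sprod{(S_{x_1}-S_{x_{L-1}})\cdot(S_{x_{L-1}}-S_{x_L})}{}$ by Cauchy--Schwarz, which is the $L^2(\mu)$ triangle inequality applied one step at a time, whereas you apply Minkowski to the whole telescoping sum at once. Your version yields the slightly sharper bound $1-(L-1)^2\alpha$; the paper's $L^2$ comes from starting its induction at $1-\alpha$ for $L=1$ instead of at $1$. Your reading of the hypothesis as holding only for $k\le L-1$ is the right one.
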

	\begin{proof}
		We prove the statement by induction on $L$. It is clear for $L=1$. Thus assume it holds for $L-1$. We compute
		\begin{equation}
			\begin{aligned}
				\sprod{(S_{x_1} - S_{x_{L-1}})\cdot(S_{x_{L-1}}-S_{x_L})}{}
				&= \sprod{S_{x_1} \cdot S_{x_{L-1}}}{} + \sprod{S_{x_{L-1}} \cdot S_{x_{L}}}{} - 1 - \sprod{S_{x_1} \cdot S_{x_{L}}}{}\\
				&\geq 1 - (L-1)^2 \alpha - \alpha - \sprod{S_{x_1} \cdot S_{x_{L}}}{}.
			\end{aligned}
		\end{equation}
		On the other hand, by Cauchy--Schwarz
		\begin{equation}
			\begin{aligned}
				\abs{\sprod{(S_{x_1} - S_{x_{L-1}})\cdot(S_{x_{L-1}}-S_{x_L})}{}}
				&\leq \sqrt{\sprod{S_{x_1} - S_{x_{L-1}}}{} \sprod{S_{x_{L-1}} - S_{x_{L}}}{}} \\
				&= 2\sqrt{\sprod{1-S_{x_1} \cdot S_{x_{L-1}}}{} \sprod{1-S_{x_{L-1}} \cdot S_{x_{L}}}{}}\\
				&\leq 2 (L-1) \alpha.
			\end{aligned}
		\end{equation}
		Hence
		\begin{equation}
			\sprod{S_{x_1} \cdot S_{x_{L}}}{}
			\geq 1 - ((L-1)^2 + 1 + 2 (L-1)) \alpha
			= 1 - L^2 \alpha
		\end{equation}
		concluding the proof.
	\end{proof}
	
	\begin{lemma}\label{thm:inner_parallel_set_connected}
		Let $\Omega \subset \R^d$ be a connected, open domain with Lipschitz boundary. For all $\delta > 0$ there exists an open connected subset $U_\delta \subset \Omega$ such that
		\begin{equation}
			\Omega_\delta \subset \overline{U_\delta} \subset \Omega,
		\end{equation}
		where $\Omega_\delta := \set{x \in \Omega \mid \dist(x, \partial \Omega) > \delta}$.
	\end{lemma}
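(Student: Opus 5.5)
The plan is to build $U_\delta$ as a finite union of small open cubes glued together by thin open tubes. Take $\delta > 0$. If $\Omega_\delta = \emptyset$, any small ball inside $\Omega$ works, so assume $\Omega_\delta \neq \emptyset$.

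\emph{Compactness and cubes.} The closure $\overline{\Omega_\delta} \subset \{x \in \Omega \mid \dist(x, \partial\Omega) \geq \delta\}$ is compact and lies in $\Omega$. Cover it by finitely many open cubes $Q_1, \dots, Q_N$ of side $s \leq \delta/(2\sqrt d)$, each meeting $\overline{\Omega_\delta}$. Then every point of $\overline{Q_i}$ is within $\sqrt d\, s \leq \delta/2$ of a point at distance at least $\delta$ from $\partial\Omega$. Hence $\overline{Q_i} \subset \Omega$, and in fact $\dist(\overline{Q_i}, \partial \Omega) \geq \delta/2$.

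\emph{Connecting by tubes.} Because $\Omega$ is open and connected, it is path connected. Fix the centre $q_1$ of $Q_1$, and for each $i \geq 2$ choose a continuous curve $\gamma_i \colon [0,1] \to \Omega$ from $q_1$ to the centre $q_i$ of $Q_i$. The image $\gamma_i([0,1])$ is compact, so $r_i := \dist(\gamma_i([0,1]), \partial \Omega) > 0$. Define the tube $T_i := \{x \mid \dist(x, \gamma_i([0,1])) < r_i/2\}$. It is open and connected, since it is a union of balls centred on a connected set, each ball meeting that set. Its closure lies in $\{x \mid \dist(x,\gamma_i([0,1])) \leq r_i/2\}$, which is contained in $\Omega$ because $r_i = \dist(\gamma_i([0,1]), \complement\Omega)$.

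\emph{Assembly.} Set
\begin{equation*}
	U_\delta := \bigcup_{i=1}^N Q_i \cup \bigcup_{i=2}^N T_i .
\end{equation*}
This set is open. It is connected because each $T_i$ is connected and meets both $Q_1$ and $Q_i$ (it contains $q_1$ and $q_i$), and all the pieces are connected. Next, $\Omega_\delta \subset \bigcup_i Q_i \subset U_\delta \subset \overline{U_\delta}$. Finally, the closure of a finite union is the union of the closures, each of which was shown to lie in $\Omega$, so $\overline{U_\delta} \subset \Omega$.

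The only real subtlety is keeping the closures strictly inside $\Omega$. This is exactly why the cubes have side tied to $\delta$ and the tubes have half the distance from their curve to the boundary as radius. The Lipschitz assumption is not needed beyond the fact that $\Omega$ is open and connected.
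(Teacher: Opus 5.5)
Your proof is correct, but it follows a genuinely different route from the paper.

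The paper first reduces to $\delta\le\delta_0$. It then maps $\Omega$ by a bi-Lipschitz homeomorphism $\tau$ onto a bounded smooth domain $\hat\Omega$. It uses the cited fact that inner parallel sets $\hat\Omega_\eps$ of a $C^{2,\alpha}$ domain are connected for small $\eps$, and sets $U_\delta:=\tau^{-1}(\hat\Omega_{c_\tau\delta})$.

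You instead argue purely topologically. Compactness of $\overline{\Omega_\delta}$ gives finitely many cubes at distance at least $\delta/2$ from $\partial\Omega$. Path-connectedness of $\Omega$ lets you glue them with tubes whose radius is half the distance of each curve to $\R^d\setminus\Omega$. Each verification is sound: the tubes are open and connected, and the closure of the finite union stays in $\Omega$.

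Your approach is self-contained and more general. It needs no regularity of $\partial\Omega$, only that $\Omega$ is open, connected and bounded, and it avoids two nontrivial external results. Boundedness enters through the compactness of $\overline{\Omega_\delta}$. It is not restated in the lemma, but it is the paper's standing assumption and is equally implicit in the paper's proof.

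The paper's construction buys quantitative information instead. Because $\tau$ is bi-Lipschitz, for small $\delta$ its set satisfies $\Omega_\delta\subset U_\delta\subset\Omega_{c\delta}$ with $c>0$ depending only on $\tau$, whereas your tubes may come arbitrarily close to $\partial\Omega$. That extra control is never used, though. Step 2 of the proof of the interior theorem only needs $U_\delta$ open and connected with $\Omega_\delta\subset\overline{U_\delta}$ and $\overline{U_\delta}$ compact in $\Omega$. The number $M$ of balls there is allowed to depend on $\delta$ anyway, so your version fully suffices.
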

	\begin{proof}
		It suffices to prove the claim for all $\delta \leq \delta_0$ for some fixed $\delta_0 > 0$, because if $\delta > \delta_0$, then
		\begin{equation}
			\Omega_\delta \subset \Omega_{\delta_0},
		\end{equation}
		so we can choose $U_\delta := U_{\delta_0}$.
		
		If $\Omega$ has a $C^{2,\alpha}$ boundary, then it follows from \cite[Lemma 2.6]{Fonseca:SecondorderGammalimit2026} that $\Omega_\delta$ is connected for all $0 < \delta \leq \delta_0$, where $\delta_0(\Omega) > 0$ is a constant which depends only on the domain, so that $U_\delta := \Omega_\delta$. 
		
		If $\Omega$ is a general Lipschitz domain, there exists a bounded $C^\infty$ open domain $\hat \Omega$ and $\tau \colon \Omega \to \hat\Omega$ bi-Lipschitz \cite[Remark 5.3]{Ball:PartialRegularitySmooth2017}. Moreover, we can extend $\tau \colon \overline \Omega \to \overline{\hat\Omega}$ such that this remains true and $\tau(\partial \Omega) = \partial \hat\Omega$. Therefore, there exists $\eps_0 > 0$ such that $\hat \Omega_\eps$ is connected for all $\eps \leq \eps_0$.
		
		Let $\delta > 0$. We claim that $\tau(\Omega_\delta) \subset \hat \Omega_{c_\tau \delta}$ for a constant $c_\tau$ depending only on $\tau$. Indeed, for any $x \in \Omega_\delta$ and $y \in \partial \hat \Omega$ we have $\tau^{-1}(y)\in \partial  \Omega$ and hence
		\begin{equation}
			\abs{\tau(x) - y} 
			= \abs{\tau(x)- \tau(\tau^{-1}(y))}
			\geq c_{\tau} \abs{x - \tau^{-1}(y)}
			> c_{\tau} \delta
		\end{equation}
		which implies $\tau(x) \in \hat\Omega_{c_\tau \delta}$.
		If $c_\tau \delta \leq \eps_0$, we set $U_\delta := \tau^{-1}(\hat\Omega_{c_\tau \delta})$ which is connected by definition of $\eps_0$ and since $\tau$ is continuous. With this definition $\Omega_{\delta }\subset U_{\delta }$. Moreover, we have that $\overline{U_\delta} \subset \Omega$ because
		\begin{equation}
			x \in \overline{U_\delta}
			\Rightarrow \tau(x) \in \overline{\hat \Omega_{c_{\tau }\delta}} \subset \hat \Omega
			\Rightarrow x = \tau^{-1}(\tau(x)) \in \Omega,
		\end{equation}
		where the first implication follows by approximating $x$ by a sequence $(x_n)_{n \in \N} \subset U_\delta$ and using (sequential) continuity of $\tau$.
	\end{proof}
	
	\subsection{Modifications in two dimensions}
	
	To prove \zcref{thm:main_interior} in two dimensions, we need to modify the first step in the proof of the three dimensional case. The reason is that because of the Mermin--Wagner theorem there is no long range order of the \emph{$XY$}-model on the lattice and thus \zcref{thm:nishimori_long_range_order} does not apply. Instead, we use that the probability to find many (order $\mathcal O(h^{-1})$) random points in each $h$-square is large. These points are used to build a ``third dimension'' artificially, thus enabling the application of \zcref{thm:nishimori_long_range_order}.
	
	As before, let $\bar\eta > 0$ such that $[0, \bar\eta] \subset \supp \eta$ and set $h := \frac{\bar\eta\eps}{2 (1+\sqrt{2})}$. Further, we let
	\begin{equation}
		k :=\lfloor 12 \diam \Omega / h\rfloor.
	\end{equation}
	If $k$ is odd we redefine $k \mapsto k+1$ without relabeling. We consider two events $\mathcal A_1, \mathcal A_2 \subset \Xi$ defined by
	\begin{equation}
		\begin{gathered}
			\mathcal A_1 := \set{\text{for all } v \in \Lambda_h \colon \abs{\V_n \cap \square_h(v)} \geq k},\\
			\mathcal A_2 := \set{\text{for all } x \in \V_h \colon \deg_{n,\eps}(x) \leq nc_2},
		\end{gathered}
	\end{equation}
	where $c_2$ is the constant from \eqref{eq:degree_bound}.
	By \zcref{thm:each_square_has_k_points,thm:degrees_bounded}, we have that
	\begin{equation}
		\P(\mathcal A_1 \cap \mathcal A_2) \geq 1 - c n e^{-c' n \eps^2},
	\end{equation}
	provided that $n \eps^3$ is large enough. In the following, we thus consider $\omega \in \mathcal A_1 \cap \mathcal A_2$.
	
	This time, set $\Lambda := \frac1h \Lambda_h \times \set{1, \ldots, k} \subset \Z^3$ and set for $i,j \in \Lambda$
	\begin{equation}
		E := \set{\set{i,j} \subset \Lambda \mid \abs{i-j}_1 = 1}
		\quad\text{and}\quad
		J(i,j) := \begin{cases}
			1 & \text{if } \set{i,j} \in E, \\
			0 & \text{else},
		\end{cases}
	\end{equation}
	where we view $\Lambda$ as embedded in $\Z^3$ with the $\abs{\cdot}_1$ being the $\ell^1$-norm on the three dimensional lattice. As before, we wish to apply \zcref{thm:comparison} with 
	\begin{equation}
		G_1 = (\Lambda, E, J)
		\text{ and }
		G_2 \equiv \G_{n,\eps}(\omega).
	\end{equation}
	We need to construct an injection $\tau \colon \Lambda \to \V_n(\omega)$.
	Fix $z \in \Omega$ and define 
	\begin{equation}\label{eq:def_rz_dim2}
		r_z := \sup \set{r > 0 \mid B(z;10r_z) \subset \Omega},
	\end{equation}
	and choose $\eps$ small enough such that $\frac{\bar\eta}2\eps \leq r_z$. We show that \eqref{eq:long_range_order_graph_interior_paralell_set} holds with $\Omega_\delta$ replaced by $B(z, r_z)$. Fix $x,y \in \V_n(\omega) \cap B(z; r_z)$ and let $x', y' \in \Lambda_h$ be the unique points such that
	\begin{equation}
		x \in \square_h(x') \quad \text{and} \quad y \in \square_h(y').
	\end{equation}
	These points always exist by the same arguments as in the case $d\geq 3$.
	We define now 
	\begin{equation}
		\hat x:= \left(h^{-1}x', k/2+1\right) \in \Lambda
		\quad\text{and}\quad
		\quad \hat y:=\left(h^{-1}y', k/2\right)\in \Lambda.
	\end{equation}
	With this definition $\hat x\neq \hat y$ holds always.
	For any $v' \in \Lambda_h$ choose $k$ distinct points $v_1, \ldots, v_k \in \V_n(\omega) \cap \square_h(v')$ by some arbitrary rule (for example, by taking the $k$ points with the smallest $\ell^1$-norm). These points exist because of the assumption $\mathcal A_{1}$ on our random environment.
	Given $\hat v = (h^{-1}v', q) \in \Lambda$, $v' \in \Lambda_h$ and $q\in \set{1, \ldots, k}$, define $\tau \equiv \tau_{xy} \colon \Lambda \to \V_n(\omega)$ by
	\begin{equation}
		\tau(\hat v)=	\tau(h^{-1}v', q) := \begin{cases}
			x & \text{if } \hat v = \hat x, \\
			y & \text{if } \hat v = \hat y, \\
			v_{q} & \text{else}.
		\end{cases}
	\end{equation}
	By construction $\tau$ is injective. 
	
	Let now $\hat u = (h^{-1} u', p), \hat v = (h^{-1} v', q) \in \Lambda$ such that $J(\hat u, \hat v)=1$. Then we have either $\abs{u' - v'}_1 = h$ and $p = q$, or $u' = v'$ and $\abs{p - q}=1$. In the first case $\tau(\hat u) = u_{p}$ and $\tau(\hat v) = v_{q}$ belong to two different $h$-boxes which share a common one dimensional face, so that 
	\begin{equation}
		\abs{\tau(\hat u) - \tau(\hat v)}
		\leq \abs{\tau(\hat u) - u'} + \abs{v' - \tau(\hat v)} + \abs{u' - v'}
		\leq h (1+ \sqrt 2) = \frac{\bar\eta}2\eps,
	\end{equation}
	while in the second case $\tau(\hat u), \tau(\hat v) \in \square_h(u')$ and thus
	\begin{equation}
		\abs{\tau(\hat u) - \tau(\hat v)}
		\leq \abs{\tau(\hat u) - u'} + \abs{u' - \tau(\hat v)}
		\leq h \sqrt 2 \leq \frac{\bar\eta}2\eps.
	\end{equation}
	Hence, using \zcref{ass:eta} and that $\omega \in \mathcal{A}_{2},$
	\begin{equation}
		\beta \mathcal J_{n,\eps}^\omega(\tau(\hat u), \tau(\hat y)) 			
		= \frac{\beta\, \eta\left (\frac{\tau(\hat u)- \tau(\hat v)}{ \eps}\right )}{\eps^2 \sqrt{\deg_{n,\eps}(\tau(\hat u))\deg_{n,\eps}(\tau(\hat v))}} 
		\geq \frac{\beta \eta\left(\frac{\bar\eta}2\right)}{c_2 n \eps^2}
		= \hat \beta_{n,\eps} J(\hat u, \hat v).
	\end{equation}
	As in the proof for $d \geq 3$, we apply \zcref{thm:comparison} to obtain
	\begin{equation}
		\sprod{\cos \theta_x - \theta_y}{\G_{n, \eps}(\omega);\beta}
		=\sprod{\cos \theta_{\tau(\hat x)} - \theta_{\tau(\hat y)}}{\G_{n,\eps}(\omega);\beta}
		\geq \sprod{\cos \theta_{\hat x} - \theta_{\hat y}}{G;\hat\beta_{n,\eps}}.
	\end{equation}
	We can apply \zcref{thm:nishimori_long_range_order} if $B\left(\frac{\hat x + \hat y}2; 2 \abs{\hat x - \hat y}\right)\cap \Z^{3} \subset \Lambda,$
	where 
	\begin{equation}
		\frac{\hat x + \hat y}2= \left(\frac{x'+y'}{2h},  \frac{k+1}{2} \right),\quad
		\hat x - \hat y= \left(h^{-1} ( x'-y'), 1\right).
	\end{equation}
	Let $\hat w = (h^{-1}w',q)\in B\left(\frac{\hat x + \hat y}2; 2 \abs{\hat x - \hat y}\right)\cap \Z^{3}$.  On the one hand,
	using $r_{z}\leq \diam \Omega$, we argue
	\begin{equation}
		\begin{aligned}
			\abs{q - \frac{k+1}2}
			&\leq \abs{\hat w - \frac{\hat x + \hat y}2} 
			\leq 2 \abs{\hat x - \hat y} 
			= \frac 2h \abs{x' - y'}+2
			\leq 2(\sqrt 2 + 1) + \frac{4 r_z}{h}
			\leq 2(\sqrt 2 + 1) + \frac{k}3
		\end{aligned}
	\end{equation}
	by \eqref{eq:def_rz_dim2}. Thus $\bar w \in \set{1, \ldots, k}$ for $\eps$ small enough (and consequently $k$ large).
	
	On the other hand we have
	\begin{equation}
		\begin{aligned}
			\abs{w' - z}
			&\leq \abs{w' - \frac{x'+y'}2} + \frac12(\abs{x' - z} + \abs{y' - z})
			\leq \frac52 (\abs{x' - z} + \abs{y' - z})\\
			&\leq 5(r_z + \tfrac{h\sqrt d}2)
			< 10 r_z,
		\end{aligned}
	\end{equation}
	where we used that $h\sqrt d \leq \frac{\bar\eta}2\eps \leq r_z$ in the last line. Thus $w' \in B(z; 10r_z) \subset \Omega$ and hence $w' \in \Lambda_h$. Therefore, $\hat w \in \Lambda$.
	We can thus apply \zcref{thm:nishimori_long_range_order} and obtain for $\hat\beta_{n,\eps} \geq \beta_*$, where $\beta_*$ is the constant from \zcref{thm:nishimori_long_range_order},
	\begin{equation}
		\sprod{\cos \theta_x - \theta_y}{\G_{n, \eps}(\omega);\beta}
		\geq \sprod{\cos \theta_{\hat x} - \theta_{\hat y}}{G;\hat\beta_{n,\eps}}
		\geq 1 - a_*\Gamma\left(\frac\beta{n\eps^d}\right).
	\end{equation}
	For the second step we proceed exactly as in the case when $d \geq 3$. \qed
	
	\section*{Acknowledgments}
	
	It is our pleasure to thank Tom Spencer for sharing his many insights on the model. We also thank Tyler Helmuth for useful discussions
	on the strategy of some proofs.
	This reasearch was supported by the German Research Foundation (Deutsche Forschungsgemeinschaft, DFG)
	under Germany’s Excellence Strategy-GZ 2047/1 - 390685813,  by the DFG -- CRC 1720 -- 539309657 and by the DFG -- CRC 1060 -- 211504053.

	%%%%%%%%%%%%%%%%%%%%%%%%%%%%%%%%%%%%%%%%%%%%%%%%%%%%%%%%%%%%%%%%%%%%%%%%%%%%%%%%%%%%%%%%%%%%%%%%%%%%
	\appendix
	
	\section{Proofs on cycle spaces}
	\label{appendix}
	
	We provide here the proof of \zcref{thm:lifting_basis}. We prove first that a lifting exists.
	
	\begin{lemma}\label{thm:lifting_exists}
		Let $G = (V,E)$ be an undirected graph with orientation $\vec G$.  Let $f \in \mathbf C(G)$. Then there exists a lifting $\vec f \in \mathbf C(\vec G)$ of $f$.
	\end{lemma}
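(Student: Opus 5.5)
The plan is to use the structure of $\mathbf C(G)$ described in \zcref{rem:cyclespace}: every $f\in\mathbf C(G)$ is the indicator of an edge set $A_f$ in which every vertex has even degree. I will first decompose $A_f$ into edge-disjoint simple cycles. Then I will lift each cycle separately and add the liftings.

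\emph{Decomposition.} If $A_f\neq\emptyset$, the subgraph $(V,A_f)$ has all degrees even, so every non-isolated vertex has degree at least $2$. Starting from any edge and walking without reusing edges, we can always leave a vertex we enter, because an even-degree vertex has a free edge remaining. Since $V$ is finite, the walk eventually revisits a vertex, which produces a simple cycle $C_1\subset A_f$. Removing $C_1$ lowers each degree by $0$ or $2$, so parity is preserved. Inducting on $\abs{A_f}$ gives $A_f=C_1\sqcup\dots\sqcup C_r$ with edge-disjoint simple cycles $C_l$.

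\emph{Lifting a single cycle.} Write $C=x_0x_1\cdots x_m x_0$ with $\vec f_C$ defined by
\[
\vec f_C(\vec e):=\sum_{k=0}^{m}\bigl(\delta_{\vec e,(x_k,x_{k+1})}-\delta_{\vec e,(x_{k+1},x_k)}\bigr),\qquad x_{m+1}:=x_0 .
\]
In words, $\vec f_C(\vec e)$ is $+1$ if the cycle traverses $e$ along its orientation, $-1$ if against it, and $0$ if $e\notin C$. Since $C$ is simple and $G$ is simple, each edge is traversed at most once, so $\vec f_C$ takes values in $\set{-1,0,1}$ and $\pi_{\vec f_C}=\I_C$.

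The conservation law \eqref{eq:directed_cycle_contraint} holds by a local count at each vertex. At each visit to $v$, the cycle enters along one edge and leaves along another. An entry along $\vec e=(w,v)$ adds $+1$ to the $\delta^-(v)$ sum, while an entry along $\vec e=(v,w)$ adds $-1$ to the $\delta^+(v)$ sum. Similarly, a departure along $(v,w)$ adds $+1$ to the $\delta^+$ sum, and along $(w,v)$ it adds $-1$ to the $\delta^-$ sum. In all four combinations, each visit changes $\sum_{\delta^+(v)}-\sum_{\delta^-(v)}$ by a net $0$. This is the same observation already used for $\vec f$ in the proof of \zcref{thm:Psi_well_defined}.

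\emph{Assembly.} Set $\vec f:=\sum_l\vec f_{C_l}$. This lies in $\mathbf C(\vec G)$ because the constraint is linear. Because the $C_l$ are edge-disjoint, the supports of the $\vec f_{C_l}$ are disjoint, so $\vec f$ still takes values in $\set{-1,0,1}$. Reducing mod $2$ gives $\pi_{\vec f}=\sum_l \I_{C_l}=\I_{A_f}=f$. The only point requiring care is the edge-disjointness: without it, summed liftings could produce values $\pm2$. The decomposition step secures this, and the case $f=0$ is trivial with $\vec f=0$.
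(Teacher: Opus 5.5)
Your proof is correct and is essentially the paper's argument: both lift a closed trail with distinct edges by its orientation-signed traversal indicator, check \eqref{eq:directed_cycle_contraint} by the in/out count at each visited vertex, and use edge-disjointness of the pieces to keep the values in $\set{-1,0,1}$. The only difference is the decomposition: the paper lifts each connected component of $A_f$ at once using an Euler circuit, whereas you split $A_f$ into edge-disjoint simple cycles and prove that decomposition directly, which makes your version slightly more self-contained.
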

	\begin{proof}
		We assume that $f$ is connected (as a collection of edges in $E$), otherwise we lift all connected components individually. Then $f$ is an Eulerian graph, i.e. every vertex has even degree. Thus, there exists a sequence of  distinct edges $(e_0, \ldots e_n)$, $e_k = \set{v_{k}, v_{k+1}} \in E$ with $v_0 = v_{n+1}$, such that $	f = \sum_{k = 1}^n \I_{e_k}$.
		Set
		\begin{equation}
			\vec f(\vec e) := \sum_{k=0}^n \delta_{\vec e, (v_{k}, v_{k+1})} - \delta_{\vec e, (v_{k+1}, v_{k})}, \quad \vec e \in \vec E.
		\end{equation}
		Then for all $v \in V$ 
		\begin{equation}
			\begin{aligned}
				&
				\sum_{\vec e \in \delta^+(v)} \hspace{-0.2cm} \vec f(\vec e) -\hspace{-0.2cm} \sum_{\vec e \in \delta^-(v)} \hspace{-0.2cm}\vec f(\vec e) 
				= \sum_{k=0}^n \left(\sum_{\vec e \in \delta^+(v)} \left(\delta_{\vec e, (v_{k}, v_{k+1})} - \delta_{\vec e, (v_{k+1}, v_{k})}\right) -\hspace{-0.2cm}  \sum_{\vec e \in \delta^-(v)} \left(\delta_{\vec e, (v_{k}, v_{k+1})} - \delta_{\vec e, (v_{k+1}, v_{k})}\right)\right) \\
				&\quad =\sum_{k=0}^n \left(\sum_{\vec e \in \delta^+(v)} \delta_{\vec e, (v_{k}, v_{k+1})} + \sum_{\vec e \in \delta^-(v)} \delta_{\vec e, (v_{k+1}, v_{k})} - \sum_{\vec e \in \delta^+(v)} \delta_{\vec e, (v_{k+1}, v_{k})} - \sum_{\vec e \in \delta^-(v)} \delta_{\vec e, (v_{k}, v_{k+1})}\right)
			\end{aligned}
		\end{equation}
		Further we have
		\begin{equation}
			\sum_{\vec e \in \delta^+(v)} \delta_{\vec e, (v_{k}, v_{k+1})} + \sum_{\vec e \in \delta^-(v)} \delta_{\vec e, (v_{k+1}, v_{k})} = \delta_{v, v_k},\quad 	\sum_{\vec e \in \delta^+(v)} \delta_{\vec e, (v_{k+1}, v_{k})} + \sum_{\vec e \in \delta^-(v)} \delta_{\vec e, (v_{k}, v_{k+1})} = \delta_{v, v_{k+1}}.
		\end{equation}
		Thus,
		\begin{equation}
			\sum_{\vec e \in \delta^+(v)} \vec f(\vec e) - \sum_{\vec e \in \delta^-(v)} \vec f(\vec e) = \sum_{k=0}^n \left(\delta_{v, v_k} - \delta_{v, v_{k+1}}\right) = 0,
		\end{equation}
		because $v_0 = v_{n+1}$.
		Hence $\vec f \in \mathbf C(\vec G)$. 
		
		Since the $e_k$ are distinct, we have that $\vec f \in \set{-1, 0, 1}$ and $\vec f(\vec e) \neq 0$ if and only if $e \in \set{e_0, \ldots, e_n}$. Therefore, $\pi_{\vec f} = f$ and $\vec f$ is a lifting of $f$.
	\end{proof}
	\begin{proof}[Proof of \zcref{thm:lifting_basis}]
		By \zcref{thm:lifting_exists} the $\vec f_1, \ldots, \vec f_n$ always exists. We first show that they are linearly independent. Assume the contrary was true, i.e. there exist $\lambda_1, \ldots, \lambda_n \in \Q$ not all zero such that 
		\begin{equation}
			0 = \sum_{k=1}^n \lambda_k \vec f_k.
		\end{equation}
		Multiplying both sides with a constant, we may assume that $\lambda_k \in \Z$ for all $k=1, \ldots, n$ and at least one is odd. Thus
		\begin{equation}
			0 = \sum_{k=1}^n \left(\lambda_k \bmod 2\right) \pi_{\vec f_k} = \sum_{k=1}^n \left(\lambda_k \bmod 2\right) f_k
		\end{equation} 
		and at least one $\lambda_k \mod 2 \neq 0$. But this contradicts the $f_k$ being a basis.
	
			Let now $F$ be a spanning forest of $G$ and let $N \subset E$ be all the edges which do not belong to $F$.
			We show the identity
		\begin{equation}
			\dim \mathbf C(G) =|N|=\dim \mathbf C(\vec G),
		\end{equation}
		which implies the proposition.
		For $e = \set{i,j} \in E$ let $P_e$ be the unique path in $F$ connecting $i$ and $j$. Define for $e'\in E$
		\begin{equation}
			g_e(e') := \begin{cases}
				1 & \text{if } e' \in P_e \text{ or } e = e', \\
				0 & \text{else.}
			\end{cases}
		\end{equation}
		Then $g_e \in \mathbf C(G)$ and for all $e, e' \in N$ we have $	g_{e}(e') = \delta_{e, e'}$.
		Thus, the $\left(g_e\right)_{e \in N}$ are linearly independent. We claim they span $\mathbf C(G)$. Indeed, let $h \in \mathbf C(G)$. Define
		\begin{equation}
			h' := \sum_{e \in N} h(e) g_e \in \mathbf C(G).
		\end{equation}
		Let $H := h -h'$. If $e' \in N$, then
		\begin{equation}
			H(e') = h(e') - \sum_{e \in N} h(e) \delta_{e,e'} = h(e') -h(e') = 0.
		\end{equation}
		Thus $H$ is a cycle using only edges of a forest, so we must have $H \equiv 0$ and in particular $h' = h$. Therefore, the $\left(g_e\right)_{e \in N}$ span the whole space and $\dim \mathbf C(G) = \abs{N}$.
		
		Let $\vec g_e \in \mathbf C(\vec G)$ be a lifting of $g_e$. By multiplying with a constant, we assume $\vec g_e(\vec e) = 1$. By the first part, we already know that the $\left(\vec g_e\right)_{e \in N}$ are linearly independent. We claim that they span $\mathbf C(\vec G)$. Let $\vec h \in \mathbf C(\vec G)$ and set
		\begin{equation}
			\vec h' := \sum_{e \in N} h(\vec e) \vec g_e \in  \mathbf C(\vec G).
		\end{equation}
		Let $\vec H := \vec h - \vec h'$. If $e' \in N$, then
		\begin{equation}
			\vec H(\vec e') = \vec h(\vec e') - \sum_{e \in N} \vec h(\vec e) \delta_{e,e'} = \vec h(\vec e') - \vec h(\vec e') = 0.
		\end{equation}
		Since $\vec H$ can be non-zero only on the edges of a forest, this implies that $\vec H \equiv 0$. Hence, $\dim \mathbf C(\vec G) = \abs{N}$.
	\end{proof}

\end{document}